% Template for ICASSP-2018 paper; to be used with:
%          spconf.sty  - ICASSP/ICIP LaTeX style file, and
%          IEEEbib.bst - IEEE bibliography style file.
% --------------------------------------------------------------------------
\documentclass[journal]{IEEEtran}
\usepackage{amsmath,graphicx,amssymb,amsthm}
\usepackage{algorithm}
\usepackage{algorithmic}
\newtheorem{theorem}{Theorem}
\newtheorem*{remark}{Remark}
\newtheorem{lemma}{Lemma}

\usepackage{cite}
\usepackage{color}
\def\blue{\color{black}}
\def\red{\color{black}}
\usepackage{hyperref}

% Title.
% ------
\title{How to Mobilize mmWave: A Joint Beam and Channel Tracking Approach}
%
% Single address.
% ---------------
\author{Jiahui Li$^\dagger$, Yin Sun$^\S$, Limin Xiao$^{\ddagger\P}$, Shidong Zhou$^\dagger$, Ashutosh Sabharwal$^*$ \\
$^\dagger$Dept. of EE, $^\ddagger$Research Institute of Information Technology, Tsinghua University, Beijing, 100084, China\\
$^\S$Dept. of ECE, Auburn University, Auburn AL, 36849, U.S.A.\\
$^*$Dept. of ECE, Rice University, Houston TX, 77251, U.S.A.\\
%$^\diamond$Dept. of ECE, The Ohio State University, Columbus OH, 43210, U.S.A.
%%\vspace{-10mm}
\thanks{$^\P$Corresponding author.}
\thanks{J. Li, L. Xiao, and S. Zhou were supported in part by National S\&T Major Project grant 2017ZX03001011-002, National Natural Science Foundation of China grant 61631013, National High Technology Research and Development Program of China (863 Program) grant 2014AA01A703, Science Fund for Creative Research Groups of NSFC grant 61621091, Tsinghua University Initiative Scientific Research grant 2016ZH02-3, International Science and Technology Cooperation Program grant 2014DFT10320, ”2011 Plan” Wireless Communication Technology Co-Innovation Center grant 20161210020, Tsinghua-Qualcomm Joint Research Program. Y. Sun was supported in part by ONR grant N00014-17-1-2417. A. Sabharwal was supported in part by NSF grants CNS-1518916 and CNS-1314822.}
}

%
% For example:
% ------------
%\address{School\\
%	Department\\
%	Address}
%
% Two addresses (uncomment and modify for two-address case).
% ----------------------------------------------------------
%\twoauthors
%  {A. Author-one, B. Author-two\sthanks{Thanks to XYZ agency for funding.}}
%	{School A-B\\
%	Department A-B\\
%	Address A-B}
%  {C. Author-three, D. Author-four\sthanks{The fourth author performed the work
%	while at ...}}
%	{School C-D\\
%	Department C-D\\
%	Address C-D}
%
\begin{document}
%\ninept
%
\maketitle
\begin{abstract}
Maintaining reliable millimeter wave (mmWave) connections to many fast-moving mobiles is a key challenge in the theory and practice of 5G systems. In this paper, we develop a new algorithm that can jointly track {\blue the beam direction and channel coefficient of mmWave propagation paths using phased antenna arrays}. Despite the significant difficulty in this problem, our algorithm can simultaneously achieve fast tracking speed, high tracking accuracy, and low pilot overhead. In static scenarios, this algorithm can converge to the minimum Cram\'er-Rao lower bound of {\blue beam direction} with high probability. {\blue Simulations reveal that this algorithm greatly outperforms several existing algorithms.} Even at SNRs as low as 5dB, our algorithm is capable of tracking a mobile moving at an angular velocity of 5.45 degrees per second and achieving over 95\% of channel capacity with a 32-antenna phased array, by inserting only 10 pilots per second.
%\vspace{-3mm}
\end{abstract}
%\begin{keywords}
%Beam and channel tracking, fast tracking speed, high accuracy, mmWave, phased antenna arrays.
%\end{keywords}

\section{Introduction}
\label{sec:intro}

Millimeter-wave (mmWave) communication is promising to support the vastly growing data traffic for future wireless systems \cite{Pi2011An, Boccardi2014Five, Heath2016overview}. In the mmWave band, only several distinctive propagation paths exist, i.e., the line-of-sight path and a few relatively strong reflected paths \cite{Rappaport2013Millimeter, Rappaport2015Wideband}. {\blue Therefore,} the directional beamforming with large antenna arrays is necessary to provide sufficiently strong received signal power. 

To overcome the hardware limitation {\blue on the number of radio frequency (RF) chains} with large array size and high carrier frequency, analog beamforming with phased antenna arrays was proposed \cite{Sun2014Mimo, Han2015Large, Puglielli2016Design, Molisch2016Hybrid, Heath2016overview}. {\blue A phased array can receive the signal that is projected onto a certain spatial subspace, with a cost of requiring much more pilots than the fully digital arrays to find the rare and precious paths.} When users move quickly, it is needed to track the dynamic paths and even more pilots are required. % Otherwise, the directional beams cannot be steered to the right beam directions and issues will happen. 
Hence, one fundamental challenge is how to accurately track a large number of dynamic paths from many high-mobility terminals/reflectors using limited pilots, {\blue e.g., in V2V/V2I, high-speed railway, and UAV scenarios} \cite{Brown2016Promise}. 

The compressed sensing based algorithms (e.g., \cite{Wang2009Beam, Alkhateeb2014Channel, Alkhateeb2015Compressed}) were proposed for phased arrays, which can reduce pilot overhead and make beam direction acquisition faster. However, these algorithms are designed for static or quasi-static scenarios, and will encounter performance deterioration under high-mobility scenarios. To cope with high-mobility scenarios, the algorithms in \cite{IEEE80211ad, palacios2016tracking, Gao2016Fast} use the prior information to track the dynamic beam directions. However, these solutions do not optimize the tracking scheme with the optimal training beamforming vectors, which leads to poor tracking accuracy. 

{\blue Since the tracking of a large number of dynamic paths can be decoupled into tracking each path with low pilot overhead, we have proposed a beam tracking algorithm in \cite{Li2017conf, Li2017super} to optimize both the training beamforming vectors and tracking scheme.} However, it assumes {\blue known} channel coefficients, while both channel coefficient and beam direction might be {\blue unknown} and time-varying in a real mobile system. In this paper, we further develop a recursive beam and channel tracking (RBCT) algorithm to jointly track the dynamic beam direction and channel coefficient. 
%Contrary to prior works \cite{Wang2009Beam, Alkhateeb2014Channel, Alkhateeb2015Compressed, IEEE80211ad, palacios2016tracking, Gao2016Fast, Li2017conf, Li2017super}, it aims to optimize the training beamforming vectors, as well as the beam direction and channel coefficient tracker. 
In static scenarios, the Cram\'er-Rao lower bound (CRLB) of beam direction is derived, which is a function of the training beamforming vectors. We also obtain the \emph{minimum} CRLB by optimizing these {\blue training} beamforming vectors, and establish three theorems to verify that the RBCT algorithm can converge to the \emph{minimum} CRLB with high probability. Simulations reveal that the RBCT algorithm can achieve much faster tracking speed, lower tracking error, and lower pilot overhead than several existing algorithms.

We use the following notations: $\mathbf{A}$ is a matrix, $\mathbf{a}$ is a vector, $a$ is a scalar. $\left\| \mathbf{A} \right\|_2$ is the 2-norm of $\mathbf{A}$. $\mathbf{A}^\text{T}$, $\mathbf{A}^\text{H}$ and $\mathbf{A}^{-1}$ are $\mathbf{A}$'s transpose, Hermitian and inverse, respectively. $\mathbb{E}[\cdot]$ denotes expectation and $\operatorname{Re}\left\{\cdot\right\}$($\operatorname{Im}\left\{\cdot\right\}$) obtains the real (imaginary) part. The natural logarithm of $x$ is $\log(x)$.
%\vspace{-1.5mm}
\section{System Model}
%\vspace{-0.5mm}
\label{sec:model}

Consider a phased array in Fig. \ref{fig:system}, where $M$ {\blue omnidirectional} antennas are placed on a line, with a distance $d$ between two neighboring antennas. {\blue Each antenna is connected through a phase shifter to the same RF chain.} In time-slot $n$, the pilot symbols arrive at the array from an angle-of-arrival (AoA) $\theta_n\!\in\![-\frac{\pi}{2},\frac{\pi}{2}]$. The channel vector is given by
%\vspace{-2.5mm}
\begin{equation}\label{eq:channel}
\mathbf{h}_n = \beta_n \mathbf{a}(x_n),
%\vspace{-2.5mm}
\end{equation}
where $x_n\!=\!\sin(\theta_n)$ is the sine of the AoA $\theta_n$, $\mathbf{a}(x_n)\!\!=\!\!\left[ 1,\!e^{j \frac{2\pi d}{\lambda} x_n},\!\cdots,\!e^{j \frac{2\pi d}{\lambda}(M\!-\!1)x_n} \right]^\text{H}$ is the steering vector of the arriving beam, $\lambda$ is the wavelength, and $\beta_n\!=\!\beta^\text{re}_n\!+\!j\beta^\text{im}_n$ is the complex channel coefficient.
%with $\beta^{\text{re}}_n \overset{\Delta}{=}\operatorname{Re}\left\{\beta_n\right\}$, $\beta^{\text{im}}_n\overset{\Delta}{=}\operatorname{Im}\left\{\beta_n\right\}$.

\begin{figure}
\centering
%%\vspace{-4mm}
\includegraphics[width=8cm]{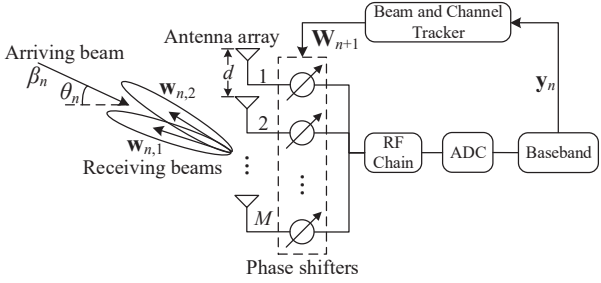}
%\vspace{-4mm}
\caption{System model.}
%\vspace{-6mm}
\label{fig:system}
\end{figure}

To track the beam direction $x_n$ and channel coefficient $\beta_n$ simultaneously, at least two observations using different beamforming vectors are needed. Hence, we assume that two pilot symbols are {\blue applied} in each time-slot. To receive the $i$-th ($i = 1, 2$) pilot symbol, let $\mathbf{w}_{n,i}$ be the \textbf{beamforming vector} in time-slot $n$, denoted by
\vspace{-3mm}
\begin{equation}\label{eq:bf}
	\mathbf{w}_{n,i} = \frac{\mathbf{a}(x_n+\delta_{n, i})}{\sqrt{M}},
\vspace{-2.5mm}
\end{equation}
%let $w_{mn,i} \in[-\pi,\pi]$ be the phase shift in radians provided by the $m$-th phase shifter in time-slot $n$. Then, the beamforming vector is denoted by
%\vspace{-2mm}
%\begin{equation}\label{eq:bf}
%	\mathbf{w}_{n,i} = \frac{1}{\sqrt{M}}\left[ e^{jw_{1n,i}}, \cdots, e^{jw_{Mn,i}} \right]^\text{H} = \frac{\mathbf{a}(x_n+\delta_{n, i})}{\sqrt{M}},
%\vspace{-2mm}
%\end{equation}
which is assumed to have the same form as the steering vector. Combining the output signals of the phase shifters yields
\vspace{-2mm}
\begin{equation}\label{eq:observation}
\begin{aligned}
y_{n,i} = \mathbf{w}_{n,i}^\text{H}\mathbf{h}_n{s} + {z}_{n,i} = \beta_{n} \mathbf{w}_{n,i}^\text{H}\mathbf{a}(x_n){s} + {z}_{n,i},
\end{aligned}
\vspace{-2mm}
\end{equation}
where ${s}$ is the pilot symbol that is known by the receiver, and ${z}_{n,i}\!\sim\!\mathcal{CN}(0,\sigma_0^2)$ is an \emph{i.i.d.} circularly symmetric complex Gaussian random variable. Given $\boldsymbol{\psi}_n\!=\![\beta^\text{re}_n, \beta^\text{im}_n, x_n]^\text{T}$ and $\mathbf{W}_{n}\!=\!\left[\mathbf{w}_{n,1},\mathbf{w}_{n,2}\right]$, the conditional probability density function of $\mathbf{y}_n\!=\![y_{n,1}, y_{n,2}]^\text{T}$ is given by
\vspace{-3mm}
\begin{equation}
\label{eq:pdf}
\begin{aligned}
	p(\mathbf{y}_n| \boldsymbol{\psi}_n, \mathbf{W}_n) = \frac{1}{\pi^2\sigma_0^4} e^{- \frac{\left\| \mathbf{y}_n -  {s} \beta_{n} \mathbf{W}_{n}^\text{H}\mathbf{a}(x_n) \right\|^2_2}{\sigma_0^2}}.
\end{aligned}
\vspace{-2mm}
\end{equation}
A beam and channel tracker determines the \textbf{beamforming matrix} $\mathbf{W}_{n}$, and provides an estimate $\hat{\boldsymbol{\psi}}_n\!=\![\hat{\beta}^\text{re}_n, \hat{\beta}^\text{im}_n, \hat{x}_n]^\text{T}$ of the channel coefficient $\beta_n$ and the sine $x_n$ of the AoA. Let $\xi\!=\!(\mathbf{W}_1, \mathbf{W}_2, \ldots, \hat{\boldsymbol{\psi}}_1, \hat{\boldsymbol{\psi}}_2, \ldots)$ be a \textbf{beam and channel tracking policy}. In particular, we consider the set $\Xi$ of \emph{causal} beam and channel tracking policies: The estimate $\hat{\boldsymbol{\psi}}_n$ of time-slot $n$ and the beamforming matrix $\mathbf{W}_{n+1}$ of time-slot $n+1$ are determined  by using the history of beamforming matrices $(\mathbf{W}_1, \ldots, \mathbf{W}_n)$ and channel observations $(\mathbf{y}_1, \ldots, \mathbf{y}_n)$. 

\section{Joint Beam and Channel Tracking Problem}
\label{sec:problem}
%In Section \ref{ssec:formulation}, we first formulate the beam tracking problem. Then, in Section \ref{ssec:bound}, we study a fundamental performance bound for the beam tracking problem.

Our goal is to develop a joint beam and channel tracking algorithm to minimize the beam tracking error. For any time-slot $n$, the joint beam and channel tracking problem is given by
%\vspace{-2mm}
\begin{equation}\label{eq:problem}
\begin{aligned}
	\underset{\begin{matrix}\xi \in \Xi \end{matrix}}{\min}~& \mathbb{E}\left[ \left( \hat{x}_{n} - x_n \right)^2 \right] \\
	\text{s.t.}~&  \mathbb{E}\left[ \hat{\beta}_{n} \right] = \beta_n,~\mathbb{E}\left[ \hat{x}_{n} \right] = x_n,
\end{aligned}
%\vspace{-2mm}
\end{equation}
where the constraint ensures that $\hat{\boldsymbol{\psi}}_n\!=\![\hat{\beta}^\text{re}_n, \hat{\beta}^\text{im}_n, \hat{x}_n]^\text{T}$ is an \emph{un-biased} estimate of $\boldsymbol{\psi}_n\!=\![\beta^\text{re}_n, \beta^\text{im}_n, x_n]^\text{T}$. 

Problem \eqref{eq:problem} is a constrained sequential control and estimation problem that is difficult to solve optimally, where the beamforming matrix $\mathbf{W}_{n}$ is the control action. {\red First, the system is only partially observed through the channel observation $\mathbf{y}_n$. Second, both the beamforming matrix $\mathbf{W}_{n}$ and the estimate $\hat{\boldsymbol{\psi}}_n$ need to be optimized: On the one hand, the optimization of $\mathbf{W}_{n}$ is a non-convex optimization problem of $\delta_{n,i}$ in \eqref{eq:bf}, which is discussed in Section \ref{ssec:bound}. On the other hand, as will be discussed in Section \ref{sec:analysis}, the optimization of $\hat{\boldsymbol{\psi}}_n$ is also non-convex and has multiple local optimal estimates.}

%%\vspace{-2mm}
\subsection{Cram\'er  Rao Lower Bound of Beam Tracking}\label{ssec:bound}
%%\vspace{-2mm}

\newcounter{mytempeqncnt}
\begin{figure*}[!t]
\normalsize
\setcounter{mytempeqncnt}{\value{equation}}
\setcounter{equation}{12}
%\vspace{-5mm}
\begin{equation}
\label{eq:newton2}
\begin{aligned} 
\!\!&\hat{\boldsymbol{\psi}}_n\!=\!\hat{\boldsymbol{\psi}}_{n\!-\!1}\!-\!\frac{a_n}{\left\|{s}\hat{\mathbf{g}}_{n}\right\|^2_2 (l_n^2 \!-\! |c_n|^2)}\!\cdot\!\left[\!\begin{matrix} l_n^2\!-\!\operatorname{Im}\{c_n\}^2  \!&\! \operatorname{Re}\{c_n\}\operatorname{Im}\{c_n\} \!&\! -\|\hat{\mathbf{g}}_{n}\|^2_2\operatorname{Re}\{c_n\} \\ 
\operatorname{Re}\{c_n\}\operatorname{Im}\{c_n\} \!&\! l_n^2\!-\!\operatorname{Re}\{c_n\}^2  \!&\! -\|\hat{\mathbf{g}}_{n}\|^2_2\operatorname{Im}\{c_n\} \\
-\|\hat{\mathbf{g}}_{n}\|^2_2\operatorname{Re}\{c_n\} \!&\! -\|\hat{\mathbf{g}}_{n}\|^2_2\operatorname{Im}\{c_n\} \!&\! \|\hat{\mathbf{g}}_{n}\|^4_2 \end{matrix}\!\right]\!\cdot\!\left[\begin{matrix} \operatorname{Re}\{{s}^\text{H}\hat{\mathbf{g}}_{n}^\text{H}(\mathbf{y_n}\!-\!{s}\hat{\beta}_{n\!-\!1}\hat{\mathbf{g}}_{n})\} \\
\operatorname{Im}\{{s}^\text{H}\hat{\mathbf{g}}_{n}^\text{H}(\mathbf{y_n}\!-\!{s}\hat{\beta}_{n\!-\!1}\hat{\mathbf{g}}_{n})\} \\
\operatorname{Re}\{{s}^\text{H}\hat{\mathbf{e}}_{n}^\text{H}(\mathbf{y_n}\!-\!{s}\hat{\beta}_{n\!-\!1}\hat{\mathbf{g}}_{n})\} \end{matrix}\right]\!\!.\!\!\!\!\!\!\!\!
\end{aligned}
%\vspace{-1.5mm}
\end{equation}
\setcounter{equation}{\value{mytempeqncnt}}
\hrulefill
%\vspace{-5mm}
\end{figure*}

Now, we try to establish a lower bound of the MSE in \eqref{eq:problem} in \emph{static} scenarios, {\red where the ground true of beam direction and channel coefficient is invariant for all time-slot $n$, i.e., $\boldsymbol{\psi}_n\!=\![\beta^\text{re},\!\beta^\text{im},\!x]^\text{T}\!\overset{\Delta}{=}\!\boldsymbol{\psi}$. Given the beamforming matrices $(\mathbf{W}_1, \ldots, \mathbf{W}_n)$ of the first $n$ time-slots, the MSE in \eqref{eq:problem} is lower bounded by the CRLB as follows \cite{nevel1973stochastic}:}
%\vspace{-1.5mm}
\begin{equation}\label{eq:MMSE}
	\begin{aligned}
	\mathbb{E}\left[ \left( \hat{x}_{n} - x \right)^2 \right] \ge &~ \left[ \left(\sum_{i=1}^n \mathbf{I}(\boldsymbol{\psi}, \mathbf{W}_i)\right)^{-1} \right]_{3,3},
	% \ge &~\frac{1}{n}\left[ \mathbf{I}(\boldsymbol{\psi}, \mathbf{W}^*)^{-1} \right]_{3,3} = \frac{1}{nI_{\max}},
	\end{aligned}
	%\vspace{-1.5mm}
\end{equation}
where $[\cdot]_{i,k}$ obtains the matrix element in row $i$ and column $k$, and $\mathbf{I}(\boldsymbol{\psi}, \mathbf{W}_i)$ is the $3\times 3$ Fisher information matrix, i.e., \cite{Poor1994estimation}
%\vspace{-3mm}
\begin{equation}\label{eq:FIM}
\begin{aligned}
\!\!\!\mathbf{I}(\boldsymbol{\psi}, \mathbf{W}_i)&\!\overset{\Delta}{=}\!\mathbb{E}\!\left[\frac{\partial \log p(\mathbf{y}_i| \boldsymbol{\psi},\!\mathbf{W}_i)}{\partial \boldsymbol{\psi}}\!\cdot\!\frac{\partial \log p(\mathbf{y}_i| \boldsymbol{\psi},\!\mathbf{W}_i)}{\partial \boldsymbol{\psi}^\text{T}}\right]\!\!\! \\
&\!\!\!\!\!\!\!\!\!\!\!\!\!\!\!\!= \frac{2|{s}|^2}{\sigma_0^2}\!\! \left[\begin{matrix} \left\|\mathbf{g}_i\right\|_2^2 & 0 & \operatorname{Re}\left\{\mathbf{g}_i^\text{H}\mathbf{e}_i\right\} \\ 
0 & \left\|\mathbf{g}_i\right\|_2^2 & \operatorname{Im}\left\{\mathbf{g}_i^\text{H}\mathbf{e}_i\right\} \\
\operatorname{Re}\left\{\mathbf{g}_i^\text{H}\mathbf{e}_i\right\} & \operatorname{Im}\left\{\mathbf{g}_i^\text{H}\mathbf{e}_i\right\} & \left\|\mathbf{e}_i \right\|_2^2 \end{matrix}\right], 
\end{aligned}
%\vspace{-2mm}
\end{equation}
where $\mathbf{g}_i \!=\! \mathbf{W}_i^\text{H}\mathbf{a}(x)$, $\mathbf{e}_i \!=\! \beta\mathbf{W}_i^\text{H}\dot{\mathbf{a}}(x)$, and $\dot{\mathbf{a}}(x)\!\overset{\Delta}{=}\!\frac{\partial \mathbf{a}(x)}{\partial x}$. {\red By optimizing the beamforming matrices $(\mathbf{W}_1, \ldots, \mathbf{W}_n)$ on the RHS of \eqref{eq:MMSE}, we obtain the \emph{minimum} CRLB as below:
%%\vspace{-2mm}
%\begin{equation}
\begin{align}
\label{eq:CRLB}
\left[ \left(\sum_{i=1}^n \mathbf{I}(\boldsymbol{\psi}, \mathbf{W}_i)\right)^{-1} \right]_{3,3} \!\!\!\!\!&~\ge \min_{\mathbf{W}_1, \ldots, \mathbf{W}_n}\left[ \left(\sum_{i=1}^n \mathbf{I}(\boldsymbol{\psi}, \mathbf{W}_i)\right)^{-1} \right]_{3,3} \nonumber \\
&~ = \min_{\mathbf{W}_i}\frac{1}{n}\left[ \mathbf{I}(\boldsymbol{\psi}, \mathbf{W}_i)^{-1} \right]_{3,3},
%\left[ \big(n \mathbf{I}(\boldsymbol{\psi}, \mathbf{W}^*)\big)^{-1} \right]_{3,3} \\
%&~= \frac{1}{n}\left[ \mathbf{I}(\boldsymbol{\psi}, \mathbf{W}^*)^{-1} \right]_{3,3},
\end{align}
%%\vspace{-2mm}
%\end{equation}
where because the linear additive property of Fisher information matrix \cite{Poor1994estimation}, the optimal $\mathbf{W}_1, \ldots, \mathbf{W}_n$ are the same, and from \eqref{eq:FIM}, we can get
%\vspace{-3.5mm}
\begin{equation}
\left[ \mathbf{I}(\boldsymbol{\psi}, \mathbf{W}_i)^{-1} \right]_{3,3} = \frac{\sigma_0^2}{2|{s} \beta|^2}\cdot\frac{\left\|\mathbf{g}_i\right\|_2^2}{ \left\|\mathbf{g}_i\right\|_2^2 \left\|\mathbf{e}_i\right\|_2^2 - \left| \mathbf{g}_i^\text{H}\mathbf{e}_i \right|^2 }.
%\vspace{-1.5mm}
\end{equation}
Problem \eqref{eq:CRLB} is non-convex with respect to $\delta_{i,1}$ and $\delta_{i,2}$, which makes it too hard to obtain the analytical solution. However, we can still use numerical method to find the solution, which yields the optimal beamforming matrix $\mathbf{W}^*$ as below:
%\vspace{-2mm}
\begin{equation}\label{eq:control_opt}
% \mathbf{w}_1^* = \frac{\mathbf{a}(x-\delta^*)}{\sqrt{M}},~\mathbf{w}_2^* = \frac{\mathbf{a}(x+\delta^*)}{\sqrt{M}},
\mathbf{W}^*\!=\!\frac{1}{\sqrt{M}}\big[\mathbf{a}(x-\delta^*), \mathbf{a}(x+\delta^*)\big],
%\vspace{-2mm}
\end{equation}
where $\delta^* \xrightarrow{M \rightarrow \infty} \frac{2\lambda}{3Md}$, and when $M \ge 8$, $\delta^*$ is very close to $\frac{2\lambda}{3Md}$. In Fig. \ref{fig:optimal_control}, the optimal receiving beam directions are depicted by plotting $\frac{1}{\left[ \mathbf{I}(\boldsymbol{\psi}, \mathbf{W})^{-1} \right]_{3,3}}$  vs. $\delta_{i,1}$ and $\delta_{i,2}$, where $M = 32, d = 0.5\lambda$, and the signal-to-noise ratio (SNR) $\frac{|{s} \beta|^2}{\sigma_0^2}$ is $5\text{dB}$. It can be observed that $\delta^*$ is almost the same as $\frac{2\lambda}{3Md}$ and there are two symmetric optimal solutions. Therefore, we will set $\delta^* = \frac{2\lambda}{3Md}$ in the proposed RBCT algorithm in Section \ref{sec:algorithm}.}
% Moreover, in our further studies, we find that $\frac{3Md}{2\lambda}\cdot\delta^*$ remains almost the same for different antenna numbers (e.g., $M \ge 8$). 

\begin{figure}
\centering
\includegraphics[width=5.5cm]{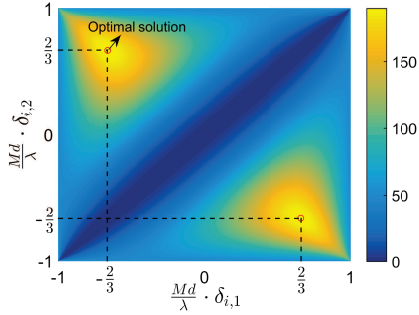}
%\vspace{-3mm}
\caption{Optimization of Problem \eqref{eq:CRLB} using numerical method.}
%\vspace{-2mm}
\label{fig:optimal_control}
\end{figure}

\section{Recursive Beam and Channel Tracking}
\label{sec:algorithm}

\begin{figure}
\centering
\includegraphics[width=6.5cm]{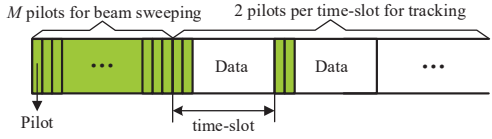}
%\vspace{-3mm}
\caption{Frame structure.}
%\vspace{-6mm}
\label{fig:frame}
\end{figure}

{\blue We propose a two-stage algorithm to approach the minimum CRLB in \eqref{eq:CRLB}, which is given below: }

\noindent\textbf{Recursive Beam and Channel Tracking (RBCT):}

\noindent\textbf{1) Coarse Beam Sweeping:} %Use the exhaustive beam sweeping algorithm \cite{Hur2013Millimeter}.
$M$ pilots are used successively (see Fig. \ref{fig:frame}). The beamforming vector to receive the $m$-th observation $\tilde{y}_m$ is set as $\tilde{\mathbf{w}}_{m}\!=\!\frac{1}{\sqrt{M}}\mathbf{a}\left(\frac{2m}{M} - \frac{M+1}{M}\right), m\!=\!1, \ldots, M$. Obtain the initial estimate $\hat{\boldsymbol{\psi}}_{0}\!=\![\hat{\beta}^\text{re}_{0}, \hat{\beta}^\text{im}_{0}, \hat{x}_{0}]^\text{T}$ by
%\vspace{-2mm}
\begin{equation}\label{eq:initial}
\begin{aligned}
	\hat{x}_{0}\!=\!\underset{\hat{x} \in \mathcal{X}}{\arg\max}\left|\mathbf{a}(\hat{x})^\text{H} \tilde{\mathbf{W}} \tilde{\mathbf{y}}  \right|, \hat{\beta}_{0}\!=\!\left[\tilde{\mathbf{W}}^\text{H} \mathbf{a}(\hat{x}_{0})\right]^+\!\!\tilde{\mathbf{y}},
\end{aligned}
%\vspace{-3mm}
\end{equation}
%\begin{equation}\label{eq:initial}
%\left\{\begin{aligned}
%	& \hat{x}_{0} =\underset{\hat{x} \in \mathcal{X}}{\arg\max}~\left|\mathbf{a}(\hat{x})^\text{H} \tilde{\mathbf{W}} \tilde{\mathbf{y}}  \right| \\
%	& \hat{\beta}_{0} = \left[\tilde{\mathbf{W}}^\text{H} \mathbf{a}(\hat{x}_{0})\right]^+ \tilde{\mathbf{y}},
%\end{aligned}\right.
%\end{equation}
where $\tilde{\mathbf{y}} = [\tilde{y}_1, \ldots, \tilde{y}_M]^\text{T}$, $\tilde{\mathbf{W}} = [\tilde{\mathbf{w}}_1, \ldots, \tilde{\mathbf{w}}_M]$, $\mathcal{X} = \left\{\frac{1 - M_0}{M_0}, \frac{3 - M_0}{M_0}, \ldots, \frac{M_0-1}{M_0}\right\}$,  the size $M_0 (M_0 \ge M)$ of $\mathcal{X}$ determines the estimation resolution, and $\mathbf{X}^+ \overset{\Delta}{=} (\mathbf{X}^\text{H}\mathbf{X})^{-1}\mathbf{X}^\text{H}$.

\noindent\textbf{2) Beam and Channel Tracking:} In time-slot $n$, two pilots are received at the beginning (see Fig. \ref{fig:frame}) using beamforming vectors $\mathbf{w}_{n, 1}$ and $\mathbf{w}_{n, 2}$, given by
%\vspace{-3mm}
\begin{equation}\label{eq:est_ctrl}
	\begin{aligned}
	\mathbf{w}_{n,1}= \frac{\mathbf{a}(\hat{x}_{n-1}-\delta^*)}{\sqrt{M}},~\mathbf{w}_{n,2} = \frac{\mathbf{a}(\hat{x}_{n-1}+\delta^*)}{\sqrt{M}},
	%\left[ 1, e^{j \frac{2\pi d}{\lambda} \hat{x}_{n}}, \cdots, e^{j \frac{2\pi d}{\lambda}(M-1)\hat{x}_{n}} \right]^\text{H}.
	\end{aligned}
	%\vspace{-2mm}
\end{equation}
and  the estimate $\hat{\boldsymbol{\psi}}_n\!=\![\hat{\beta}^\text{re}_n, \hat{\beta}^\text{im}_n, \hat{x}_n]^\text{T}$ is updated by \eqref{eq:newton2} on the top of the page, where $\hat{\mathbf{g}}_{n}\!=\!\mathbf{W}_n^\text{H}\mathbf{a}(\hat{x}_{n\!-\!1})$, $\hat{\mathbf{e}}_{n}\!=\!\hat{\beta}_{n\!-\!1}\mathbf{W}_n^\text{H}\dot{\mathbf{a}}(\hat{x}_{n\!-\!1})$, $l_n = \|\hat{\mathbf{g}}_{n}\|_2 \|\hat{\mathbf{e}}_{n} \|_2$, and $c_n = \hat{\mathbf{g}}_{n}^\text{H}\hat{\mathbf{e}}_{n}$.

\setcounter{equation}{13}

In \emph{Stage 1}, the exhaustive sweeping is used, and the initial estimate $\hat{\boldsymbol{\psi}}_{0}$ is obtained in \eqref{eq:initial} by using the orthogonal matching pursuit method (e.g., \cite{Alkhateeb2015Compressed}). This ensures that the initial beam direction $\hat{x}_{0}$ is within the mainlobe set, i.e.,
%\vspace{-2mm}
\begin{equation}\label{eq:mainlobe}
\mathcal{B}\left(x_{0}\right) \overset{\Delta}{=} \Big( x_{0} - \frac{\lambda}{Md}, x_{0} +  \frac{\lambda}{Md}\Big).
%\vspace{-2mm}
\end{equation}

% In \emph{Stage 1}, $M$ pilots are sent successively (see Fig. \ref{fig:frame}). Then, an initial estimate $\hat{\boldsymbol{\psi}}_{0}\!=\![\hat{\beta}^\text{re}_{0}, \hat{\beta}^\text{im}_{0}, \hat{x}_{0}]^\text{T}$ will be obtained. Due to non-convexity of Problem \eqref{eq:problem}, a good initial estimate $\hat{x}_{0}$ is quite important for the success of tracking, which should be within the mainlobe set $\mathcal{B}\left(x_{0}\right) \overset{\Delta}{=} \left( x_{0} - \frac{\lambda}{Md}, x_{0} +  \frac{\lambda}{Md}\right)$. To achieve this goal, the exhaustive sweeping is utilized to thoroughly observe the channel, and then motivated by the orthogonal matching pursuit method (e.g., \cite{Alkhateeb2015Compressed}), we use \eqref{eq:initial} to obtain the initial estimate $\hat{\boldsymbol{\psi}}_{0}$ from these observations. 
In \emph{Stage 2}, the recursive tracker is motivated by the following maximization likelihood problem:
%\vspace{-2mm}
\begin{equation}\label{eq_ML_estimator}
\!\underset{\hat{\boldsymbol{\psi}}_n}{\max}\!\left\{\!\underset{\mathbf{W}_n}{\max}~\!\!\!\sum_{i = 1}^n\mathbb{E}\bigg[\!\log p\!\left( \mathbf{y}_i|\hat{\boldsymbol{\psi}}_n,\!\mathbf{W}_i\!\right)\!\!\bigg|\begin{matrix} \hat{\boldsymbol{\psi}}_n,\!\mathbf{W}_1,\!\ldots,\!\mathbf{W}_i, \\ \!\mathbf{y}_1,\!\ldots,\!\mathbf{y}_{i-1}\end{matrix}\bigg]\!\right\}\!\!,\!\!
%\vspace{-2mm}
\end{equation}
where $\mathbf{W}_n\!=\!\left[ \mathbf{w}_{n,1},\!\mathbf{w}_{n,2}\right]$ is subject to \eqref{eq:bf}. We propose a two-layer nested optimization algorithm to find the solution of \eqref{eq_ML_estimator}. In the \emph{outer layer}, we use the stochastic Newton's method to update the estimate $\hat{\boldsymbol{\psi}}_n$, given by \cite{nevel1973stochastic}
%\begin{align}\label{eq:newton} \!\!\!\!\hat{\boldsymbol{\psi}}_n \!=\!\hat{\boldsymbol{\psi}}_{n\!-\!1}\!+\!a_{n} \mathbf{I}(\hat{\boldsymbol{\psi}}_{n\!-\!1},\!\mathbf{W}_n)^{-1}\!\cdot\! \frac{\partial \log p(\mathbf{y}_n| \hat{\boldsymbol{\psi}}_{n\!-\!1},\!\mathbf{W}_n)}{\partial \hat{\boldsymbol{\psi}}_{n\!-\!1}},\!\!\!\!
%\end{align}
%%\vspace{-2.5mm}
\begin{align}\label{eq:newton}
\!\!\!\!\hat{\boldsymbol{\psi}}_n \!=\!&~ \hat{\boldsymbol{\psi}}_{n\!-\!1}\!-\!a_{n} \mathbb{E}\!\left[\mathbf{H}(\hat{\boldsymbol{\psi}}_{n\!-\!1},\!\mathbf{W}_n)\right]^{-1}\!\cdot\!\frac{\partial \log p(\mathbf{y}_n| \hat{\boldsymbol{\psi}}_{n\!-\!1},\!\mathbf{W}_n)}{\partial \hat{\boldsymbol{\psi}}_{n\!-\!1}}\!\!\!\! \nonumber \\ 
=\!&~\hat{\boldsymbol{\psi}}_{n\!-\!1}\!+\!a_{n} \mathbf{I}(\hat{\boldsymbol{\psi}}_{n\!-\!1},\!\mathbf{W}_n)^{-1}\!\cdot\! \frac{\partial \log p(\mathbf{y}_n| \hat{\boldsymbol{\psi}}_{n\!-\!1},\!\mathbf{W}_n)}{\partial \hat{\boldsymbol{\psi}}_{n\!-\!1}},\!\!\!\!
\end{align}
%\vspace{-2mm}

\noindent where $\mathbf{H}(\hat{\boldsymbol{\psi}}_{n\!-\!1},\!\mathbf{W}_n)\!=\!\frac{\partial^2 \log p(\mathbf{y}_n| \hat{\boldsymbol{\psi}}_{n\!-\!1},\!\mathbf{W}_n)}{\partial \hat{\boldsymbol{\psi}}_{n\!-\!1}\partial \hat{\boldsymbol{\psi}}_{n\!-\!1}^\text{T}}$ is the Hessian matrix, $\mathbf{I}(\hat{\boldsymbol{\psi}}_{n\!-\!1},\!\mathbf{W}_n)$ can be calculated by using \eqref{eq:FIM}, $a_n$ is the step-size that will be specified later, and 
%\vspace{-2.5mm}
\begin{equation}\label{eq:gra}
\begin{aligned}
\!\!\!\!\!\!\frac{\partial \log p(\mathbf{y}_n| \hat{\boldsymbol{\psi}}_{n\!-\!1},\!\mathbf{W}_n)}{\partial \hat{\boldsymbol{\psi}}_{n\!-\!1}}\!=\!-\frac{2}{\sigma_0^2}\!\!\left[\begin{matrix} \operatorname{Re}\{{s}^\text{H}\hat{\mathbf{g}}_{n}^\text{H}(\mathbf{y_n}\!-\!{s}\hat{\beta}_{n\!-\!1}\hat{\mathbf{g}}_{n})\} \\
\operatorname{Im}\{{s}^\text{H}\hat{\mathbf{g}}_{n}^\text{H}(\mathbf{y_n}\!-\!{s}\hat{\beta}_{n\!-\!1}\hat{\mathbf{g}}_{n})\} \\
\operatorname{Re}\{{s}^\text{H}\hat{\mathbf{e}}_{n}^\text{H}(\mathbf{y_n}\!-\!{s}\hat{\beta}_{n\!-\!1}\hat{\mathbf{g}}_{n})\} \end{matrix}\right]\!\!,
\end{aligned}
%\vspace{-1mm}
\end{equation}
with $\hat{\mathbf{g}}_{n}\!=\!\mathbf{W}_n^\text{H}\mathbf{a}(\hat{x}_{n\!-\!1})$ and $\hat{\mathbf{e}}_{n}\!=\!\hat{\beta}_{n\!-\!1}\mathbf{W}_n^\text{H}\dot{\mathbf{a}}(\hat{x}_{n\!-\!1})$. Plugging $\mathbf{I}(\hat{\boldsymbol{\psi}}_{n\!-\!1},\!\mathbf{W}_n)$ and \eqref{eq:gra} in \eqref{eq:newton}, we get \eqref{eq:newton2}. In the \emph{inner layer}, it is equivalent to minimize the CRLB to update $\mathbf{W}_n$, i.e.,
%\vspace{-2mm}
\begin{equation}\label{eq:minCRLB2}
\begin{aligned}
\underset{\mathbf{W}_n}{\min}&~\left[ \mathbf{I}(\hat{\boldsymbol{\psi}}_{n-1}, \mathbf{W}_n)^{-1} \right]_{3,3},
\end{aligned}
%\vspace{-2mm}
\end{equation}
which results in \eqref{eq:est_ctrl}.
% where $\mathbf{W}_n\!=\!\left[ \mathbf{w}_{n,1},\!\mathbf{w}_{n,2}\right]$ is subject to \eqref{eq:bf}. Similar to \eqref{eq:control}, the solution of \eqref{eq:minCRLB} is given by
\begin{remark}
{\blue Different from the beam tracking algorithm in \cite{Li2017conf, Li2017super}, the RBCT algorithm uses two pilots and jointly updates the beam direction and channel coefficient in each time-slot.}
\end{remark}

\section{Asymptotic Optimality Analysis}
\label{sec:analysis}

{\red

\begin{figure*}[!t]
\normalsize
\setcounter{mytempeqncnt}{\value{equation}}
\setcounter{equation}{19}
%\vspace{-7mm}
\begin{equation}\label{eq:function_f}\begin{aligned}
\mathbf{f}\left(\hat{\boldsymbol{\psi}}_{n-1}, \boldsymbol{\psi}_n\right) \!\overset{\Delta}{=}\! \mathbb{E}\!\!\left[\!\left.\mathbf{I}(\hat{\boldsymbol{\psi}}_{n\!-\!1},\!\mathbf{W}_n)^{-1}\!\cdot\!\frac{\partial \log p(\mathbf{y}_n| \hat{\boldsymbol{\psi}}_{n-1},\!\mathbf{W}_n)}{\partial \hat{\boldsymbol{\psi}}_{n-1}}\right| \boldsymbol{\psi}_n\!\right] \!=\! -\frac{2}{\sigma_0^2}\mathbf{I}(\hat{\boldsymbol{\psi}}_{n\!-\!1},\!\mathbf{W}_n)^{-1}\!\cdot\!\!\left[\begin{matrix} \operatorname{Re}\{\hat{\mathbf{g}}_{n}^\text{H}(\beta_n\mathbf{W}_n^\text{H}\mathbf{a}(x_n)\!-\!\hat{\beta}_{n\!-\!1}\hat{\mathbf{g}}_{n})\} \\
\operatorname{Im}\{\hat{\mathbf{g}}_{n}^\text{H}(\beta_n\mathbf{W}_n^\text{H}\mathbf{a}(x_n)\!-\!\hat{\beta}_{n\!-\!1}\hat{\mathbf{g}}_{n})\} \\
\operatorname{Re}\{\hat{\mathbf{e}}_{n}^\text{H}(\beta_n\mathbf{W}_n^\text{H}\mathbf{a}(x_n)\!-\!\hat{\beta}_{n\!-\!1}\hat{\mathbf{g}}_{n})\} \end{matrix}\right]\!\!. 
\end{aligned}%\vspace{-2mm}
\end{equation}
\begin{equation}\label{eq:noise_z}\begin{aligned}
\hat{\mathbf{z}}_n &~\overset{\Delta}{=} \mathbf{I}(\hat{\boldsymbol{\psi}}_{n\!-\!1},\!\mathbf{W}_n)^{-1}\!\cdot\!\frac{\partial \log p(\mathbf{y}_n| \hat{\boldsymbol{\psi}}_{n-1},\!\mathbf{W}_n)}{\partial \hat{\boldsymbol{\psi}}_{n-1}} - \mathbf{f}\left(\hat{\boldsymbol{\psi}}_{n-1}, \boldsymbol{\psi}_n\right) 
= -\frac{2|{s}|^2}{\sigma_0^2}\mathbf{I}(\hat{\boldsymbol{\psi}}_{n\!-\!1},\!\mathbf{W}_n)^{-1}\!\cdot\!\left[\begin{matrix} \operatorname{Re}\{{s}^\text{H}\hat{\mathbf{g}}_{n}^\text{H}\mathbf{z}_n\} \\
\operatorname{Im}\{{s}^\text{H}\hat{\mathbf{g}}_{n}^\text{H}\mathbf{z}_n\} \\
\operatorname{Re}\{{s}^\text{H}\hat{\mathbf{e}}_{n}^\text{H}\mathbf{z}_n\} \end{matrix}\right]\!\!.  
\end{aligned}%\vspace{-1mm}
\end{equation}

\setcounter{equation}{23}
\begin{equation}\label{eq:derivative}\begin{aligned}
\!\!\!\!\!\!\frac{\partial \mathbf{f}\left(\hat{\boldsymbol{\psi}}_{n-1}, \boldsymbol{\psi}_n\right)}{\partial \hat{\boldsymbol{\psi}}_{n-1}^\text{T}}\!=\!\frac{\partial \mathbf{I}(\hat{\boldsymbol{\psi}}_{n\!-\!1},\!\mathbf{W}_n)^{-1}}{\partial \hat{\boldsymbol{\psi}}_{n-1}^\text{T}} \!\cdot\! \mathbb{E}\left[\left.\frac{\partial \log p(\mathbf{y}_n| \hat{\boldsymbol{\psi}}_{n-1},\!\mathbf{W}_n)}{\partial \hat{\boldsymbol{\psi}}_{n-1}}\right| \boldsymbol{\psi}_n\right]\!+\!\mathbf{I}(\hat{\boldsymbol{\psi}}_{n\!-\!1},\!\mathbf{W}_n)^{-1} \!\cdot\! \frac{\partial \mathbb{E}\left[\left.\frac{\partial \log p(\mathbf{y}_n| \hat{\boldsymbol{\psi}}_{n-1},\!\mathbf{W}_n)}{\partial \hat{\boldsymbol{\psi}}_{n-1}}\right| \boldsymbol{\psi}_n\right]}{\partial \hat{\boldsymbol{\psi}}_{n-1}^\text{T}}.\!\!\!\!\!
\end{aligned}\end{equation}

\setcounter{equation}{18}

\hrulefill
%\vspace{-2mm}
\end{figure*}

There are multiple stable points for \eqref{eq:newton2}, which correspond to the local optimal estimates for Problem \eqref{eq:problem} \cite{kushner2003stochastic}. Hence Problem \eqref{eq:problem} is non-convex for the estimate $\hat{\boldsymbol{\psi}}_n$. To study these stable points, we rewrite \eqref{eq:newton2} as follows:
%\vspace{-2.5mm}
\begin{equation}\label{eq:newton3}
\hat{\boldsymbol{\psi}}_n = \hat{\boldsymbol{\psi}}_{n\!-\!1} + a_{n} \left(\mathbf{f}\left(\hat{\boldsymbol{\psi}}_{n-1}, \boldsymbol{\psi}_n\right) + \hat{\mathbf{z}}_n \right),
%\vspace{-2.5mm}
\end{equation}
where $\mathbf{f}\left(\hat{\boldsymbol{\psi}}_{n-1}, \boldsymbol{\psi}_n\right)$ is defined in \eqref{eq:function_f},  $\hat{\mathbf{z}}_n$ is defined in \eqref{eq:noise_z}, with $\hat{\mathbf{g}}_{n}\!=\!\mathbf{W}_n^\text{H}\mathbf{a}(\hat{x}_{n\!-\!1})$, $\hat{\mathbf{e}}_{n}\!=\!\hat{\beta}_{n\!-\!1}\mathbf{W}_n^\text{H}\dot{\mathbf{a}}(\hat{x}_{n\!-\!1})$, $l_n\!=\!\|\hat{\mathbf{g}}_{n}\|_2 \|\hat{\mathbf{e}}_{n} \|_2$, $c_n\!=\!\hat{\mathbf{g}}_{n}^\text{H}\hat{\mathbf{e}}_{n}$, and $\mathbf{z}_n\!=\!\left[ z_{n,1}, z_{n,2} \right]^\text{T}$.

A stable point $\hat{\boldsymbol{\psi}}_{n-1}$ should satisfy: 1) $\mathbf{f}\left(\hat{\boldsymbol{\psi}}_{n-1}, \boldsymbol{\psi}_n\right)\!=\!\mathbf{0}$, and 2) $\frac{\partial \mathbf{f}\left(\hat{\boldsymbol{\psi}}_{n-1}, \boldsymbol{\psi}_n\right)}{\partial \hat{\boldsymbol{\psi}}_{n-1}^\text{T}}$ is a negative definite matrix. Let 
%\vspace{-2.5mm}
\begin{equation}
\setcounter{equation}{22}
\!\!\!\!\mathcal{S}_n \!=\! \left\{\! \hat{\boldsymbol{\psi}}_{n-1} \!:\! \mathbf{f}\left(\hat{\boldsymbol{\psi}}_{n-1}, \boldsymbol{\psi}_n\right)\!=\!\mathbf{0}, \frac{\partial \mathbf{f}\left(\hat{\boldsymbol{\psi}}_{n-1}, \boldsymbol{\psi}_n\right)}{\partial \hat{\boldsymbol{\psi}}_{n-1}^\text{T}} \prec 0\!\right\}\!,\!\!\!
%\vspace{-2mm}
\end{equation}
denote the stable points set at time-slot $n$. Then, we can verify $\boldsymbol{\psi}_n\!\in\!\mathcal{S}_n$ as below:
\begin{itemize}
\item[1)] When $\hat{\boldsymbol{\psi}}_{n-1} = \boldsymbol{\psi}_n$, we have $\beta_n\mathbf{W}_n^\text{H}\mathbf{a}(x_n)\!-\!\hat{\beta}_{n\!-\!1}\hat{\mathbf{g}}_{n} = 0.$
Hence, $\mathbf{f}(\boldsymbol{\psi}_n, \boldsymbol{\psi}_n)\!=\!\mathbf{0}$.

\item[2)] From \eqref{eq:function_f}, we can get 
\begin{equation}\label{eq:function_f2}
\begin{aligned}
&~\mathbf{f}\left(\hat{\boldsymbol{\psi}}_{n-1}, \boldsymbol{\psi}_n\right) =  \mathbf{I}(\hat{\boldsymbol{\psi}}_{n\!-\!1},\!\mathbf{W}_n)^{-1} \\
&~~~~~~~~~~~~~\cdot \mathbb{E}\left[\left.\frac{\partial \log p(\mathbf{y}_n| \hat{\boldsymbol{\psi}}_{n-1},\!\mathbf{W}_n)}{\partial \hat{\boldsymbol{\psi}}_{n-1}}\right| \boldsymbol{\psi}_n\right].
\end{aligned}
\end{equation}
Then, the derivative can be obtained by \eqref{eq:derivative}. Similar to 1), the first term in \eqref{eq:derivative} is $\mathbf{0}$ when $\hat{\boldsymbol{\psi}}_{n-1} = \boldsymbol{\psi}_n$. Moreover, the partial derivative $\frac{\partial \mathbb{E}\left[\left.\frac{\partial \log p(\mathbf{y}_n| \hat{\boldsymbol{\psi}}_{n-1},\!\mathbf{W}_n)}{\partial \hat{\boldsymbol{\psi}}_{n-1}}\right| \boldsymbol{\psi}_n\right]}{\partial \hat{\boldsymbol{\psi}}_{n-1}^\text{T}}$ in the second term is equal to $\mathbf{I}(\boldsymbol{\psi}_n,\!\mathbf{W}_n)$ when $\hat{\boldsymbol{\psi}}_{n-1} = \boldsymbol{\psi}_n$. Therefore, when $\hat{\boldsymbol{\psi}}_{n-1} = \boldsymbol{\psi}_n$, we have 
\begin{equation}
\setcounter{equation}{25}
\frac{\partial \mathbf{f}\left(\hat{\boldsymbol{\psi}}_{n-1}, \boldsymbol{\psi}_n\right)}{\partial \hat{\boldsymbol{\psi}}_{n-1}^\text{T}} = - \left[\begin{matrix} 1 & 0 & 0 \\ 0 & 1 & 0 \\ 0 & 0 & 1 \end{matrix} \right] \prec 0.
\end{equation}

\end{itemize}

Note that except for the real direction $x_n$, the antenna array gain is quite low at other local optimal stable points in $\mathcal{S}_n$. Hence, one key challenge is \emph{how to ensure that the RBCT algorithm converges to the real direction $x_n$, instead of other local optimal stable points in $\mathcal{S}_n$}.

} 

In \emph{static} beam tracking, where $\boldsymbol{\psi}_n\!=\!\boldsymbol{\psi}\!=\![\beta^\text{re},\!\beta^\text{im},\!x]^\text{T}$ and $\mathcal{S}_n \!=\! \mathcal{S}\!\overset{\Delta}{=}\! \Big\{ \hat{\boldsymbol{\psi}}_{n-1} : \mathbf{f}\left(\hat{\boldsymbol{\psi}}_{n-1}, \boldsymbol{\psi}\right)\!=\!\mathbf{0}, \frac{\partial \mathbf{f}\left(\hat{\boldsymbol{\psi}}_{n-1}, \boldsymbol{\psi}\right)}{\partial \hat{\boldsymbol{\psi}}_{n-1}^\text{T}} \prec 0\Big\}$, we adopt the diminishing step-sizes, given by \cite{nevel1973stochastic,kushner2003stochastic, borkar2008stochastic}
%\vspace{-1.5mm}
\begin{equation}\label{eq:stepsize}
a_n = \frac{\alpha}{n + N_0}, ~~n = 1, 2, \ldots,
%\vspace{-1.5mm}
\end{equation}
where $\alpha\!>\!0$ and $N_0\!\ge\!0$. We use the stochastic approximation and recursive estimation
theory \cite{nevel1973stochastic,kushner2003stochastic, borkar2008stochastic} to analyze the RBCT algorithm.{\blue To support the more general joint beam and channel tracking scenario than \cite{Li2017conf, Li2017super}, three new theorems are developed to resolve the challenge mentioned above:}
%%\vspace{-2.5mm}
\begin{theorem}[\textbf{Convergence to Stable Points}]\label{th:convergence}
If $a_n$ is given by (\ref{eq:stepsize}) with any $\alpha > 0$ and $N_0 \ge 0$, then $\hat{\boldsymbol{\psi}}_n$  converges to a unique point within $\mathcal{S}$ with probability one.
%\vspace{-2mm}
\end{theorem}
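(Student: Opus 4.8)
The plan is to read (\ref{eq:newton3}) as a Robbins--Monro stochastic approximation recursion $\hat{\boldsymbol{\psi}}_n = \hat{\boldsymbol{\psi}}_{n-1} + a_n\big(\mathbf{f}(\hat{\boldsymbol{\psi}}_{n-1}) + \hat{\mathbf{z}}_n\big)$, with mean field $\mathbf{f}(\cdot)\overset{\Delta}{=}\mathbf{f}(\cdot,\boldsymbol{\psi})$ and additive noise $\hat{\mathbf{z}}_n$, and then to invoke the convergence theory for recursive estimation / stochastic approximation \cite{nevel1973stochastic, kushner2003stochastic, borkar2008stochastic}. First I would dispatch the routine hypotheses: (i) $a_n = \alpha/(n+N_0)$ satisfies $\sum_n a_n = \infty$ and $\sum_n a_n^2 < \infty$ for every $\alpha>0$, $N_0\ge 0$; (ii) with the natural filtration $\mathcal{F}_n = \sigma(\hat{\boldsymbol{\psi}}_0,\mathbf{z}_1,\ldots,\mathbf{z}_n)$, formula (\ref{eq:noise_z}) shows $\hat{\mathbf{z}}_n$ is an $\mathcal{F}_{n-1}$-measurable linear image of $\mathbf{z}_n\sim\mathcal{CN}(\mathbf{0},\sigma_0^2\mathbf{I})$, so $\mathbb{E}[\hat{\mathbf{z}}_n\mid\mathcal{F}_{n-1}]=\mathbf{0}$ and $\mathbb{E}[\|\hat{\mathbf{z}}_n\|_2^2\mid\mathcal{F}_{n-1}]$ is continuous in $\hat{\boldsymbol{\psi}}_{n-1}$, hence bounded on compact sets; and (iii) by (\ref{eq:function_f}), $\mathbf{f}$ is smooth in $\hat{\boldsymbol{\psi}}_{n-1}$ wherever $\mathbf{I}(\hat{\boldsymbol{\psi}}_{n-1},\mathbf{W}_n)$ is invertible, i.e.\ where $\|\hat{\mathbf{g}}_n\|_2>0$ and $l_n^2>|c_n|^2$, which for the beamformers (\ref{eq:est_ctrl}) holds as soon as $\hat{\beta}_{n-1}\neq 0$.

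Second, the crux: showing $\{\hat{\boldsymbol{\psi}}_n\}$ stays in a compact set and bounded away from $\{\hat{\beta}=0\}$ almost surely. Here I would exploit that $\mathbf{a}(\cdot)$ and $\dot{\mathbf{a}}(\cdot)$ have uniformly bounded norm (so $\hat{\mathbf{g}}_n$ and $\hat{\mathbf{e}}_n/\hat{\beta}_{n-1}$ stay bounded no matter what $\hat{x}_{n-1}$ does), and that in $\mathbf{f}$ the coefficient block behaves like $-\hat{\beta}_{n-1}$ plus lower-order terms, so the $\hat{\beta}$ sub-recursion is essentially contractive, while the $\hat{x}$ component is confined by the periodicity of the array response; an inward-pointing-drift / Lyapunov argument on the ODE $\dot{\boldsymbol{\psi}}=\mathbf{f}(\boldsymbol{\psi})$ then makes a large ball forward invariant, and the stability lemmas for Robbins--Monro schemes yield $\sup_n\|\hat{\boldsymbol{\psi}}_n\|_2<\infty$ w.p.\ $1$, with a separate argument keeping the path off the singular set. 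I expect this stability/non-degeneracy step to be the main obstacle, since it is the only place the global (as opposed to local) geometry of $\mathbf{f}$ enters.

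Finally, given boundedness, the ODE method identifies the limit set of $\{\hat{\boldsymbol{\psi}}_n\}$ with a connected, internally chain-transitive invariant set of $\dot{\boldsymbol{\psi}}=\mathbf{f}(\boldsymbol{\psi})$; using the quasi-gradient structure of $\mathbf{f}=\mathbf{I}(\cdot)^{-1}\,\mathbb{E}_{\boldsymbol{\psi}}[\partial_{\hat{\boldsymbol{\psi}}}\log p]$ together with a suitable Lyapunov function (of Kullback--Leibler type), LaSalle's invariance principle restricts this set to the equilibria of $\mathbf{f}$, which are isolated since each is hyperbolic (those in $\mathcal{S}$ have Jacobian $\partial\mathbf{f}/\partial\hat{\boldsymbol{\psi}}^{\text{T}}\prec 0$, the others an eigenvalue with positive real part). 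Because $\hat{\mathbf{z}}_n$ has conditionally positive-definite covariance, the noise excites every unstable direction, so the classical avoidance-of-unstable-equilibria result for stochastic approximation rules out convergence to an equilibrium outside $\mathcal{S}$. Combining these facts, the limit set is a single equilibrium lying in $\mathcal{S}$; that is, $\hat{\boldsymbol{\psi}}_n$ converges with probability one to a unique point of $\mathcal{S}$.
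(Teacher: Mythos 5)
Your proposal is correct in outline and follows the same stochastic-approximation/ODE-method route as the paper, but the two arguments are packaged differently in a way worth noting. The paper's proof is a direct verification of the four hypotheses of Theorem 5.2.1 in \cite{kushner2003stochastic}: the step-size conditions, the uniform second-moment bound $\sup_n \mathbb{E}[\|\mathbf{f}+\hat{\mathbf{z}}_n\|_2^2]<\infty$ (obtained from the explicit bounds $|\mathbf{w}_{n,i}^\text{H}\mathbf{a}(x)|\le\sqrt{M}$ and $|\mathbf{w}_{n,i}^\text{H}\dot{\mathbf{a}}(x)|\le \pi d\sqrt{M}(M-1)/\lambda$ together with invertibility of the Fisher matrix), continuity of $\mathbf{f}$ in $\hat{\boldsymbol{\psi}}_{n-1}$, and the vanishing of the bias term $\boldsymbol\gamma_n$ via the filtration $\mathcal{G}_n=\sigma(\hat{\boldsymbol{\psi}}_0,\hat{\mathbf{z}}_1,\ldots,\hat{\mathbf{z}}_n)$; the conclusion is then cited wholesale. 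You instead reconstruct the machinery behind such a theorem: almost-sure boundedness of the iterates, identification of the limit set with a chain-transitive invariant set of $\dot{\boldsymbol{\psi}}=\mathbf{f}(\boldsymbol{\psi})$, LaSalle with a Lyapunov function, and a noise-excitation argument to avoid unstable equilibria. Two observations. First, your flagged ``crux'' --- keeping the trajectory in a compact set and away from the singular locus $\{\hat{\beta}=0,\ l_n^2=|c_n|^2\}$ where $\mathbf{I}(\cdot)^{-1}$ blows up --- is a real issue that the paper's one-line claim $\|\mathbf{I}(\hat{\boldsymbol{\psi}}_{n-1},\mathbf{W}_n)^{-1}\|_\text{F}^2<\infty$ (justified only by pointwise invertibility) does not fully resolve either; your Lyapunov/contractivity sketch for the $\hat{\beta}$ sub-recursion would need to be carried out, but you are not behind the paper here. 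Second, your avoidance-of-unstable-equilibria step is genuinely additional content: Kushner's theorem by itself delivers convergence to a limit set of the ODE, and ruling out the non-hyperbolically-stable equilibria so that the limit lands specifically in $\mathcal{S}$ is exactly the kind of argument the paper leaves implicit in its citation. So your route buys a more self-contained and arguably more complete justification of the ``within $\mathcal{S}$'' part of the claim, at the cost of two technical steps (global stability and the KL-type Lyapunov function) that remain to be filled in.
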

\begin{proof}
See the detailed proof in Appendix \ref{proof:convergence}.
%\vspace{-2mm}
\end{proof}

Hence, for general step-size parameters $\alpha$ and $N_0$ in \eqref{eq:stepsize}, $\hat{x}_n$ converges to a stable point in $\mathcal{S}$.
%\vspace{-2mm}
\begin{theorem}[\textbf{Convergence to the Real Beam Direction $x$}]\label{th:lock}
If (i) $\hat{x}_{0}\!\in\!\mathcal{B}\left(x\right)$, (ii) $a_n$ is given by (\ref{eq:stepsize}) with any $\alpha\!>\!0$,
then there exist $N_0\!\ge\!0$ and $C\!>\!0$ such that
%\vspace{-4mm}
\begin{equation}\label{eq:lock}
P\left( \left. \hat{x}_n \rightarrow x \right| \hat{x}_{0} \in \mathcal{B}\left(x\right) \right) \geq 1 - 6e^{-\frac{C|s|^2}{\alpha^2\sigma_0^2}}.
%\vspace{-2mm}
\end{equation}
\end{theorem}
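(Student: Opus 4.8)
The plan is a \emph{lock-in} argument. I would first single out a bounded neighbourhood $\mathcal{N}$ of $\boldsymbol{\psi}$ that contains every initialization allowed by the hypothesis $\hat{x}_0\in\mathcal{B}(x)$ and that lies inside the domain of attraction of $\boldsymbol{\psi}$ for the mean drift $\mathbf{f}(\cdot,\boldsymbol{\psi})$ of the recursion \eqref{eq:newton3}, and then show that, once inside $\mathcal{N}$, the iterate $\hat{\boldsymbol{\psi}}_n$ leaves $\mathcal{N}$ only on an event of probability at most $6e^{-C|s|^2/(\alpha^2\sigma_0^2)}$. On the complement of that event the trajectory stays in a compact subset of a single basin of attraction, so the stochastic-approximation argument behind Theorem~\ref{th:convergence} localizes and forces $\hat{\boldsymbol{\psi}}_n\to\boldsymbol{\psi}$ with probability one, hence $\hat{x}_n\to x$; this is \eqref{eq:lock}.

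\emph{Step 1 (mean-field geometry).} The central object is $\mathbf{f}(\cdot,\boldsymbol{\psi})$ from \eqref{eq:function_f} with the beamformer \eqref{eq:est_ctrl}, $\delta^*=\tfrac{2\lambda}{3Md}$. Because the Newton preconditioner $\mathbf{I}(\cdot,\mathbf{W}_n)^{-1}$ is positive definite, $\mathbf{f}(\cdot,\boldsymbol{\psi})$ is, up to a state-dependent change of metric, minus the gradient of the prediction-error energy $V(\hat{\boldsymbol{\psi}})=\tfrac12\big\|\beta\mathbf{W}_n^{\text{H}}\mathbf{a}(x)-\hat{\beta}\,\mathbf{W}_n^{\text{H}}\mathbf{a}(\hat{x})\big\|_2^2$, which I would use as the Lyapunov function. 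I would establish: (i) $V(\boldsymbol{\psi})=0$ and $V$ is coercive (from $\|\mathbf{a}(\cdot)\|_2^2=M$ and an $\big||\beta|-|\hat{\beta}|\big|$-type lower bound), so its sublevel sets are compact; (ii) $\boldsymbol{\psi}$ is the \emph{only} critical point of $V$ on the mainlobe slab $\{\,|\hat{x}-x|<\tfrac{\lambda}{Md}\,\}$, which --- after minimizing over $\hat{\beta}$ --- reduces to a unimodality statement for the Dirichlet-kernel inner products $\mathbf{a}(\hat{x}\pm\delta^*)^{\text{H}}\mathbf{a}(x)$; (iii) the drift keeps $\hat{x}$ inside that slab, i.e.\ on the hyperplanes $\hat{x}=x\pm\tfrac{\lambda}{Md}$ it never points outward. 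Granting (i)--(iii), and using that $\boldsymbol{\psi}$ is a non-degenerate minimum of $V$ together with a compactness argument, one gets $\langle\nabla V,\mathbf{f}(\cdot,\boldsymbol{\psi})\rangle\le-\kappa V$ with $\kappa>0$ on a compact level set $\{V\le v_2\}$, and levels $0<v_1<v_2$ with $\{\hat{x}\in\mathcal{B}(x)\}\cap\{\hat{\beta}\text{ in a fixed ball}\}\subseteq\{V\le v_1\}\subseteq\{V\le v_2\}\subseteq(\text{domain of attraction})$; this $\{V\le v_2\}$ is the desired $\mathcal{N}$.

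\emph{Steps 2--3 (escape probability and assembly).} Substituting \eqref{eq:newton3} into the Lyapunov inequality gives, on $\{V(\hat{\boldsymbol{\psi}}_{n-1})\le v_2\}$,
\[
V(\hat{\boldsymbol{\psi}}_n)\le(1-\kappa a_n)\,V(\hat{\boldsymbol{\psi}}_{n-1})+a_n m_n+a_n^2 L,
\]
where $L$ bounds the second-order remainder on $\{V\le v_2\}$ and $m_n=\nabla V(\hat{\boldsymbol{\psi}}_{n-1})^{\text{T}}\hat{\mathbf{z}}_n$ is a martingale difference with respect to the history $\mathcal{F}_{n-1}$ (by \eqref{eq:noise_z}, $\mathbb{E}[\hat{\mathbf{z}}_n\mid\mathcal{F}_{n-1}]=\mathbf{0}$). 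From \eqref{eq:noise_z}, with $z_{n,i}$ i.i.d.\ $\mathcal{CN}(0,\sigma_0^2)$, and because $\mathbf{I}(\cdot,\mathbf{W}_n)^{-1}$ cancels the $|s|^2/\sigma_0^2$ scale of the Fisher information, $\hat{\mathbf{z}}_n$ is conditionally Gaussian with covariance of order $\sigma_0^2/|s|^2$, so $m_n$ is conditionally sub-Gaussian with parameter $\le c_1\sigma_0^2/|s|^2$ uniformly on $\{V\le v_2\}$. Let $\tau$ be the first $n$ with $V(\hat{\boldsymbol{\psi}}_n)\ge v_2$. Since $\sum_{n\ge1}a_n^2=\alpha^2\sum_{n\ge1}(n+N_0)^{-2}=:\alpha^2 S(N_0)$ with $S(N_0)\downarrow0$ as $N_0\uparrow\infty$, I would take $N_0$ large enough (and the ball radius in Step~1 of order $|s|/\alpha$, so that the Stage-1 least-squares output $\hat{\beta}_0$ in \eqref{eq:initial} lies in the ball except on an event of comparable exponential size) that $L\alpha^2 S(N_0)\le\tfrac12(v_2-v_1)$; a Doob maximal-inequality / exponential-supermartingale bound applied to the stopped sum $\sum_{k\le n\wedge\tau}a_k m_k$ then yields $P(\tau<\infty\mid V(\hat{\boldsymbol{\psi}}_0)\le v_1)\le 2\exp\big(-(v_2-v_1)^2|s|^2\,/\,(8c_1\sigma_0^2\alpha^2 S(N_0))\big)$. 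A union bound over the at most six relevant exceedance events (three coordinates of $\hat{\mathbf{z}}_n$, each with two signs), together with $C:=(v_2-v_1)^2/(8c_1 S(N_0))$, gives \eqref{eq:lock}; and on $\{\tau=\infty\}$ the localized convergence noted in the first paragraph gives $\hat{x}_n\to x$.

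\emph{Main obstacle.} The hard part is Step~1(ii)--(iii): showing that the \emph{entire} mainlobe slab --- with $\hat{\beta}$ \emph{a priori arbitrary} --- lies inside the domain of attraction of $\boldsymbol{\psi}$ for the full three-dimensional mean dynamics. Because $\hat{x}$ and $\hat{\beta}$ are updated on the same time-scale there is no singular-perturbation shortcut, so a genuine joint Lyapunov function is needed; the prediction-error energy is the natural candidate, but an \emph{unlucky} value of $\arg(\bar{\beta}\hat{\beta})$ tilts the effective target of the $\hat{x}$-drift away from $x$, so proving that the drift still never exits the slab and that no spurious critical point of $V$ appears there requires careful, quantitative control of the Dirichlet kernels $\mathbf{a}(\hat{x}\pm\delta^*)^{\text{H}}\mathbf{a}(x)$ and $\mathbf{a}(\hat{x}\pm\delta^*)^{\text{H}}\dot{\mathbf{a}}(x)$ across the whole mainlobe --- this is exactly where the choice $\delta^*=\tfrac{2\lambda}{3Md}$ and the mainlobe half-width $\tfrac{\lambda}{Md}$ must interact favourably. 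By comparison, the martingale/concentration bookkeeping of Steps~2--3 is routine stochastic-approximation machinery in the style of \cite{nevel1973stochastic,kushner2003stochastic,borkar2008stochastic}.
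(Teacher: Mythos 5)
Your overall architecture is a legitimate alternative to the paper's: you run a Lyapunov-drift / exponential-supermartingale lock-in argument, whereas the paper compares the linearly interpolated iterate $\bar{\boldsymbol{\psi}}(t)$ with the ODE flow $\tilde{\boldsymbol{\psi}}^n(t)$ of \eqref{eq_ODE} on windows of length $T$, controls their gap by a discrete Gronwall inequality (Lemma~\ref{le_gronwall}) plus a Gaussian-martingale maximal inequality (Lemma~\ref{le_cheb}), and shows via an explicitly constructed invariant interval $\mathcal{I}\subset\mathcal{B}(x)$ (Lemma~\ref{le_sufficient}) that uniform $\delta$-tracking of the flow confines $\hat{x}_n$ to $\mathcal{I}$ forever; Corollary~2.5 of \cite{borkar2008stochastic} then upgrades confinement to convergence. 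Your Steps~2--3 would produce a bound of the same shape --- an exponent of order $(\text{gap})^2|s|^2\big/\big(\alpha^2\sigma_0^2\sum_{i}(i+N_0)^{-2}\big)$ with $N_0$ taken large --- and that part is routine. A minor quibble: your factor $6$ is reverse-engineered; once the noise is collapsed into the scalar martingale $m_n=\nabla V^{\mathrm{T}}\hat{\mathbf{z}}_n$, a two-sided maximal inequality gives a factor $2$, not $6$. In the paper the $6$ arises as $3$ coordinates times the factor $2$ of Lemma~\ref{le_cheb}. The constant is immaterial, but it signals fitting to the target rather than deriving it.

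The genuine gap is the one you flag yourself: Step~1(ii)--(iii) is asserted, not proved, and it is the entire analytic content of the theorem. Without showing that $\mathbf{f}(\cdot,\boldsymbol{\psi})$ has no spurious zero on the slab $\{|\hat{x}-x|<\lambda/(Md)\}$ with $\hat{\beta}$ arbitrary, and that the drift never points out of that slab, the level set $\mathcal{N}=\{V\le v_2\}$ with the claimed properties need not exist and the descent constant $\kappa$ need not be positive. There is also a structural obstacle you underplay: your candidate $V$ depends on $\hat{x}$ both through the estimate and through the beamformer $\mathbf{W}_n=\mathbf{W}(\hat{x}_{n-1})$ of \eqref{eq:est_ctrl}, so $\mathbf{f}$ in \eqref{eq:function_f} is the preconditioned negative gradient of $V$ only with $\mathbf{W}_n$ \emph{frozen}; the true gradient of $V(\hat{\boldsymbol{\psi}})$ acquires terms in $\partial\mathbf{W}/\partial\hat{x}$, so $\langle\nabla V,\mathbf{f}\rangle\le-\kappa V$ does not follow from positive definiteness of $\mathbf{I}^{-1}$ alone. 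To be fair, the paper's own proof leans on a closely related unproved assertion --- that the ODE solution approaches $\boldsymbol{\psi}$ monotonically for $t\ge t_{\mathrm{b}}$, which underlies the choice of $\delta$ in \eqref{eq:x_b} and of $T$ in \eqref{eq_T} --- so the two arguments have parallel soft spots; but as submitted your proposal is a plan whose hardest step is explicitly deferred, not a proof.
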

\begin{proof}
See the detailed proof in Appendix \ref{proof:lock}.
%\vspace{-2mm}
\end{proof}

By Theorem \ref{th:lock}, if the initial point $\hat{x}_0$ is in the mainlobe $\mathcal{B}(x)$, the probability that $\hat{x}_n$ does not converge to $x$ decades \emph{exponentially} with respect to ${|s|^2}/{\alpha^2\sigma_0^2}$. Hence, one can increase the transmit SNR ${|s|^2}/{\sigma_0^2}$ and reduce the step-size parameter $\alpha$ to ensure $\hat{x}_n\!\rightarrow\!x$ with high probability. 

%\vspace{-2mm}
\begin{theorem}[\textbf{Convergence to $x$ with the Minimum MSE}]
\label{th:normal}
If (i) $a_n$ is given by (\ref{eq:stepsize}) with $\alpha = 1$ and any $N_0 \ge 0$, and (ii) $\hat{\boldsymbol{\psi}}_n \rightarrow \boldsymbol{\psi}$,
then
%\begin{equation}\label{eq:normal}
%	\sqrt{n}\left(\hat{x}_n - x\right) \overset{d}{\rightarrow} \mathcal{N} \left(0, I_{\max}^{-1}\right),
%\end{equation}
%as $n \rightarrow \infty$, where $\overset{d}{\rightarrow}$ represents convergence in conditional distribution, and $I_{\max}$ is defined in \eqref{eq:MMSE}. In addition,
%\vspace{-2mm}
\begin{equation}\label{eq:normal2}
	\lim_{n\rightarrow\infty}~\!n~\!\mathbb{E}\left[\left(\hat{x}_n - x\right)^2\big| \hat{\boldsymbol{\psi}}_n \rightarrow \boldsymbol{\psi}\right] = \left[ \mathbf{I}(\boldsymbol{\psi}, \mathbf{W}^*)^{-1} \right]_{3,3}.
	%\vspace{-2mm}
\end{equation}
\end{theorem}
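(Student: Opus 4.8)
The plan is to prove a \emph{conditional} asymptotic-normality result for the recursion \eqref{eq:newton3}: conditioned on the event $\mathcal{A}\overset{\Delta}{=}\{\hat{\boldsymbol{\psi}}_n\to\boldsymbol{\psi}\}$ (a positive-probability event, cf. Theorems \ref{th:convergence}--\ref{th:lock}), the scaled error $\sqrt{n}\,(\hat{\boldsymbol{\psi}}_n-\boldsymbol{\psi})$ converges in distribution to $\mathcal{N}\!\big(\mathbf{0},\mathbf{I}(\boldsymbol{\psi},\mathbf{W}^*)^{-1}\big)$, and then to upgrade this to convergence of the conditional second moment of its third coordinate, which is exactly \eqref{eq:normal2}. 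Throughout I would work on $\mathcal{A}$, on which the iterates eventually remain in an arbitrarily small neighborhood of $\boldsymbol{\psi}$ and, through the control law \eqref{eq:est_ctrl}, the beamforming matrix satisfies $\mathbf{W}_n\to\mathbf{W}^*$ of \eqref{eq:control_opt}.

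\emph{Linearization and noise covariance.} On $\mathcal{A}$ I would Taylor-expand the mean field as $\mathbf{f}(\hat{\boldsymbol{\psi}}_{n-1},\boldsymbol{\psi}) = \mathbf{F}\,(\hat{\boldsymbol{\psi}}_{n-1}-\boldsymbol{\psi}) + \mathbf{r}_n$ with $\|\mathbf{r}_n\|=o(\|\hat{\boldsymbol{\psi}}_{n-1}-\boldsymbol{\psi}\|)$, where the essential fact --- already computed in \eqref{eq:derivative} and the identity displayed immediately after it --- is that the Jacobian at the stable point is $\mathbf{F}=-\mathbf{I}_3$, the negative $3\times3$ identity matrix (not to be confused with the Fisher information $\mathbf{I}(\cdot,\cdot)$). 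This identity gain is the signature of the stochastic Newton step, whose prefactor $\mathbf{I}(\hat{\boldsymbol{\psi}}_{n-1},\mathbf{W}_n)^{-1}$ exactly cancels the curvature of the log-likelihood. Next I would observe that $\hat{\mathbf{z}}_n$ in \eqref{eq:noise_z} is a martingale difference with respect to the natural filtration $\{\mathcal{F}_{n-1}\}$ (its conditional mean is $\mathbf{0}$ by the definition of $\mathbf{f}$), and that, using $\mathbf{z}_n\sim\mathcal{CN}(\mathbf{0},\sigma_0^2\mathbf{I})$ and the real/imaginary second-moment identities, its conditional covariance equals $\mathbb{E}[\hat{\mathbf{z}}_n\hat{\mathbf{z}}_n^\text{T}\mid\mathcal{F}_{n-1}] = \mathbf{I}(\hat{\boldsymbol{\psi}}_{n-1},\mathbf{W}_n)^{-1}$ --- because the centered score in \eqref{eq:gra} has conditional covariance $\mathbf{I}(\hat{\boldsymbol{\psi}}_{n-1},\mathbf{W}_n)$ by the structure of \eqref{eq:FIM}. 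On $\mathcal{A}$, this covariance converges to $\mathbf{I}(\boldsymbol{\psi},\mathbf{W}^*)^{-1}$ by continuity of $\hat{x}_{n-1}\mapsto\mathbf{W}_n\mapsto\mathbf{I}(\boldsymbol{\psi},\mathbf{W}_n)$, and a conditional Lindeberg condition holds automatically since $\hat{\mathbf{z}}_n$ is conditionally Gaussian with uniformly bounded covariance.

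\emph{CLT and passage to the second moment.} With $\alpha=1$, the step-size \eqref{eq:stepsize} is $a_n=1/(n+N_0)$, so the standard central limit theorem for stochastic approximation / recursive estimation with $1/n$ gains (e.g. \cite{nevel1973stochastic}, or Fabian-type results via \cite{kushner2003stochastic,borkar2008stochastic}) applies, its stability hypothesis being that $\alpha\mathbf{F}+\tfrac12\mathbf{I}_3 = -\tfrac12\mathbf{I}_3$ is Hurwitz --- which holds, and is exactly where $\alpha=1$ (rather than $\alpha\le\tfrac12$) is needed. The limiting covariance $\boldsymbol{\Sigma}$ is the unique solution of the Lyapunov equation $(\mathbf{F}+\tfrac12\mathbf{I}_3)\boldsymbol{\Sigma}+\boldsymbol{\Sigma}(\mathbf{F}+\tfrac12\mathbf{I}_3)^\text{T}+\mathbf{I}(\boldsymbol{\psi},\mathbf{W}^*)^{-1}=\mathbf{0}$, which with $\mathbf{F}=-\mathbf{I}_3$ collapses to $\boldsymbol{\Sigma}=\mathbf{I}(\boldsymbol{\psi},\mathbf{W}^*)^{-1}$. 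Hence, conditioned on $\mathcal{A}$, $\sqrt{n}\,(\hat{x}_n-x)$ converges in distribution to $\mathcal{N}\big(0,[\mathbf{I}(\boldsymbol{\psi},\mathbf{W}^*)^{-1}]_{3,3}\big)$, which by Fatou's lemma already gives the ``$\ge$'' half of \eqref{eq:normal2}. For the matching upper bound I would establish uniform integrability of $\{\,n(\hat{x}_n-x)^2\mid\mathcal{A}\,\}$ by bounding $\sup_n\mathbb{E}\big[\,n^{1+\epsilon/2}\|\hat{\boldsymbol{\psi}}_n-\boldsymbol{\psi}\|^{2+\epsilon}\,\big|\,\mathcal{A}\,\big]$ through a moment recursion for \eqref{eq:newton3} that exploits the Hurwitz linearization and the sub-Gaussian tails of $\hat{\mathbf{z}}_n$, reusing the concentration estimates developed for Theorem \ref{th:lock}.

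\emph{Main obstacle.} The crux is not the algebra of the Lyapunov equation but the \emph{localization} of the argument: the classical SA central limit theorem is stated under global convergence, whereas here it must be run on the event of convergence to \emph{one particular} stable point $\boldsymbol{\psi}$ among the several in $\mathcal{S}$. This entails controlling the (random) first time the iterate enters a shrinking neighborhood of $\boldsymbol{\psi}$, showing that the post-entry dynamics are genuinely governed by the linearization $\mathbf{F}=-\mathbf{I}_3$ with negligible remainder $\mathbf{r}_n$, and verifying that the fluctuation of the conditional noise covariance about $\mathbf{I}(\boldsymbol{\psi},\mathbf{W}^*)^{-1}$ neither perturbs the Gaussian limit nor breaks uniform integrability. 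Finally, the continuity of the control map $\hat{x}_{n-1}\mapsto\mathbf{W}_n$ at $x$, together with the fact that \eqref{eq:est_ctrl} drives $\mathbf{W}_n$ exactly to the CRLB-optimal $\mathbf{W}^*$, is what pins the asymptotic variance at the \emph{minimum} CRLB $[\mathbf{I}(\boldsymbol{\psi},\mathbf{W}^*)^{-1}]_{3,3}$ rather than at a suboptimal or time-averaged value.
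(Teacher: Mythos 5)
Your proposal follows essentially the same route as the paper's proof: both invoke the classical asymptotic-normality theorem for stochastic approximation with $1/n$ gains (Theorem 6.6.1 of \cite{nevel1973stochastic}), linearize the mean field at $\boldsymbol{\psi}$ with Jacobian $-\mathbf{I}_3$ so that the stability matrix is $\mathbf{A}=-(\alpha-\tfrac12)\mathbf{I}_3$, identify the martingale-noise covariance as $\mathbf{I}(\hat{\boldsymbol{\psi}}_{n-1},\mathbf{W}_n)^{-1}\rightarrow\mathbf{I}(\boldsymbol{\psi},\mathbf{W}^*)^{-1}$, and solve the resulting Lyapunov equation to get $\boldsymbol{\Sigma}=\frac{\alpha^2}{2\alpha-1}\mathbf{I}(\boldsymbol{\psi},\mathbf{W}^*)^{-1}$, minimized at $\alpha=1$. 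Your extra attention to uniform integrability (to pass from the distributional limit to the second-moment limit) and to localizing the CLT on the convergence event is a refinement of, not a departure from, the paper's argument, which simply conditions on $\hat{\boldsymbol{\psi}}_n\rightarrow\boldsymbol{\psi}$ and reads the asymptotic variance off the cited theorem.
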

\begin{proof}
See the detailed proof in Appendix \ref{proof:normal}.
%\vspace{-2mm}
\end{proof}

%\begin{proof}[Proof Description for Theorem 1-3]
% Similar to \cite{Li2017conf, Li2017super}, we use Theorem 5.2.1 of \cite{kushner2003stochastic}, Chapter 4 of \cite{borkar2008stochastic}, and Theorem 6.6.1 of \cite{nevel1973stochastic} to prove Theorem 1-3, respectively. The main difference from \cite{Li2017conf, Li2017super} is that the RBCT algorithm considers vector variable/function, rather than the scalar ones in \cite{Li2017conf, Li2017super}. Hence, the main proof structures are almost the same, while the parts that involve these vectors are different. The proof is omitted due to space limitation.
%\end{proof}
%%\vspace{-2mm}

Theorem \ref{th:normal} tells us that $\alpha$ should not be too small: If $\alpha=1$, then the minimum CRLB on the RHS of \eqref{eq:CRLB} is achieved asymptotically with high probability. 

% By Theorem 1-3, we can observe that the RBCT algorithm can ensure that the estimate $\hat{\boldsymbol{\psi}}_n$ converges to the real parameters $\boldsymbol{\psi}$ and the beam direction MSE achieves the minimum CRLB asymptotically, with high probability. 

\section{Numerical Results}
\label{sec:numerical}

\begin{figure}[t]
\centering
\includegraphics[width=8.6cm]{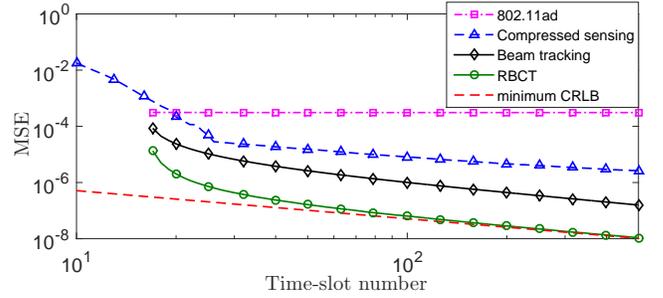}
%\vspace{-3mm}
\caption{MSE vs. time-slot number in static scenarios.}
%\vspace{-2mm}
\label{fig:static_tracking}
\end{figure}

We compare the RBCT algorithm with three reference algorithms: the compressed sensing algorithm \cite{Alkhateeb2015Compressed}, the IEEE 802.11ad algorithm \cite{IEEE80211ad}, and the beam tracking algorithm \cite{Li2017super}. The first two algorithms have the same configuration as that in Section VI of \cite{Li2017super}. The third one uses the same training beamforming vectors as the RBCT algorithm, i.e., in each time-slot, it receives two pilots with the beamforming vectors in \eqref{eq:est_ctrl}, and the beam direction is tracked by using both observations. Moreover, its channel coefficient is obtained with a least square estimator by using these observations. Consider the system model in Section \ref{sec:model} with $M \!=\! 32$ antennas, and the antenna spacing is $d \!=\! 0.5\lambda$. The pilot symbol is ${s} \!=\! \frac{1+j}{2}$, and the transmit SNR $\frac{|{s}|^2}{\sigma^2}$ is set as $5\text{dB}$. To ensure fairness, we assume that 2 pilot symbols are received in each time-slot, hence all the algorithms have the same pilot overhead.

In static scenarios, we set the step-size as $a_n\!=\!\frac{1}{n}, n\!\ge\!1$. The real AoA $\theta$ is randomly generated by a uniform distribution on $[-90^\circ, 90^\circ]$ in each realization, and the results are averaged over 10000 random realizations. Figure \ref{fig:static_tracking} plots the MSE over time. It can be observed that the MSE of the RBCT algorithm converges to the minimum CRLB in \eqref{eq:CRLB}, which is much smaller than the reference algorithms.

\begin{figure}[t]
\centering
\includegraphics[width=8.6cm]{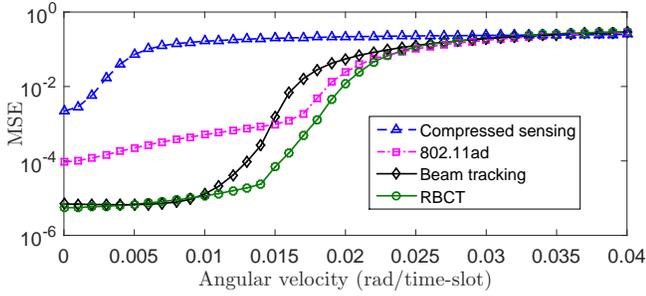}
%\vspace{-4mm}
\caption{MSE vs. angular velocity in dynamic scenarios.}
%\vspace{-2mm}
\label{fig:dynamic_tracking_MSE}
\end{figure}

\begin{figure}[t]
\centering
\includegraphics[width=8.6cm]{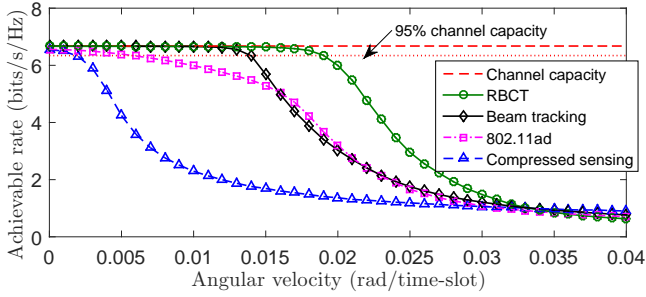}
%\vspace{-4mm}
\caption{Data rate vs. angular velocity in dynamic scenarios.}
%\vspace{-4mm}
\label{fig:dynamic_tracking_rate}
\end{figure}

In dynamic scenarios, we set the step-size as a constant value, i.e., $a_n\!=\!1, n\!\ge\!1$. The channel variation is modeled as: The AoA $\theta_n\!=\!\theta_{n-1}\!+\!\delta_{n-1}\!\cdot\!\omega$ where $\theta_0=0$, $\delta_n\in\{-1,\!1\}$ denotes the rotation direction, and $\omega\!\in\![0, 0.04]$ is a fixed angular velocity. The rotation direction $\delta_n$ is chosen such that $\theta_n$ varies within $[-\frac{\pi}{3},\!\frac{\pi}{3}]$. The channel coefficient $\beta_n (\mathbb{E}\left[|\beta_n|^2\right]\!=\!1)$ is subject to Rician fading with a K-factor $\kappa\!=\!15\text{dB}$, according to the channel model proposed in \cite{Samimi2016Kfactor}. In Fig. \ref{fig:dynamic_tracking_MSE} and \ref{fig:dynamic_tracking_rate}, one can observe that the RBCT algorithm can support much higher angular velocities and data rates than other algorithms. According to Fig. \ref{fig:dynamic_tracking_rate}, the RBCT algorithm can achieve 95\% of channel capacity when the angular velocity is 0.19rad (1.09 degrees) per time-slot. If 5 time-slots last for one second, i.e., 10 pilots per second received, then the RBCT algorithm is capable of tracking a mobile moving at an angular velocity of 5.45 degrees per second and achieving over 95\% of channel capacity.
%\vspace{-2.5mm}
\section{Conclusion}
%\vspace{-0.5mm}
\label{sec:conclusion}
We have developed a joint beam and channel tracking algorithm {\blue for mmWave phased antenna arrays}, and established its convergence and asymptomatic optimality. Our simulation results show that the proposed algorithm can achieve much faster tracking speed, lower beam tracking error, and higher data rate than several state-of-the-art algorithms, with the same pilot overhead.

\bibliographystyle{IEEEtran}
\bibliography{reference}

\appendices

\section{Proof of Theorem~\ref{th:convergence}}\label{proof:convergence}

Before providing the proof, let us provide some useful definitions. In \emph{static} beam tracking, where $\boldsymbol{\psi}_n\!=\!\boldsymbol{\psi}\!=\![\beta^\text{re},\!\beta^\text{im},\!x]^\text{T}$ and $\mathcal{S}_n \!=\! \mathcal{S}\!\overset{\Delta}{=}\! \Big\{ \hat{\boldsymbol{\psi}}_{n-1} \!:\! \mathbf{f}\left(\hat{\boldsymbol{\psi}}_{n-1}, \boldsymbol{\psi}\right)\!=\!\mathbf{0}, \frac{\partial \mathbf{f}\left(\hat{\boldsymbol{\psi}}_{n-1}, \boldsymbol{\psi}\right)}{\partial \hat{\boldsymbol{\psi}}_{n-1}^\text{T}} \prec 0\Big\}$, recall the recursive procedure \eqref{eq:newton3}:
\begin{align}\label{eq:newton4}
\hat{\boldsymbol{\psi}}_n = \hat{\boldsymbol{\psi}}_{n\!-\!1} + a_{n} \left(\mathbf{f}\left(\hat{\boldsymbol{\psi}}_{n-1}, \boldsymbol{\psi}\right) + \hat{\mathbf{z}}_n \right) ,
\end{align}
where $\mathbf{f}\left(\hat{\boldsymbol{\psi}}_{n-1}, \boldsymbol{\psi}\right)$ and $\hat{\mathbf{z}}_n$ are given in \eqref{eq:function_f} and \eqref{eq:noise_z} separately. From \eqref{eq:noise_z}, we have
\begin{align}\label{eq:noise_z2}\hat{\mathbf{z}}_n \sim \mathcal{N}\left( \mathbf{0}, \mathbf{I}(\hat{\boldsymbol{\psi}}_{n\!-\!1},\!\mathbf{W}_n)^{-1} \right),\end{align}
where $\mathbb{E}\left[ \hat{\mathbf{z}}_n \right] = \mathbf{0}$, and $\mathbf{I}(\hat{\boldsymbol{\psi}}_{n\!-\!1},\!\mathbf{W}_n)^{-1}$ is the covariance matrix of $\hat{\mathbf{z}}_n$ calculated by \eqref{eq:covariance}. In \eqref{eq:covariance}, the step $(a)$ can be obtain as follows:

\begin{figure*}[!t]
\normalsize
\setcounter{mytempeqncnt}{\value{equation}}
% \setcounter{equation}{14}
%%\vspace{-5mm}
\begin{align}\label{eq:covariance}
% \frac{1}{\left\|\hat{\mathbf{g}}_{n}\right\|^4_2 (l_n^2 \!-\! |c_n|^2)^2}\!\cdot\!\left[\!\begin{matrix} l_n^2\!-\!\operatorname{Im}\{c_n\}^2  \!&\! \operatorname{Re}\{c_n\}\operatorname{Im}\{c_n\} \!&\! -\|\hat{\mathbf{g}}_{n}\|^2_2\operatorname{Re}\{c_n\} \\ 
% \operatorname{Re}\{c_n\}\operatorname{Im}\{c_n\} \!&\! l_n^2\!-\!\operatorname{Re}\{c_n\}^2  \!&\! -\|\hat{\mathbf{g}}_{n}\|^2_2\operatorname{Im}\{c_n\} \\
%-\|\hat{\mathbf{g}}_{n}\|^2_2\operatorname{Re}\{c_n\} \!&\! -\|\hat{\mathbf{g}}_{n}\|^2_2\operatorname{Im}\{c_n\} \!&\! \|\hat{\mathbf{g}}_{n}\|^4_2 \end{matrix}\!\right]
\mathbb{E}\left[ \left(\hat{\mathbf{z}}_n - \mathbb{E}\left[ \hat{\mathbf{z}}_n \right] \right) \left(\hat{\mathbf{z}}_n - \mathbb{E}\left[ \hat{\mathbf{z}}_n \right] \right)^\text{T}\right] = &~\frac{4}{\sigma_0^4}\mathbf{I}(\hat{\boldsymbol{\psi}}_{n\!-\!1},\!\mathbf{W}_n)^{-1}\!\cdot\!\mathbb{E}\!\left\{\!\!\left[\begin{matrix} \operatorname{Re}\{{s}^\text{H}\hat{\mathbf{g}}_{n}^\text{H}\mathbf{z}_n\} \\
\operatorname{Im}\{{s}^\text{H}\hat{\mathbf{g}}_{n}^\text{H}\mathbf{z}_n\} \\
\operatorname{Re}\{{s}^\text{H}\hat{\mathbf{e}}_{n}^\text{H}\mathbf{z}_n\} \end{matrix}\right]\!\!\cdot\!\!\left[\begin{matrix} \operatorname{Re}\{{s}^\text{H}\hat{\mathbf{g}}_{n}^\text{H}\mathbf{z}_n\} \\
\operatorname{Im}\{{s}^\text{H}\hat{\mathbf{g}}_{n}^\text{H}\mathbf{z}_n\} \\
\operatorname{Re}\{{s}^\text{H}\hat{\mathbf{e}}_{n}^\text{H}\mathbf{z}_n\} \end{matrix}\right]^\text{\!\!T} \!\right\}
\!\cdot\!\mathbf{I}(\hat{\boldsymbol{\psi}}_{n\!-\!1},\!\mathbf{W}_n)^{-1} \\
\overset{(a)}{=} &~\mathbf{I}(\hat{\boldsymbol{\psi}}_{n\!-\!1},\!\mathbf{W}_n)^{-1}. \nonumber
%\left[\!\begin{matrix} l_n^2\!-\!\operatorname{Im}\{c_n\}^2  \!&\! \operatorname{Re}\{c_n\}\operatorname{Im}\{c_n\} \!&\! -\|\hat{\mathbf{g}}_{n}\|^2_2\operatorname{Re}\{c_n\} \\ 
%\operatorname{Re}\{c_n\}\operatorname{Im}\{c_n\} \!&\! l_n^2\!-\!\operatorname{Re}\{c_n\}^2  \!&\! -\|\hat{\mathbf{g}}_{n}\|^2_2\operatorname{Im}\{c_n\} \\
%-\|\hat{\mathbf{g}}_{n}\|^2_2\operatorname{Re}\{c_n\} \!&\! -\|\hat{\mathbf{g}}_{n}\|^2_2\operatorname{Im}\{c_n\} \!&\! \|\hat{\mathbf{g}}_{n}\|^4_2 \end{matrix}\!\right]. \nonumber
\end{align}

\hrulefill
%%\vspace{-4mm}
\end{figure*}

\begin{itemize}
\item Since $\mathbf{z}_n\!=\!\left[ z_{n,1}, z_{n,2} \right]^\text{T}$ consists of two \emph{i.i.d.} circularly symmetric complex Gaussian random variables, we  get 
\begin{align}\label{eq:variance_1}
{s}^\text{H}\hat{\mathbf{g}}_{n}^\text{H}\mathbf{z}_n \sim \mathcal{CN}\left( 0, \left\|{s}\hat{\mathbf{g}}_{n}\right\|_2^2 \sigma_0^2 \right),
\end{align}
and
\begin{align}\label{eq:variance_2}
{s}^\text{H}\hat{\mathbf{e}}_{n}^\text{H}\mathbf{z}_n \sim \mathcal{CN}\left( 0, \left\|{s}\hat{\mathbf{e}}_{n}\right\|_2^2 \sigma_0^2 \right).
\end{align}

\item By splitting the real part and imaginary part, we obtain
\begin{equation}\label{eq:real_imaginary}
\!\!\!\!\!\!\!\!\!\left\{
\begin{aligned}
&\operatorname{Re}\{{s}^\text{H}\hat{\mathbf{g}}_{n}^\text{H}\mathbf{z}_n\} \!=\! \operatorname{Re}\{{s}^\text{H}\hat{\mathbf{g}}_{n}^\text{H}\}\operatorname{Re}\{\mathbf{z}_n\} \!-\! \operatorname{Im}\{{s}^\text{H}\hat{\mathbf{g}}_{n}^\text{H}\}\operatorname{Im}\{\mathbf{z}_n\}, \\
&\operatorname{Im}\{{s}^\text{H}\hat{\mathbf{g}}_{n}^\text{H}\mathbf{z}_n\} \!=\! \operatorname{Re}\{{s}^\text{H}\hat{\mathbf{g}}_{n}^\text{H}\}\operatorname{Im}\{\mathbf{z}_n\} \!+\! \operatorname{Im}\{{s}^\text{H}\hat{\mathbf{g}}_{n}^\text{H}\}\operatorname{Re}\{\mathbf{z}_n\}, \\
&\operatorname{Re}\{{s}^\text{H}\hat{\mathbf{e}}_{n}^\text{H}\mathbf{z}_n\} \!=\! \operatorname{Re}\{{s}^\text{H}\hat{\mathbf{e}}_{n}^\text{H}\}\operatorname{Re}\{\mathbf{z}_n\} \!-\! \operatorname{Im}\{{s}^\text{H}\hat{\mathbf{e}}_{n}^\text{H}\}\operatorname{Im}\{\mathbf{z}_n\}, \\
&\operatorname{Re}\{{s}^\text{H}\hat{\mathbf{g}}_{n}^\text{H}{s}\hat{\mathbf{e}}_{n}\} \!=\! |{s}|^2\operatorname{Re}\{\hat{\mathbf{g}}_{n}^\text{H}\hat{\mathbf{e}}_{n}\} \\
&~~~~~~~~~~\!= \operatorname{Re}\{{s}^\text{H}\hat{\mathbf{g}}_{n}^\text{H}\}\operatorname{Re}\{{s}\hat{\mathbf{e}}_{n}\} \!-\! \operatorname{Im}\{{s}^\text{H}\hat{\mathbf{g}}_{n}^\text{H}\}\operatorname{Im}\{{s}\hat{\mathbf{e}}_{n}\}, \\
&\operatorname{Im}\{{s}^\text{H}\hat{\mathbf{g}}_{n}^\text{H}{s}\hat{\mathbf{e}}_{n}\} \!=\! |{s}|^2\operatorname{Im}\{\hat{\mathbf{g}}_{n}^\text{H}\hat{\mathbf{e}}_{n}\} \\
&~~~~~~~~~~\!= \operatorname{Re}\{{s}^\text{H}\hat{\mathbf{g}}_{n}^\text{H}\}\operatorname{Im}\{{s}\hat{\mathbf{e}}_{n}\} \!+\! \operatorname{Im}\{{s}^\text{H}\hat{\mathbf{g}}_{n}^\text{H}\}\operatorname{Re}\{{s}\hat{\mathbf{e}}_{n}\}.
\end{aligned}\right.\!\!\!\!\!
\end{equation}

\item Combining \eqref{eq:variance_1}, \eqref{eq:variance_2} and \eqref{eq:real_imaginary} yields
\begin{equation}
\!\!\!\!\!\!\left\{
\begin{aligned}
&\mathbb{E}\left[\operatorname{Re}\{{s}^\text{H}\hat{\mathbf{g}}_{n}^\text{H}\mathbf{z}_n\}^2\right] \!=\! \mathbb{E}\left[\operatorname{Im}\{{s}^\text{H}\hat{\mathbf{g}}_{n}^\text{H}\mathbf{z}_n\}^2\right] \!=\! \frac{|{s}|^2 \sigma_0^2}{2}\left\|\hat{\mathbf{g}}_{n}\right\|_2^2, \\
&\mathbb{E}\left[\operatorname{Re}\{{s}^\text{H}\hat{\mathbf{e}}_{n}^\text{H}\mathbf{z}_n\}^2\right] \!=\! \frac{|{s}|^2 \sigma_0^2}{2}\left\|\hat{\mathbf{e}}_{n}\right\|_2^2, \\
&\mathbb{E}\left[\operatorname{Re}\{{s}^\text{H}\hat{\mathbf{g}}_{n}^\text{H}\mathbf{z}_n\}\!\cdot\!\operatorname{Im}\{{s}^\text{H}\hat{\mathbf{g}}_{n}^\text{H}\mathbf{z}_n\}\right] = 0, \\
&\mathbb{E}\left[\operatorname{Re}\{{s}^\text{H}\hat{\mathbf{g}}_{n}^\text{H}\mathbf{z}_n\}\!\cdot\!\operatorname{Re}\{{s}^\text{H}\hat{\mathbf{e}}_{n}^\text{H}\mathbf{z}_n\} \right] = \frac{|{s}|^2 \sigma_0^2}{2}\operatorname{Re}\{\hat{\mathbf{g}}_{n}^\text{H}\hat{\mathbf{e}}_{n}\} , \\
& \mathbb{E}\left[\operatorname{Im}\{{s}^\text{H}\hat{\mathbf{g}}_{n}^\text{H}\mathbf{z}_n\}\!\cdot\!\operatorname{Re}\{{s}^\text{H}\hat{\mathbf{e}}_{n}^\text{H}\mathbf{z}_n\} \right] = \frac{|{s}|^2 \sigma_0^2}{2}\operatorname{Im}\{\hat{\mathbf{g}}_{n}^\text{H}\hat{\mathbf{e}}_{n}\}.
\end{aligned}\right.\!\!\!\!\!
\end{equation}
Hence, we have
\begin{align}\label{eq:expectation_noise}
\mathbb{E}\!\left\{\!\!\left[\begin{matrix} \operatorname{Re}\{{s}^\text{H}\hat{\mathbf{g}}_{n}^\text{H}\mathbf{z}_n\} \\
\operatorname{Im}\{{s}^\text{H}\hat{\mathbf{g}}_{n}^\text{H}\mathbf{z}_n\} \\
\operatorname{Re}\{{s}^\text{H}\hat{\mathbf{e}}_{n}^\text{H}\mathbf{z}_n\} \end{matrix}\right]\!\!\cdot\!\!\left[\begin{matrix} \operatorname{Re}\{{s}^\text{H}\hat{\mathbf{g}}_{n}^\text{H}\mathbf{z}_n\} \\
\operatorname{Im}\{{s}^\text{H}\hat{\mathbf{g}}_{n}^\text{H}\mathbf{z}_n\} \\
\operatorname{Re}\{{s}^\text{H}\hat{\mathbf{e}}_{n}^\text{H}\mathbf{z}_n\} \end{matrix}\right]^\text{\!\!T} \!\right\} \!=\!\frac{\sigma_0^4}{4} \mathbf{I}(\hat{\boldsymbol{\psi}}_{n\!-\!1},\!\mathbf{W}_n).
\end{align}

\item Plugging \eqref{eq:expectation_noise} into \eqref{eq:covariance} yields the result of step $(a)$.
\end{itemize}

Let $\{ \mathcal{G}_n: n \ge 0 \}$ be an increasing sequence of $\sigma$-fields of $\{ \hat{\boldsymbol{\psi}}_{0}, \hat{\boldsymbol{\psi}}_{1}, \hat{\boldsymbol{\psi}}_{2}, \ldots \}$, i.e., $\mathcal{G}_{n-1}\!\subset\!\mathcal{G}_n$, where $\mathcal{G}_{0} \!\overset{\Delta}{=}\! \sigma(\hat{\boldsymbol{\psi}}_{0})$ and $\mathcal{G}_n \!\overset{\Delta}{=}\! \sigma(\hat{\boldsymbol{\psi}}_{0}, \hat{\mathbf{z}}_{1}, \ldots, \hat{\mathbf{z}}_{n}) $ for $n \ge 1$. Because the $\hat{\mathbf{z}}_n$'s are composed of \emph{i.i.d.} circularly symmetric complex Gaussian random variables with zero mean, $\hat{\mathbf{z}}_n$ is independent of $\mathcal{G}_{n-1}$, and $\hat{\boldsymbol{\psi}}_{n-1} \!\in\! \mathcal{G}_{n-1}$. Hence, we have
\begin{align}\label{eq_fil2}
	&~\mathbb{E} \left[ \left. \mathbf{f}\left(\hat{\boldsymbol{\psi}}_{n-1}, \boldsymbol{\psi}\right) + \hat{\mathbf{z}}_n \right| \mathcal{G}_{n-1} \right] \\
	= &~\mathbb{E} \left[ \left. \mathbf{f}\left(\hat{\boldsymbol{\psi}}_{n-1}, \boldsymbol{\psi}\right) \right| \mathcal{G}_{n-1} \right] + \mathbb{E} \left[ \left. \hat{\mathbf{z}}_n \right| \mathcal{G}_{n-1} \right] = \mathbf{f}\left(\hat{\boldsymbol{\psi}}_{n-1}, \boldsymbol{\psi}\right), \nonumber
\end{align}
for $n \ge 1$.

Theorem 5.2.1 in \cite[Section 5.2.1]{kushner2003stochastic} provided the sufficient conditions under which $\hat{x}_n$ converges to a unique point within a set of stable points with probability one. We will prove that when the step-size $a_n$ is given by (\ref{eq:stepsize}) with any $\alpha > 0$ and $N_0 \ge 0$, our algorithm satisfies its sufficient conditions below:
\begin{itemize}
\item[1)] Step-size requirements:
\begin{equation}\left\{\begin{aligned}&a_n = \frac{\alpha}{n + N_0} \rightarrow 0, \\
& \sum\limits_{n=1}^\infty a_n = \sum\limits_{n=1}^\infty \frac{\alpha}{n+N_0} = \infty, \\
& \sum\limits_{n=1}^\infty a_n^2 = \sum\limits_{n=1}^\infty \frac{\alpha^2}{(n+N_0)^2} \le \sum\limits_{i=1}^\infty \frac{\alpha^2}{i^2} < \infty.\end{aligned}\right.\end{equation}

\item[2)] We need to prove that $\sup\nolimits_n \mathbb{E} \left[ \left\|\mathbf{f}\left(\hat{\boldsymbol{\psi}}_{n-1}, \boldsymbol{\psi}\right) + \hat{\mathbf{z}}_n\right\|_2^2 \right] < \infty$. \\
From \eqref{eq:newton4} and \eqref{eq:noise_z2}, we have
\begin{align}\label{eq_expectation_yn} & \mathbb{E} \left[ \left\|\mathbf{f}\left(\hat{\boldsymbol{\psi}}_{n-1}, \boldsymbol{\psi}\right) + \hat{\mathbf{z}}_n\right\|_2^2 \right] \\
= & \mathbb{E} \left[ \left\|\mathbf{f}\left(\hat{\boldsymbol{\psi}}_{n-1}, \boldsymbol{\psi}\right)\right\|_2^2 + 2 \mathbf{f}\left(\hat{\boldsymbol{\psi}}_{n-1}, \boldsymbol{\psi}\right)^\text{T} \hat{\mathbf{z}}_n + \left\|\hat{\mathbf{z}}_n\right\|_2^2 \right] \nonumber \\ \overset{(a)}{=} & \mathbb{E} \left[ \left\|\mathbf{f}\left(\hat{\boldsymbol{\psi}}_{n-1}, \boldsymbol{\psi}\right)\right\|_2^2 \right] + \operatorname{tr}\left(\mathbf{I}(\hat{\boldsymbol{\psi}}_{n\!-\!1},\!\mathbf{W}_n)^{-1}\right),\nonumber \end{align}
where step $(a)$ is due to \eqref{eq:noise_z2} and that $\hat{\mathbf{z}}_n$ is independent of $\mathbf{f}\left(\hat{\boldsymbol{\psi}}_{n-1}, \boldsymbol{\psi}\right)$. \\ From \eqref{eq:gra} and (\ref{eq:function_f2}), we have
\begin{align}\label{eq_ub_fx}\left\|\mathbf{f}\left(\hat{\boldsymbol{\psi}}_{n-1}, \boldsymbol{\psi}\right)\right\|_2^2 \le &~\left\|\mathbf{I}(\hat{\boldsymbol{\psi}}_{n\!-\!1},\!\mathbf{W}_n)^{-1}\right\|_\text{F}^2 \\
&~\!\!\!\!\!\!\!\!\!\!\!\!\!\!\!\!\!\!\!\!\!\!\!\!\!\!\!\!\!\!\!\!\!\cdot\!\left\|\frac{2|{s}|^2}{\sigma_0^2}\!\!\left[\begin{matrix} \operatorname{Re}\{\hat{\mathbf{g}}_{n}^\text{H}(\beta_n\mathbf{W}_n^\text{H}\mathbf{a}(x_n)\!-\!\hat{\beta}_{n\!-\!1}\hat{\mathbf{g}}_{n})\} \\
\operatorname{Im}\{\hat{\mathbf{g}}_{n}^\text{H}(\beta_n\mathbf{W}_n^\text{H}\mathbf{a}(x_n)\!-\!\hat{\beta}_{n\!-\!1}\hat{\mathbf{g}}_{n})\} \\
\operatorname{Re}\{\hat{\mathbf{e}}_{n}^\text{H}(\beta_n\mathbf{W}_n^\text{H}\mathbf{a}(x_n)\!-\!\hat{\beta}_{n\!-\!1}\hat{\mathbf{g}}_{n})\} \end{matrix}\right]\right\|_2^2. \nonumber \end{align}
Due to that the Fisher information matrix is invertible, we get 
\begin{align}\label{eq:proof1_1}\left\|\mathbf{I}(\hat{\boldsymbol{\psi}}_{n\!-\!1},\!\mathbf{W}_n)^{-1}\right\|_\text{F}^2 < \infty.\end{align} 
In addition, since $\mathbf{W}_{n}\!=\!\left[\mathbf{w}_{n,1},\mathbf{w}_{n,2}\right]$, $\hat{\mathbf{g}}_{n}\!=\!\mathbf{W}_n^\text{H}\mathbf{a}(\hat{x}_{n\!-\!1})$, $\hat{\mathbf{e}}_{n}\!=\!\hat{\beta}_{n\!-\!1}\mathbf{W}_n^\text{H}\dot{\mathbf{a}}(\hat{x}_{n\!-\!1})$,
\begin{align}
\left|\mathbf{w}_{n,i}^\text{H} \mathbf{a}(x)\right| &~= \left|\sum_{m=1}^M \frac{1}{\sqrt{M}}e^{-j (\frac{2\pi d}{\lambda} x - w_{mn,i})}\right| \\
&~\le \sum_{m=1}^M \frac{1}{\sqrt{M}} \left|e^{-j (\frac{2\pi d}{\lambda} (m-1)x - w_{mn,i})}\right| \nonumber \\
&~= \sqrt{M}  < \infty, \nonumber
\end{align}
and
\begin{align}
\left|\mathbf{w}_{n,i}^\text{H} \dot{\mathbf{a}}(x)\right| &~= \left|\sum_{m=1}^M -j \frac{2\pi d (m-1)}{\lambda\sqrt{M}} e^{-j (\frac{2\pi d}{\lambda} x - w_{mn,i})}\right| \nonumber \\
&~\le \sum_{m=1}^M \frac{2\pi d (m-1)}{\lambda\sqrt{M}} \left|e^{-j (\frac{2\pi d}{\lambda} (m-1)x - w_{mn,i})}\right| \nonumber \\
&~= \frac{\pi d \sqrt{M}(M-1)}{\lambda}  < \infty, 
\end{align}
for $i = 1, 2$ and all possible $x$, we can get
\begin{align}\label{eq:proof1_2}
\!\!\!\left\|\frac{2|{s}|^2}{\sigma_0^2}\!\!\left[\begin{matrix} \operatorname{Re}\{\hat{\mathbf{g}}_{n}^\text{H}(\beta_n\mathbf{W}_n^\text{H}\mathbf{a}(x_n)\!-\!\hat{\beta}_{n\!-\!1}\hat{\mathbf{g}}_{n})\} \\
\operatorname{Im}\{\hat{\mathbf{g}}_{n}^\text{H}(\beta_n\mathbf{W}_n^\text{H}\mathbf{a}(x_n)\!-\!\hat{\beta}_{n\!-\!1}\hat{\mathbf{g}}_{n})\} \\
\operatorname{Re}\{\hat{\mathbf{e}}_{n}^\text{H}(\beta_n\mathbf{W}_n^\text{H}\mathbf{a}(x_n)\!-\!\hat{\beta}_{n\!-\!1}\hat{\mathbf{g}}_{n})\} \end{matrix}\right]\right\|_2^2\!<\! \infty.\!\!\!
\end{align}
Hence, combining \eqref{eq:proof1_1} and \eqref{eq:proof1_2}, we have
\begin{align}\label{eq:expectation_fx}\mathbb{E} \left[ \left\|\mathbf{f}\left(\hat{\boldsymbol{\psi}}_{n-1}, \boldsymbol{\psi}\right)\right\|_2^2 \right] < \infty. \end{align} 
From \eqref{eq:proof1_1}, we can get $ \operatorname{tr}\left(\mathbf{I}(\hat{\boldsymbol{\psi}}_{n\!-\!1},\!\mathbf{W}_n)^{-1}\right)  < \infty$. Then, we can obtain that
\begin{align}\sup\nolimits_n \mathbb{E} \left[ \left\|\mathbf{f}\left(\hat{\boldsymbol{\psi}}_{n-1}, \boldsymbol{\psi}\right) + \hat{\mathbf{z}}_n\right\|_2^2 \right] < \infty.\end{align}

\item[3)] The function $\mathbf{f}\left(\hat{\boldsymbol{\psi}}_{n-1}, \boldsymbol{\psi}\right)$ should be continuous with respect to $\hat{\boldsymbol{\psi}}_{n-1}$.\\
By using \eqref{eq:est_ctrl}, we have 
\begin{align}
\!\!\!\!\mathbf{W}_n^\text{H}\mathbf{a}(x) \!=\! \left[\begin{matrix} \sum_{m=1}^M \frac{1}{\sqrt{M}}e^{-j \frac{2\pi d}{\lambda} (m-1) (x - \hat{x}_{n-1} + \delta^*)} \\ \sum_{m=1}^M \frac{1}{\sqrt{M}}e^{- j \frac{2\pi d}{\lambda} (m-1) (x - \hat{x}_{n-1} - \delta^*)} \end{matrix}\right]\!\!.\!\!\!
\end{align}
Since $e^{-j \frac{2\pi d}{\lambda} (m-1) (x - \hat{x}_{n-1} \pm \delta^*)}$ is continuous with respect to $\hat{x}_{n-1}$, and $\mathbf{W}_n^\text{H}\mathbf{a}(x)$ is the summation of a finite amount of $e^{-j \frac{2\pi d}{\lambda} (m-1) (x - \hat{x}_{n-1} \pm \delta^*)}, m = 1,\ldots, M$, we can get that $\mathbf{W}_n^\text{H}\mathbf{a}(x)$ is continuous with respect to $\hat{\boldsymbol{\psi}}_{n-1} \!=\! [\hat{\beta}^\text{re}_{n-1}, \hat{\beta}^\text{im}_{n-1}, \hat{x}_{n-1}]^\text{T}$. Similarly, we can prove that $\hat{\mathbf{g}}_{n}\!=\!\mathbf{W}_n^\text{H}\mathbf{a}(\hat{x}_{n\!-\!1})$, $\hat{\mathbf{e}}_{n}\!=\!\hat{\beta}_{n\!-\!1}\mathbf{W}_n^\text{H}\dot{\mathbf{a}}(\hat{x}_{n\!-\!1})$, $l_n = \|\hat{\mathbf{g}}_{n}\|_2 \|\hat{\mathbf{e}}_{n} \|_2$, and $c_n = \hat{\mathbf{g}}_{n}^\text{H}\hat{\mathbf{e}}_{n}$ are all continuous with respect to $\hat{\boldsymbol{\psi}}_{n-1}$.  

From \eqref{eq:function_f}, it can be observed that $\mathbf{f}\left(\hat{\boldsymbol{\psi}}_{n-1}, \boldsymbol{\psi}\right)$ is composed of finite numbers of $\mathbf{W}_n^\text{H}\mathbf{a}(x_n), \hat{\mathbf{g}}_{n}, \hat{\mathbf{e}}_{n}, l_n$ and $c_n$. Hence, we can conclude that $\mathbf{f}\left(\hat{\boldsymbol{\psi}}_{n-1}, \boldsymbol{\psi}\right)$ is continuous with respect to $\hat{\boldsymbol{\psi}}_{n-1}$.

\item[4)] Let $\boldsymbol\gamma_n = \mathbb{E} \left[ \left.\mathbf{f}\left(\hat{\boldsymbol{\psi}}_{n-1}, \boldsymbol{\psi}\right) + \hat{\mathbf{z}}_n\right| \mathcal{G}_{n-1} \right] - \mathbf{f}\left(\hat{\boldsymbol{\psi}}_{n-1}, \boldsymbol{\psi}\right)$. We need to prove that $\sum_{n=1}^\infty \left\| a_n \boldsymbol\gamma_n \right\|_2 < \infty$ with probability one. \\
From (\ref{eq_fil2}), we get $\boldsymbol\gamma_n = \mathbf{0}$ for all $n \ge 1$. So we have $\sum_{n=1}^\infty \left\| a_n \boldsymbol\gamma_n \right\|_2 = 0 < \infty$ with probability one.

% \item[5)] The set of stable points for the ODE \eqref{eq_ODE} should be obtained. \\
% According to \eqref{eq:stable_points}, $\mathcal{S}(x)$ contains the local optimal stable points for the ODE \eqref{eq_ODE}. What's more, the boundary point 1 (or $-1$) is a stable point when $f(1, x) \ge 0$ (or $f(-1, x) \le 0$). Hence, the set of stable points is a subset of $\mathcal{S}(x) \cup \{ -1\} \cup \{1 \}$.
\end{itemize}

By Theorem 5.2.1 in \cite{kushner2003stochastic}, $\hat{x}_n$ converges to a unique stable point within the stable points set $\mathcal{S}$ with probability one. % This completes the proof of Theorem~\ref{th_convergence}.

\section{Proof of Theorem~\ref{th:lock}}\label{proof:lock}

Theorem \ref{th:lock} is proven in three steps:

\emph{\textbf{Step 1:} We will construct two continuous processes based on the discrete process $ \hat{\boldsymbol{\psi}}_n = [\hat{\beta}^\text{re}_{n}, \hat{\beta}^\text{im}_{n}, \hat{x}_{n}]^\text{\emph{T}}$, i.e., $\bar{\boldsymbol{\psi}}(t) \!\overset{\Delta}{=}\! [\bar{\beta}^\text{re}(t), \bar{\beta}^\text{im}(t), \bar{x}(t)]^\text{\emph{T}}$ and $\tilde{\boldsymbol{\psi}}^n(t) \!\overset{\Delta}{=}\! [\tilde{\beta}^{\text{re},n}(t), \tilde{\beta}^{\text{im},n}(t), \tilde{x}^n(t)]^\text{\emph{T}}$.}  

Define the discrete time parameters: $t_{0} \overset{\Delta}{=} 0$, $t_n \overset{\Delta}{=} \sum_{i=1}^n a_{i}$, $n \ge 1$. The first continuous process $\bar{\boldsymbol{\psi}}(t), t \ge 0$ is the linear interpolation of the sequence $\hat{\boldsymbol{\psi}}_n, n \ge 0$, where $\bar{\boldsymbol{\psi}}(t_n) = \hat{\boldsymbol{\psi}}_n, n \ge 0$ and $\bar{\boldsymbol{\psi}}(t)$ is given by
\begin{equation}\label{eq_continuous}
\begin{aligned}
	\bar{\boldsymbol{\psi}}(t)\!=\!\bar{\boldsymbol{\psi}}(t_n)\!+\!\frac{(t\!-\!t_n)}{a_{n+1}}\left[\bar{\boldsymbol{\psi}}(t_{n+1})\!-\!\bar{\boldsymbol{\psi}}(t_n)\right], t\!\in\![t_n, t_{n+1}].
\end{aligned}
\end{equation}

The second continuous process $\tilde{\boldsymbol{\psi}}^n(t)$ is a solution of the following ordinary differential equation (ODE):
\begin{align}\label{eq_ODE}
\frac{d \tilde{\boldsymbol{\psi}}^n(t)}{dt} = \mathbf{f}\left(\tilde{\boldsymbol{\psi}}^n(t), \boldsymbol{\psi}\right),
\end{align}
for $t \in [t_n, \infty)$, where $\tilde{\boldsymbol{\psi}}^n(t_n) = \bar{\boldsymbol{\psi}}(t_n) = \hat{\boldsymbol{\psi}}_n, n \ge 0$. Hence, we have 
\begin{equation}\label{eq_ODE_new}
\begin{aligned}
	\tilde{\boldsymbol{\psi}}^n(t) & = \bar{\boldsymbol{\psi}}(t_n) + \int_{t_n}^t \mathbf{f}\left(\tilde{\boldsymbol{\psi}}^n(v), \boldsymbol{\psi}\right) dv, t \ge t_n.
\end{aligned}
\end{equation}

\emph{\textbf{Step 2:} By using the continuous processes $\bar{\boldsymbol{\psi}}(t)$ and $\tilde{\boldsymbol{\psi}}^n(t)$, we will form a sufficient condition for the convergence of the discrete process $\hat{x}_n$.}

We first construct a time-invariant set $\mathcal{I}$ that contains the real direction $x$ within the mainlobe, i.e., $x \in \mathcal{I} \subset \mathcal{B}(x)$. Pick $\delta$ such that\footnote{The boundary of the set $\mathcal{B}(x)$ is denoted by $\partial \mathcal{B}(x)$.}
\begin{equation}\label{eq:x_b}
\inf_{v \in \partial \mathcal{B}(x), t \ge 0} \left| v - \tilde{x}^0(t) \right| = \inf_{v \in \partial \mathcal{B}(x)} \left| v - \hat{x}_\text{b} \right| > \delta > 0,
\end{equation}
where $\hat{x}_\text{b} = \tilde{x}^0(t_\text{b})$ is the beam direction of the process $\tilde{\boldsymbol{\psi}}^0(t)$ that is closest to the boundary of the mainlobe (see e.g., Fig. \ref{fig_invariant_set}). Note that when $t \ge t_b$, the solution $\tilde{\boldsymbol{\psi}}^0(t)$ of the ODE (\ref{eq_ODE}) will approach the real channel coefficient $\beta$ and beam direction $x$ monotonically as time $t$ increases.
Hence, the invariant set $\mathcal{I}$ can be constructed as follows:
\begin{align}
	\mathcal{I} = \Big( x - |x - \hat{x}_\text{b}| - \delta,~x + |x - \hat{x}_\text{b}| + \delta \Big) \subset \mathcal{B}(x).
\end{align}
An example of the invariant set $\mathcal{I}$ is illustrated in Fig. \ref{fig_invariant_set}.

\begin{figure}[!t]
%\vspace{-5mm}
\centering
\includegraphics[width=4.5cm]{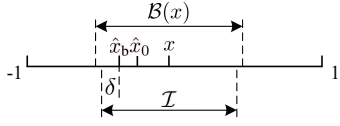}
%\vspace{-3mm}
\caption{An illustration of the invariant set $\mathcal{I}$.}
\label{fig_invariant_set}
%\vspace{-5mm}
\end{figure}

Then, we will establish a sufficient condition in Lemma \ref{le_sufficient} that ensures $\hat{x}_n\!\in\!\mathcal{I}~\text{for}~n\!\ge\!0$, and hence from Corollary 2.5 in \cite{borkar2008stochastic}, we can obtain that $\{\hat{x}_n\}$ converges to $x$.
Before giving Lemma \ref{le_sufficient}, let us provide some definitions first:
\begin{itemize}
\item Pick $T > 0$ such that the solution $\tilde{\boldsymbol{\psi}}^0(t), t\ge 0$ of the ODE (\ref{eq_ODE}) with $\tilde{\boldsymbol{\psi}}^0(0)\!=\![\hat{\beta}^\text{re}_{0}, \hat{\beta}^\text{im}_{0}, \hat{x}_{0}]^\text{T}$ satisfies $\inf_{v \in \partial \mathcal{B}}\left| v\!-\!\tilde{x}^0(t) \right| \ge 2\delta$ for $t \ge T$. Since when $t \ge t_b$, $\tilde{x}^0(t)$ will approach the real beam direction $x$ monotonically as time $t$ increases, one possible $T$ is given by
\begin{align}\label{eq_T}
T= \arg\min\limits_{t\in[t_\text{b}, \infty]}\left|~\!\!\left|\!\left[\int_{t_\text{b}}^t \mathbf{f}\left(\tilde{\boldsymbol{\psi}}^0(v), \boldsymbol{\psi}\right) dv\right]_3\right| - \delta\right|,
\end{align}
where $[\cdot]_{i}$ obtains the $i$-th element of the vector.

\item Let $T_0 \overset{\Delta}{=} 0$ and $T_{m+1} \overset{\Delta}{=} \min \left\{ t_i: t_i \ge T_n + T, i \ge 0 \right\}$ for $m \ge 0$. Then $T_{m+1} - T_m \in [T, T+a_1]$ and $T_m = t_{\tilde{n}(m)}$ for some $\tilde{n}(m) \uparrow \infty$, where $\tilde{n}(0) = 0$. Let $\tilde{\boldsymbol{\psi}}^{\tilde{n}(m)}(t)$ denote the solution of ODE (\ref{eq_ODE}) for $t \in I_m \overset{\Delta}{=} \left[ T_m, T_{m+1} \right]$ with $\tilde{\boldsymbol{\psi}}^{\tilde{n}(m)}(T_m) = \bar{\boldsymbol{\psi}}(T_m)$, $m \ge 0$.
\end{itemize}
Hence, we can obtain the following lemma:
\begin{lemma}\label{le_sufficient}
If $ \underset{t\in I_m}{\sup} \left| \bar{x}(t) - \tilde{x}^{\tilde{n}(m)}(t)\right| \le \delta$ for all $m \ge 0$, then $\hat{x}_n \in \mathcal{I}~\text{for all}~n \ge 0$.
\end{lemma}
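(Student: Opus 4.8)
The plan is to argue by induction over the blocks $I_m=[T_m,T_{m+1}]$, $m\ge 0$. Since $\bigcup_{m\ge0}I_m=[0,\infty)$ and the interpolation times $t_n$ increase to $\infty$, it suffices to prove that $\hat{x}_n=\bar{x}(t_n)\in\mathcal{I}$ for every $n$ with $t_n\in I_m$, one block at a time. The invariant I would carry from block to block is that the ODE solution $\tilde{\boldsymbol{\psi}}^{\tilde{n}(m)}(\cdot)$ on $I_m$ (which by construction starts exactly at $\bar{\boldsymbol{\psi}}(T_m)$) stays beam-wise in $\mathcal{I}$ with a $\delta$-margin, i.e. $|\tilde{x}^{\tilde{n}(m)}(t)-x|\le|\hat{x}_\text{b}-x|$ for all $t\in I_m$, and has entered the $2\delta$-interior of the mainlobe at the right endpoint, $\inf_{v\in\partial\mathcal{B}(x)}|v-\tilde{x}^{\tilde{n}(m)}(T_{m+1})|\ge 2\delta$. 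Granting this invariant, the lemma's hypothesis $\sup_{t\in I_m}|\bar{x}(t)-\tilde{x}^{\tilde{n}(m)}(t)|\le\delta$ finishes that block: it gives $|\bar{x}(t)-x|\le|\hat{x}_\text{b}-x|+\delta$ on $I_m$, hence $\bar{x}(t)\in\mathcal{I}$ and $\hat{x}_n\in\mathcal{I}$ for all $t_n\in I_m$; it also gives $\inf_{v\in\partial\mathcal{B}(x)}|v-\bar{x}(T_{m+1})|\ge\delta$, which is the initial condition needed to launch block $m+1$.

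The base case $m=0$ is essentially built into the definitions: $\tilde{\boldsymbol{\psi}}^{0}$ starts at $\hat{\boldsymbol{\psi}}_0$ with $\hat{x}_0\in\mathcal{B}(x)$ by hypothesis (i) of Theorem~\ref{th:lock}; by the definition of $\hat{x}_\text{b}$ in \eqref{eq:x_b} together with the monotone-approach property of the ODE noted just below it (for $t\ge t_\text{b}$ the coordinate $\tilde{x}^{0}(t)$ moves toward $x$ monotonically), the deviation $|\tilde{x}^{0}(t)-x|$ is maximal at $t_\text{b}$ and thus never exceeds $|\hat{x}_\text{b}-x|$; and since $T_1\ge T$, the choice of $T$ in \eqref{eq_T} delivers the second half of the invariant at $T_1$.

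For the induction step I would assume the invariant through block $m-1$, so that $\bar{\boldsymbol{\psi}}(T_m)=\hat{\boldsymbol{\psi}}_{\tilde{n}(m)}$ has its beam coordinate in the $\delta$-interior of $\mathcal{B}(x)$, and then re-derive the two bounds for the ODE trajectory launched from that point. Two structural features of the limiting vector field $\mathbf{f}(\cdot,\boldsymbol{\psi})$ make this possible. First, inside the mainlobe $\mathbf{f}$ pushes the beam estimate toward $x$, so a trajectory whose beam coordinate starts in $[\,x-|\hat{x}_\text{b}-x|,\;x+|\hat{x}_\text{b}-x|\,]$ — an interval contained in the $\delta$-interior because $|\hat{x}_\text{b}-x|+\delta<\frac{\lambda}{Md}$ — cannot make an excursion toward $\partial\mathcal{B}(x)$ larger than the reference trajectory's. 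Second, near $\boldsymbol{\psi}$ the Jacobian $\partial\mathbf{f}/\partial\hat{\boldsymbol{\psi}}^\text{T}$ equals $-\mathbf{I}$ (as computed in the stable-point analysis of Section~\ref{sec:analysis}), so the flow is locally a contraction and reaches the $2\delta$-interior within a fixed time, which is exactly what an admissible $T$ encodes. Closing the induction yields $\hat{x}_n\in\mathcal{I}$ for all $n\ge0$; the subsequent appeal to Corollary~2.5 of \cite{borkar2008stochastic}, mentioned before the lemma, is a separate step that upgrades this to $\hat{x}_n\to x$.

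I expect the genuine difficulty to sit entirely in that comparison step. The subtlety is that $\tilde{\boldsymbol{\psi}}^{\tilde{n}(m)}(\cdot)$ is \emph{restarted} at the noise-perturbed point $\bar{\boldsymbol{\psi}}(T_m)$, whose channel-coefficient coordinate we do not directly control (the lemma's hypothesis bounds only the beam coordinate over each block), and a badly placed $\hat{\beta}$ can by itself push $\hat{x}$ away from $x$ — precisely the mechanism behind the early ``hump'' at $t_\text{b}$. Making the induction self-sustaining therefore requires a monotonicity/contraction estimate for $\dot{\tilde{\boldsymbol{\psi}}}=\mathbf{f}(\tilde{\boldsymbol{\psi}},\boldsymbol{\psi})$ that is uniform over restart points in the $\delta$-interior of the mainlobe, and, if needed, an auxiliary bound keeping $\bar{\beta}(T_m)$ inside the basin where that estimate applies. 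Everything else — the block decomposition, the base case, and the additive $\delta$-bookkeeping — is routine.
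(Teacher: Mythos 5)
Your proposal follows essentially the same route as the paper's proof: an induction over the blocks $I_m$, using the $\delta$-closeness hypothesis together with the monotone approach of the ODE trajectory toward $x$ and the choice of $T$ to propagate both membership of $\bar{x}(t)$ in $\mathcal{I}$ and a restart condition at $T_{m+1}$. The one place where your bookkeeping does not close is the restart invariant: you carry forward only that $\bar{x}(T_{m+1})$ lies in the $\delta$-interior of the mainlobe $\mathcal{B}(x)$, but that set is strictly larger than $\big[x-|x-\hat{x}_\text{b}|,\,x+|x-\hat{x}_\text{b}|\big]$ (since $|x-\hat{x}_\text{b}|<\frac{\lambda}{Md}-\delta$ by \eqref{eq:x_b}), so the invariant $|\tilde{x}^{\tilde{n}(m+1)}(t)-x|\le|x-\hat{x}_\text{b}|$ that you need on the next block can already fail at its initial point $t=T_{m+1}$. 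The paper instead propagates the tighter condition $\bar{x}(T_{m+1})\in\big[x-|x-\hat{x}_\text{b}|,\,x+|x-\hat{x}_\text{b}|\big]$, which follows from $\tilde{x}^{\tilde{n}(m)}(T_{m+1})$ lying between $\hat{x}_\text{b}+\delta$ and $x$ (the content of \eqref{eq:le_sufficient_3}, i.e., exactly what $T$ in \eqref{eq_T} buys) combined with $\left|\bar{x}(T_{m+1})-\tilde{x}^{\tilde{n}(m)}(T_{m+1})\right|\le\delta$; with that replacement your additive $\delta$-accounting reproduces the paper's inequalities \eqref{eq_left}--\eqref{eq_right2}. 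You are also right that the remaining load-bearing step is a uniformity claim for the ODE flow over restart points (including the uncontrolled $\hat{\beta}$-coordinate); the paper does not prove this either, but simply asserts the sign conditions $\tilde{x}^{\tilde{n}(m)}(t)-\hat{x}_\text{b}\ge0$ and $x-\tilde{x}^{\tilde{n}(m)}(t)\ge0$ for the restarted trajectories.
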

\begin{proof}
See Appendix \ref{sec_proof_le_sufficient}
% The proof is omitted due to space limitation.
\end{proof}

\emph{\textbf{Step 3:} We will derive the probability lower bound for the condition in Lemma \ref{le_sufficient}, which is also a lower bound for $P\left( \left. \hat{x}_n\!\rightarrow\!x \right| \hat{x}_0\!\in\!\mathcal{B}\left(x\right) \right)$. }

We will derive the probability lower bound for the condition in Lemma \ref{le_sufficient}, which results in the following lemma:
\begin{lemma}\label{le_lower_bound}
If (i) the initial point satisfies $\hat{x}_0 \in \mathcal{B}(x)$, (ii) $a_n$ is given by (\ref{eq:stepsize}) with any $\alpha > 0$,
then there exist  $N_0 \ge 0$ and $C>0$ such that
\begin{equation}\label{eq_lock}
\begin{aligned}
P\left( \hat{x}_n \in \mathcal{I}, \forall n \ge 0 \right) \ge 1 - 6e^{-\frac{C|s|^2}{\alpha^2\sigma_0^2}}.
\end{aligned}
\end{equation}
\end{lemma}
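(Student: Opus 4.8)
The plan is to combine the deterministic reduction already established with a block-by-block stochastic-approximation comparison between the interpolated iterates $\bar{\boldsymbol{\psi}}(t)$ and the ODE solutions $\tilde{\boldsymbol{\psi}}^{\tilde n(m)}(t)$, and then to control the comparison error by a Gaussian tail estimate. By Lemma~\ref{le_sufficient}, the event $\{\hat x_n\in\mathcal I\ \text{for all}\ n\ge0\}$ contains $\bigcap_{m\ge0}\big\{\sup_{t\in I_m}|\bar x(t)-\tilde x^{\tilde n(m)}(t)|\le\delta\big\}$, so it suffices to show that, for suitable $N_0\ge0$ and $C>0$,
\[
P\Big(\exists\,m\ge0:\ \sup_{t\in I_m}\big|\bar x(t)-\tilde x^{\tilde n(m)}(t)\big|>\delta\Big)\ \le\ 6\,e^{-C|s|^2/(\alpha^2\sigma_0^2)}.
\]
I would first restrict to the compact set made of $\mathcal B(x)$ for the $x$-coordinate together with a bounded ball for the $\hat\beta$-coordinates, and record two uniform facts there: $\mathbf f(\cdot,\boldsymbol\psi)$ is Lipschitz with a constant $L$, and $\|\mathbf I(\hat{\boldsymbol\psi}_{n-1},\mathbf W_n)^{-1}\|_2\le c_0\sigma_0^2/|s|^2$. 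Both follow from the explicit structure of \eqref{eq:FIM}, the continuity and boundedness estimates already derived in the proof of Theorem~\ref{th:convergence} (in particular $|\mathbf w_{n,i}^{\text{H}}\mathbf a(x)|\le\sqrt M$ and $|\mathbf w_{n,i}^{\text{H}}\dot{\mathbf a}(x)|\le\frac{\pi d\sqrt M(M-1)}{\lambda}$), and the fact that the training beams \eqref{eq:est_ctrl} keep $l_n^2-|c_n|^2$ bounded away from $0$ throughout $\mathcal B(x)$.

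On a single block $I_m=[T_m,T_{m+1}]$, combining the recursion \eqref{eq:newton4} with the integral form \eqref{eq_ODE_new} gives, at the grid points $t_k$,
\[
\hat{\boldsymbol\psi}_k-\tilde{\boldsymbol\psi}^{\tilde n(m)}(t_k)=\sum_{i=\tilde n(m)+1}^{k}a_i\Big(\mathbf f(\hat{\boldsymbol\psi}_{i-1},\boldsymbol\psi)-\mathbf f(\tilde{\boldsymbol\psi}^{\tilde n(m)}(t_{i-1}),\boldsymbol\psi)\Big)+\sum_{i=\tilde n(m)+1}^{k}a_i\hat{\mathbf z}_i-\boldsymbol\rho_{m,k},
\]
where $\boldsymbol\rho_{m,k}$ is the Riemann-sum discretization error of $\int_{T_m}^{t_k}\mathbf f(\tilde{\boldsymbol\psi}^{\tilde n(m)}(v),\boldsymbol\psi)\,dv$, which by $L$-Lipschitzness is $O\big(a_{\tilde n(m)+1}\sum_{i\in I_m}a_i\big)$. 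Since $\sum_{i\in I_m}a_i\le T+a_1$, the discrete Gronwall inequality together with the interpolation rule \eqref{eq_continuous} yields a Gronwall factor $K=K(L,T)$ with
\[
\sup_{t\in I_m}\big|\bar x(t)-\tilde x^{\tilde n(m)}(t)\big|\ \le\ K\big(\mu_m+c_1(T+a_1)\,a_{\tilde n(m)+1}\big),\qquad \mu_m:=\max_{k\ge\tilde n(m)}\Big\|\sum_{i=\tilde n(m)+1}^{k}a_i\hat{\mathbf z}_i\Big\|_2.
\]
As $a_{\tilde n(m)+1}=\alpha/(\tilde n(m)+1+N_0)\to0$, one can fix $N_0$ large enough that $K\,c_1(T+a_1)a_{\tilde n(m)+1}\le\delta/2$ for every $m$; the bad event is then contained in $\bigcup_{m\ge0}\{\mu_m>\delta/(2K)\}$.

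Finally I would bound $\mu_m$ by Gaussian concentration. By \eqref{eq:noise_z2}, $\hat{\mathbf z}_i\mid\mathcal G_{i-1}\sim\mathcal N(\mathbf{0},\mathbf I(\hat{\boldsymbol\psi}_{i-1},\mathbf W_i)^{-1})$, so each of the three coordinates of the partial sums $\sum_i a_i\hat{\mathbf z}_i$ is a martingale with conditionally sub-Gaussian increments of variance proxy $\le a_i^2c_0\sigma_0^2/|s|^2$; a two-sided maximal Azuma--Hoeffding bound applied coordinatewise on $I_m$ gives
\[
P\big(\mu_m>\eta\big)\ \le\ 6\exp\!\Big(-\frac{\eta^2|s|^2}{2c_0\sigma_0^2\,V_m}\Big),\qquad V_m:=\sum_{i>\tilde n(m)}a_i^2\ \le\ \alpha^2\!\!\sum_{i>\tilde n(m)}\frac{1}{(i+N_0)^2}\ \le\ \frac{\alpha^2}{\tilde n(m)}.
\]
Since $t_n=\sum_{i\le n}a_i=\alpha\log n+O(1)$ while $T_m\ge mT$, the index $\tilde n(m)$ grows at least like $e^{mT/\alpha}$ up to a constant, so $V_m\le c_2\alpha^2e^{-mT/\alpha}$; taking $\eta=\delta/(2K)$ and summing over $m\ge0$, the bounds decay super-geometrically in $m$, the sum is dominated by the $m=0$ term, and a union bound gives $P(\text{bad})\le 6e^{-C|s|^2/(\alpha^2\sigma_0^2)}$ for an appropriate $C>0$ (the inequality being vacuous exactly when the right-hand side is $\ge1$). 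With the reduction above, this proves \eqref{eq_lock}.

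The step I expect to be the main obstacle is the uniformity of the constants $L$ and $c_0$ over the entire mainlobe: one must show that $\mathbf f(\cdot,\boldsymbol\psi)$ and the inverse Fisher information do not degenerate at any point of $\mathcal B(x)$, which is where the explicit form \eqref{eq:FIM} and the property that the beams \eqref{eq:est_ctrl} keep $l_n^2-|c_n|^2$ away from $0$ are essential. A second delicate point is the exact $\alpha^{-2}$ scaling in the exponent: $\alpha$ enters once through the noise magnitude $a_i\propto\alpha$ and once through $t_n\approx\alpha\log n$, and it is the second appearance that forces the per-block variances $V_m$ to shrink geometrically and thereby prevents the sum over blocks from swallowing the constant $C$ or the prefactor $6$.
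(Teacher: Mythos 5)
Your proposal reproduces the paper's overall architecture — the reduction via Lemma \ref{le_sufficient}, the block decomposition on $I_m$, the discrete Gronwall comparison between $\bar{\boldsymbol{\psi}}(t)$ and $\tilde{\boldsymbol{\psi}}^{\tilde{n}(m)}(t)$ with the discretization error forced below $\delta/2$ by taking $N_0$ large, and the coordinatewise two-sided Gaussian martingale maximal inequality yielding the prefactor $3\times 2=6$ — but it diverges in how the union bound over blocks is summed. The paper bounds the block-$m$ variance by the block increment $b(\tilde{n}(m))-b(\tilde{n}(m+1))$ and then uses the monotonicity of $G(v)=v^{-1}e^{-C/v}$ (Lemma \ref{le_sum}, valid once \eqref{eq_lock_constr2} forces $b(0)$ below the threshold) to telescope $\sum_m \exp\{-c/[b(\tilde{n}(m))-b(\tilde{n}(m+1))]\}\le \exp\{-c/b(0)\}$, which delivers the constant $6$ exactly and the $|s|^2/(\alpha^2\sigma_0^2)$ exponent directly since $b(0)\propto\alpha^2$. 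You instead bound the block variance by the full tail $V_m=\sum_{i>\tilde{n}(m)}a_i^2$ and exploit $t_n\approx\alpha\log n$ to get $\tilde{n}(m)\gtrsim e^{mT/\alpha}$, hence geometrically shrinking $V_m$ and a super-geometric series. This works, but it is looser in two respects you should patch: (i) at $m=0$ the bound $V_0\le\alpha^2/\tilde{n}(0)$ is undefined since $\tilde{n}(0)=0$, so you must use $V_0=b(0)$ directly (which is what actually sets the dominant term); and (ii) "the sum is dominated by the $m=0$ term" produces an extra multiplicative constant in front of $e^{-A}$ that must be absorbed by shrinking $C$, with the complementary small-$A$ regime handled by making the right-hand side of \eqref{eq_lock} exceed $1$ — your parenthetical gestures at this but the choice of $C$ making both regimes consistent should be exhibited. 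The paper's telescoping identity sidesteps both issues at the cost of needing Lemma \ref{le_sum} and the constraint \eqref{eq_lock_constr2}; your route makes the mechanism (per-block variances decay because the step sizes are summable and the blocks have fixed ODE-time length) more transparent. Your flagged concerns about uniformity of $L$ and of $\|\mathbf{I}^{-1}\|$ over $\mathcal{B}(x)$ are legitimate and are handled in the paper by the boundedness estimates \eqref{eq:proof1_1}--\eqref{eq:proof1_2} and the constant $C_{\mathbf{I}}$ in \eqref{eq_lock6-1}.
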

\begin{proof} See Appendix \ref{sec_proof_le_lower_bound}.
\end{proof}

Finally, by applying Lemma \ref{le_lower_bound} and Corollary 2.5 in \cite{borkar2008stochastic}, we can obtain
\begin{align}\label{eq_lock10}
	P\left( \left. \hat{x}_n \rightarrow x \right| \hat{x}_0 \in \mathcal{B} \right) \ge &~ P\left( \hat{x}_n \in \mathcal{I}, \forall n \ge 0 \right) \\
	\ge &~ 1 - 6e^{-\frac{C|{s}|^2}{\alpha^2\sigma_0^2}},\nonumber
\end{align}
which completes the proof of Theorem \ref{th:lock}.

\section{Proof of Theorem~\ref{th:normal}}\label{proof:normal}

When the step-size $a_n$ is given by (\ref{eq:stepsize}) with any $\alpha > 0$ and $N_0 \ge 0$, Theorem 6.6.1 \cite[Section 6.6]{nevel1973stochastic} has proposed the sufficient conditions to prove the asymptotic normality of $\sqrt{n} \left( \hat{x}_n - x \right)$, i.e., $\sqrt{n} \left( \hat{x}_n - x \right) \overset{d}{\rightarrow}\mathcal{N}\left( 0, \Sigma_x \right)$. Under the condition that $\hat{\boldsymbol{\psi}}_n \rightarrow \boldsymbol{\psi}$, we will prove that our algorithm satisfies its sufficient conditions and obtain the variance $\Sigma$ as follows:
\begin{itemize}
% \item[1)] The estimate $\hat{x}_n$ should be within $[-1, 1]$.\\
% The projection operator in \eqref{eq_est} ensures that $\hat{x}_n \in [-1, 1]$.

\item[1)] Equation \eqref{eq:newton4} should satisfy: (i) there exist an increasing sequence of $\sigma$-fields $\{\mathcal{F}_{n}: n \ge 0\}$ such that $\mathcal{F}_{m} \!\subset\!\mathcal{F}_{n}$ for $m\!<\!n$, and (ii) the random noise $\hat{\mathbf{z}}_n$ is $\mathcal{F}_{n}$-measurable and independent of $\mathcal{F}_{n-1}$.\\
As defined in Appendix \ref{proof:convergence}, there exist an increasing sequence of $\sigma$-fields $\{ \mathcal{G}_n : n \ge 0 \}$, such that $\hat{\mathbf{z}}_n$ is measurable with respect to $\mathcal{G}_{n}$, i.e., $\mathbb{E} \left[ \left. \hat{\mathbf{z}}_n \right| \mathcal{G}_{n} \right] = \hat{\mathbf{z}}_n$, and is independent of $\mathcal{G}_{n-1}$, i.e., $\mathbb{E} \left[ \left. \hat{\mathbf{z}}_n \right| \mathcal{G}_{n-1}\right]  = \mathbb{E} \left[ \hat{\mathbf{z}}_n \right] = \mathbf{0}$.

\item[2)] $\hat{x}_n$ should converge to $x$ almost surely as $n \rightarrow \infty$. \\
Since $\hat{\boldsymbol{\psi}}_n \rightarrow \boldsymbol{\psi}$ is assumed, we have that $\hat{x}_n$ converges to $x$ almost surely as $n \rightarrow \infty$.

\item[3)] The stable condition:\\
In \eqref{eq:function_f}, $\mathbf{f}\left(\hat{\boldsymbol{\psi}}_{n-1}, \boldsymbol{\psi}\right)$ can be rewritten as follows:
\begin{equation}
\begin{aligned}
\!\!\!\!\!\!\mathbf{f}\left(\hat{\boldsymbol{\psi}}_{n-1}, \boldsymbol{\psi}\right)\!=\!\mathbf{C}_1 \left( \hat{\boldsymbol{\psi}}_{n-1} - \boldsymbol{\psi} \right) \!+\! \left[\begin{matrix} o(\| \hat{\boldsymbol{\psi}}_{n-1} - \boldsymbol{\psi} \|_2) \\ o(\| \hat{\boldsymbol{\psi}}_{n-1} - \boldsymbol{\psi} \|_2) \\ o(\| \hat{\boldsymbol{\psi}}_{n-1} - \boldsymbol{\psi} \|_2) \end{matrix}\right]\!\!,\!\!\!\!
%& = -\frac{M(M-1)^2|\beta x|^2}{2\sigma^2} \left( \hat{x}_n - u^* \right) + o\left(\hat{x}_n - u^*\right).
\end{aligned}
\end{equation}
where $\mathbf{C}_1$ is given by
\begin{equation}
\begin{aligned}
\mathbf{C}_1 =  \left.\frac{\partial \mathbf{f}\left(\hat{\boldsymbol{\psi}}_{n-1}, \boldsymbol{\psi}\right)}{\partial \hat{\boldsymbol{\psi}}_{n-1}^\text{T}} \right|_{\hat{\boldsymbol{\psi}}_{n-1} = \boldsymbol{\psi}} = - \left[\begin{matrix} 1 & 0 & 0 \\ 0 & 1 & 0 \\ 0 & 0 & 1 \end{matrix} \right] .
\end{aligned}
\end{equation}
Then we get the stable condition that
\begin{align}
\mathbf{A} \!=\! \mathbf{C}_1  \cdot \alpha + \frac{1}{2} \!=\! - \!\left[\begin{matrix} \alpha \!-\! \frac{1}{2} & 0 & 0 \\ 0 & \alpha \!-\! \frac{1}{2} & 0 \\ 0 & 0 & \alpha \!-\! \frac{1}{2} \end{matrix} \right]  \prec 0,
\end{align}
which results in $\alpha > \frac{1}{2}$.

\item[4)] The constraints for the noise vector $\hat{\mathbf{z}}_n$:\\
\begin{equation}
\mathbb{E}\left[\left\|\hat{\mathbf{z}}_n\right\|_2^2\right] = \operatorname{tr}(\mathbf{I}(\hat{\boldsymbol{\psi}}_{n\!-\!1},\!\mathbf{W}_n)^{-1}) < \infty,
\end{equation}
and
\begin{equation}
\underset{v\rightarrow\infty}{\lim}\ \ \underset{n\ge1}{\sup}\ \ \int\limits_{\left\| \hat{z}_n \right\|_2 > v} \left\| \hat{\mathbf{z}}_n \right\|_2^2 p(\hat{\mathbf{z}}_n) d\hat{\mathbf{z}}_n = 0.
\end{equation}

\end{itemize}

Let 
\begin{align}
\mathbf{B} = &~ \lim_{\begin{matrix} n \rightarrow \infty \\ \hat{\boldsymbol{\psi}}_n \rightarrow \boldsymbol{\psi} \end{matrix}} \mathbb{E}\left[\hat{\mathbf{z}}_n \hat{\mathbf{z}}_n^\text{T}\right] \\
 \overset{(a)}{=} &~ \lim_{\begin{matrix} n \rightarrow \infty \\ \hat{\boldsymbol{\psi}}_n \rightarrow \boldsymbol{\psi} \end{matrix}}\mathbf{I}(\hat{\boldsymbol{\psi}}_{n},\!\mathbf{W}_{n\!+\!1})^{-1} = \mathbf{I}(\boldsymbol{\psi}, \mathbf{W}^*)^{-1}, \nonumber
\end{align}
where step $(a)$ is obtained from \eqref{eq:covariance}.

Then, from Theorem 6.6.1 \cite[Section 6.6]{nevel1973stochastic}, we have
\begin{align*}\sqrt{n+N_0}\left( \hat{\boldsymbol{\psi}}_{n} - \boldsymbol{\psi} \right) \overset{d}{\rightarrow}\mathcal{N}\left( 0, \boldsymbol\Sigma \right), \end{align*}
where
\begin{equation}\label{eq_Sigma}\begin{aligned}
\boldsymbol\Sigma = &~\alpha^2  \cdot \int_0^\infty e^{\mathbf{A}v} \mathbf{B} e^{\mathbf{A}^\text{H}v} dv \\
= &~\frac{\alpha^2}{2\alpha - 1}\mathbf{I}(\boldsymbol{\psi}, \mathbf{W}^*)^{-1}.\\ %& = \left( \frac{\lambda^2 }{2M(M-1)^2\pi^2 d^2\rho} \right)^2 *  \frac{2M(M-1)^2\pi^2 d^2\rho}{\lambda^2} \\
%& \ge I_{\max}^{-1} = \frac{\lambda^2 }{2M(M-1)^2\pi^2 d^2\rho}.
\end{aligned}\end{equation}
Due to that $\lim_{n\rightarrow\infty}\sqrt{{(n+N_0)}/{n}} = 1$, we have
\begin{align*}
\sqrt{n}\left( \hat{\boldsymbol{\psi}}_{n} - \boldsymbol{\psi} \right) \rightarrow \sqrt{n}\cdot\sqrt{\frac{n+N_0}{n}}\left( \hat{\boldsymbol{\psi}}_{n} - \boldsymbol{\psi} \right) \overset{d}{\rightarrow}\mathcal{N}\left( 0, \boldsymbol\Sigma \right),
\end{align*}
as $n\rightarrow\infty$. Hence, we can obtain 
\begin{align}
\sqrt{n} \left( \hat{x}_n - x \right) \overset{d}{\rightarrow}\mathcal{N}\left( 0, \left[\boldsymbol\Sigma\right]_{3,3} \right).
\end{align}
By adapting $\alpha$ in \eqref{eq_Sigma}, we can obtain different $\left[\boldsymbol\Sigma\right]_{3,3}$, which achieves the minimum value $\left[\mathbf{I}(\boldsymbol{\psi}, \mathbf{W}^*)^{-1}\right]_{3,3}$, i.e., the minimum CRLB in (\ref{eq:CRLB}), when $\alpha = 1$.

By assuming $\alpha = 1$, we conclude that
\begin{equation*}\label{eq_normal2}
	\lim_{n\rightarrow\infty}~n~\mathbb{E}\left[\left(\hat{x}_n - x\right)^2\big| \hat{\boldsymbol{\psi}}_n \rightarrow \boldsymbol{\psi}\right] = \left[\mathbf{I}(\boldsymbol{\psi}, \mathbf{W}^*)^{-1}\right]_{3,3}.
\end{equation*}

\section{Proof of Lemma \ref{le_sufficient}}\label{sec_proof_le_sufficient}

When $m = 0$, $\tilde{x}^{\tilde{n}(0)}(T_0) = \bar{x}(T_0) = \hat{x}_0$. There are two symmetrical cases: (i) $\hat{x}_0 < x$ and (ii) $\hat{x}_0 > x$.

\emph{Case 1} ($\hat{x}_0 < x$): We will first prove that $\bar{x}(t) \in \mathcal{I} = \Big( x - |x - \hat{x}_b| - \delta,~x + |x - \hat{x}_b| + \delta \Big)$ for all $t \in I_0$.

If $\left| \bar{x}(t) - \tilde{x}^{\tilde{n}(0)}(t)\right| \le \delta$ for all $t \in I_0$, then we have
\begin{equation}\label{eq:le_sufficient_1}
-\delta \le \bar{x}(t) - \tilde{x}^{\tilde{n}(0)}(t) \le \delta.
\end{equation} 
What's more, due to the definition of $\hat{x}_\text{b}$ in \eqref{eq:x_b}, we get 
\begin{equation}\label{eq:le_sufficient_2}
\hat{x}_\text{b} \le x, \tilde{x}^{\tilde{n}(0)}(t) - \hat{x}_\text{b} \ge 0, x - \tilde{x}^{\tilde{n}(0)}(t) \ge 0,
\end{equation}
for all $t \in I_0$. By using \eqref{eq:le_sufficient_1} and \eqref{eq:le_sufficient_2}, we can obtain
\begin{align}\label{eq_left}
	&~ \bar{x}(t) - (x - |x - \hat{x}_b| - \delta)\\
	= &~\bar{x}(t) - (\hat{x}_\text{b} - \delta) \nonumber\\
	= &~\left[\bar{x}(t) - \tilde{x}^{\tilde{n}(0)}(t)\right] + \left[\tilde{x}^{\tilde{n}(0)}(t) - \hat{x}_\text{b}\right] + \delta \ge 0, \nonumber % \ge &~ -\delta + 0  + \delta = 0,\nonumber
\end{align}
and
\begin{align}\label{eq_right}
	&~(x + |x - \hat{x}_\text{b}| + \delta) - \bar{x}(t) \\
	= &~(2x - \hat{x}_\text{b} + \delta) - \bar{x}(t) \nonumber\\	
	= &~\left(x - \hat{x}_\text{b}\right) + \left[x - \bar{x}(t)\right] + \delta \nonumber\\
	= &~\left(x - \hat{x}_\text{b}\right) + \left[x - \tilde{x}^{\tilde{n}(0)}(t)\right] + \left[\tilde{x}^{\tilde{n}(0)}(t) - \bar{x}(t)\right] + \delta \nonumber\\
	\ge &~ 0,\nonumber
\end{align}
which result in $\bar{x}(t) \in \mathcal{I}$ for all $t \in I_0$.

Then, we consider the initial value $\bar{x}(T_1)$ for the next time interval $I_1$. With the $T$ given by \eqref{eq_T}, we have
\begin{align}\label{eq:le_sufficient_3}
x - \hat{x}_\text{b} \ge \tilde{x}^{\tilde{n}(0)}(T_1) - \hat{x}_\text{b} \ge \tilde{x}^{\tilde{n}(0)}(T) - \hat{x}_\text{b} \ge \delta.
\end{align}
By using \eqref{eq:le_sufficient_1}, \eqref{eq:le_sufficient_2} and \eqref{eq:le_sufficient_3}, we get
\begin{align}\label{eq_left2}
	&~\bar{x}(T_1) - (x - |x - \hat{x}_\text{b}|) \\
	= &~\bar{x}(T_1) - \hat{x}_\text{b} \nonumber \\
	= &~\left[\bar{x}(T_1) - \tilde{x}^{\tilde{n}(0)}(T_1)\right] + \left[\tilde{x}^{\tilde{n}(0)}(T_1) - \hat{x}_\text{b}\right] \ge 0,\nonumber
\end{align}
and
\begin{align}\label{eq_right2}
	&~(x + |x - \hat{x}_\text{b}|) - \bar{x}(T_1) \\
	= &~(2x - \hat{x}_\text{b}) - \bar{x}(T_1) \nonumber\\	
	= &~\left(x - \hat{x}_\text{b}\right) + \left[x - \bar{x}(T_1)\right] \nonumber\\
	= &~\left(x - \hat{x}_\text{b}\right) + \left[x - \tilde{x}^{\tilde{n}(0)}(T_1)\right] + \left[\tilde{x}^{\tilde{n}(0)}(T_1) - \bar{x}(T_1)\right] \nonumber\\
	\ge &~0,\nonumber
\end{align}
which result in $\bar{x}(T_1) \in \big[ x - |x - \hat{x}_\text{b}|,~x + |x - \hat{x}_\text{b}| \big]$.

\emph{Case 2} ($\hat{x}_0 > x$): Owing to symmetric property, we can use the same method as \eqref{eq_left}, \eqref{eq_right}, \eqref{eq_left2} and \eqref{eq_right2} to obtain that $\bar{x}(t) \in \mathcal{I}$ for all $t \in I_0$ and $\bar{x}(T_1) \in \big[ x - |x - \hat{x}_\text{b}|,~x + |x - \hat{x}_\text{b}| \big]$.

%\vspace{3mm}
When $m = 1$, $\tilde{x}^{\tilde{n}(1)}(T_1) = \bar{x}(T_1) \in \big[ x - |x - \hat{x}_\text{b}|,~x + |x - \hat{x}_\text{b}| \big]$. If $\bar{x}(T_1) < x$ and $\left| \bar{x}(t) - \tilde{x}^{\tilde{n}(1)}(t)\right| \le \delta$, then for all $t \in I_1$, we have $\bar{x}(T_1) \ge  \hat{x}_\text{b}$, $\tilde{x}^{\tilde{n}(1)}(t) - \hat{x}_\text{b} \ge 0$, $x - \tilde{x}^{\tilde{n}(1)}(t) \ge 0$, and
\begin{align*}
x - \hat{x}_\text{b} \ge \tilde{x}^{\tilde{n}(1)}(T_2) - \hat{x}_\text{b} \ge \tilde{x}^{\tilde{n}(1)}(T_1+T) - \hat{x}_\text{b} \ge \delta.
\end{align*}
Similar to  \eqref{eq_left}, \eqref{eq_right}, \eqref{eq_left2} and \eqref{eq_right2}, we can get $\bar{x}(t) \in \mathcal{I}~\text{for all}~t \in I_1$ and $\bar{x}(T_2) \in \big[ x - |x - \hat{x}_\text{b}|,~x + |x - \hat{x}_\text{b}| \big]$, which are also true for the case that $\bar{x}(T_1) > x$.

%\vspace{3mm}
Hence, we can use the same method to prove the cases of $m \ge 2$, which finally yields $\bar{x}(t) \in \mathcal{I}$ for all $t \in I_m$ and $m \ge 0$. Since $\bar{x}(t_n) = \hat{x}_n$ for all $n \ge 0$, we can obtain that $\hat{x}_n \in \mathcal{I}~\text{for all}~n \ge 0$, which completes the proof.

\section{Proof of Lemma \ref{le_lower_bound}}\label{sec_proof_le_lower_bound}

The following lemmas are needed to prove Lemma \ref{le_lower_bound}:
\begin{lemma}\label{le_gronwall}
Given $T$ by \eqref{eq_T} and
\begin{align}\label{eq_nT}
n_T \overset{\Delta}{=} \inf \left\{i \in \mathbb{Z}: t_{n+i} \ge t_n + T \right\}.
\end{align}
If there exists a constant $C>0$, which satisfies
\begin{equation}\label{eq_gronwall1}
\begin{aligned}
	&~\left\| \bar{\boldsymbol{\psi}}(t_{n+m}) - \tilde{\boldsymbol{\psi}}^n(t_{n+m})\right\|_2 \\
	\le &~L \sum_{i=1}^{m} a_{n+i} \left\| \bar{\boldsymbol{\psi}}(t_{n+i-1}) - \tilde{\boldsymbol{\psi}}^n(t_{n+i-1}) \right\|_2  + C,
\end{aligned}
\end{equation}
for all $n \ge 0$ and $1 \le m \le n_T$, then
\begin{equation}\label{eq_gronwall2}
\begin{aligned}
	\underset{t\in\left[ t_n, t_{n+n_T} \right]}{\sup} \left\| \bar{\boldsymbol{\psi}}(t) - \tilde{\boldsymbol{\psi}}^n(t)\right\|_2 \le  \frac{C_{\mathbf{f}} a_{n+1}}{2} + C e^{L (T+a_1)},
\end{aligned}
\end{equation}
where $L$ and $C_{\mathbf{f}}$ are defined in \eqref{eq_Lip} and \eqref{eq_CT} separately.
\end{lemma}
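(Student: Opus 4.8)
The statement is a pathwise estimate (deterministic once the constant $C$ in \eqref{eq_gronwall1} is fixed), so the plan is a two-step argument: first apply a discrete Gr\"onwall (Bellman--Gronwall) inequality to the errors at the interpolation nodes $\{t_{n+m}\}$, and then pass to arbitrary $t\in[t_n,t_{n+n_T}]$ by controlling the piecewise-linear interpolation error of the ODE solution.

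For the node errors I would write $b_m\overset{\Delta}{=}\|\bar{\boldsymbol{\psi}}(t_{n+m})-\tilde{\boldsymbol{\psi}}^n(t_{n+m})\|_2$ and note $b_0=0$, since $\tilde{\boldsymbol{\psi}}^n(t_n)=\bar{\boldsymbol{\psi}}(t_n)$ by construction in Step 1. Hypothesis \eqref{eq_gronwall1} then reads $b_m\le C+\sum_{j=0}^{m-1}La_{n+j+1}b_j$ for $1\le m\le n_T$, and a short induction (the discrete Gr\"onwall lemma) yields $b_m\le C\prod_{i=1}^m(1+La_{n+i})\le C\exp\!\big(L\sum_{i=1}^m a_{n+i}\big)$. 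To control the exponent I would use the definition \eqref{eq_nT} of $n_T$: since $t_{n+n_T-1}<t_n+T\le t_{n+n_T}$, we have $\sum_{i=1}^{n_T}a_{n+i}=t_{n+n_T}-t_n<T+a_{n+n_T}\le T+a_1$ because $\{a_n\}$ is non-increasing, so $b_m\le Ce^{L(T+a_1)}$ for all $0\le m\le n_T$.

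To go off the grid, fix $t\in[t_n,t_{n+n_T}]$, choose $m\le n_T-1$ with $t\in[t_{n+m},t_{n+m+1}]$, and write $t=t_{n+m}+\tau$ with $0\le\tau\le a_{n+m+1}$. By \eqref{eq_continuous}, $\bar{\boldsymbol{\psi}}(t)$ is the convex combination $(1-\tfrac{\tau}{a_{n+m+1}})\bar{\boldsymbol{\psi}}(t_{n+m})+\tfrac{\tau}{a_{n+m+1}}\bar{\boldsymbol{\psi}}(t_{n+m+1})$. Comparing it with the \emph{same} convex combination of $\tilde{\boldsymbol{\psi}}^n(t_{n+m})$ and $\tilde{\boldsymbol{\psi}}^n(t_{n+m+1})$ costs at most $\max\{b_m,b_{m+1}\}\le Ce^{L(T+a_1)}$ by Step~B, while the distance between that combination and $\tilde{\boldsymbol{\psi}}^n(t)$ itself --- evaluated through \eqref{eq_ODE_new} together with the uniform bound $\|\mathbf{f}(\cdot,\boldsymbol{\psi})\|_2\le C_{\mathbf{f}}$ --- is at most $\tfrac{2\tau(a_{n+m+1}-\tau)}{a_{n+m+1}}C_{\mathbf{f}}\le\tfrac{C_{\mathbf{f}}a_{n+m+1}}{2}\le\tfrac{C_{\mathbf{f}}a_{n+1}}{2}$, the factor $\tfrac12$ coming from $\max_{\tau}2\tau(a-\tau)/a=a/2$ and the last step from monotonicity of $\{a_n\}$. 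Adding the two contributions and taking the supremum over $t$ gives \eqref{eq_gronwall2}.

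The Gr\"onwall step and the bound on the step-size sum are routine; I expect the main obstacle to be the off-grid passage, where one must compare $\bar{\boldsymbol{\psi}}$ and $\tilde{\boldsymbol{\psi}}^n$ against a common chord so that $b_m$ and $b_{m+1}$ enter only linearly, extract exactly the constant $C_{\mathbf{f}}a_{n+1}/2$ from the curvature of the ODE flow, and verify that $C_{\mathbf{f}}$ is a legitimate uniform bound on $\|\mathbf{f}(\tilde{\boldsymbol{\psi}}^n(v),\boldsymbol{\psi})\|_2$ over the relevant range of $v$ --- which ultimately rests on the boundedness estimates for $\hat{\mathbf{g}}_n$ and $\hat{\mathbf{e}}_n$ established in Appendix~\ref{proof:convergence}.
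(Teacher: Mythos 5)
Your proposal is correct and follows essentially the same route as the paper's proof: a discrete Gr\"onwall bound at the interpolation nodes combined with $\sum_{i=1}^{m}a_{n+i}=t_{n+m}-t_n\le T+a_1$, followed by the off-grid comparison in which $\bar{\boldsymbol{\psi}}(t)$ and $\tilde{\boldsymbol{\psi}}^n(t)$ are both measured against the common convex combination of the node values, with the ODE bound $\|\mathbf{f}\|_2\le C_{\mathbf{f}}$ producing exactly the term $2C_{\mathbf{f}}a_{n+m}\gamma(1-\gamma)\le C_{\mathbf{f}}a_{n+m}/2\le C_{\mathbf{f}}a_{n+1}/2$. The only cosmetic difference is that the paper writes the convex-combination weights as $\gamma=(t_{n+m}-t)/a_{n+m}$ and keeps $\gamma b_{m-1}+(1-\gamma)b_m$ rather than $\max\{b_m,b_{m+1}\}$, which yields the same bound.
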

\begin{proof}
See Appendix \ref{sec_proof_le_gronwall}.
% The proof is omitted due to space limitation.
\end{proof}

%\begin{lemma}\label{le_cheb}
%If a $D$-dimensional process 
%$$\{\mathbf{M}_n = [M_{n,1}, \ldots, M_{n,D}]^\text{T}: n = 1, 2, \ldots\},$$
%statisfies that: (i) $\mathbf{M}_n$ is Gaussian distributed with zero mean, i.e., $\mathbf{M}_n \sim \mathcal{N}(\mathbf{0}, \mathbf{V}_n)$, and (ii)~$\sum_{i = 1}^D \left(M_{n,i}^2 - \mathbb{E}\left[M_{n,i}^2\right]\right)$ is a martingale in $n$, then
%\begin{equation}\label{eq_lock5}
%\begin{aligned}
%	&~P\left( \underset{0\le n \le k}{\sup}\sum_{i = 1}^D \left(M_{n,i}^2 - \mathbb{E}\left[M_{n,i}^2\right]\right) > \eta \right) \\
%	\le &~\frac{e^{-C\left(\sum_{i=1}^D \lambda_{k,i} + \eta\right)} }{\sqrt{\prod_{i = 1}^D \left(1 - 2C\lambda_{k,i}\right)}},
%\end{aligned}
%\end{equation}
%for any $\eta > 0$, where $\lambda_{k,1} \le \lambda_{k,2} \le \cdots \le \lambda_{k,D}$ are the eigenvalues of $\mathbf{V}_k$, and $0 < C < \frac{1}{2\lambda_{k,D}}$.
%\end{lemma}
\begin{lemma}[Lemma 4 \cite{Li2017super}]\label{le_cheb}
If $\{M_i: i = 1, 2, \ldots\}$ satisfies that: (i)  $M_i$ is Gaussian distributed with zero mean, and (ii) $M_i$ is a martingale in $i$, then
\begin{equation}\label{eq_lock5}
\begin{aligned}
	& P\left( \underset{0\le i \le k}{\sup}\left|M_i\right| > \eta \right) \le 2\exp\left\{-\frac{\eta^2}{2\operatorname{Var}\left[M_k\right]}\right\},
\end{aligned}
\end{equation}
for any $\eta > 0$.
\end{lemma}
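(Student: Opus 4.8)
The plan is to prove this maximal inequality by the classical exponential (Chernoff) method combined with Doob's submartingale maximal inequality, after first reducing the two-sided tail to a one-sided one. Concretely, I would start from the union bound
\[
P\left(\sup_{0\le i\le k}|M_i| > \eta\right) \le P\left(\sup_{0\le i\le k} M_i > \eta\right) + P\left(\sup_{0\le i\le k}(-M_i) > \eta\right),
\]
and observe that $\{-M_i\}$ is again a zero-mean Gaussian martingale with exactly the same variances, so it suffices to bound the first term by $\exp\{-\eta^2/(2\operatorname{Var}[M_k])\}$ and then invoke symmetry for the second, giving the factor of $2$ in \eqref{eq_lock5}.

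For the one-sided bound, I would fix any $\lambda>0$. Since $x\mapsto e^{\lambda x}$ is convex, Jensen's inequality gives $\mathbb{E}[e^{\lambda M_{i+1}}\mid\mathcal{F}_i]\ge e^{\lambda\mathbb{E}[M_{i+1}\mid\mathcal{F}_i]}=e^{\lambda M_i}$, so $\{e^{\lambda M_i}\}$ is a nonnegative submartingale, and it has finite mean because each $M_i$ is Gaussian with $\mathbb{E}[e^{\lambda M_i}]=e^{\lambda^2\operatorname{Var}[M_i]/2}<\infty$. Doob's maximal inequality for nonnegative submartingales then yields
\[
P\left(\sup_{0\le i\le k} M_i > \eta\right) = P\left(\sup_{0\le i\le k} e^{\lambda M_i} > e^{\lambda\eta}\right) \le e^{-\lambda\eta}\,\mathbb{E}\!\left[e^{\lambda M_k}\right] = \exp\!\left\{-\lambda\eta + \tfrac{\lambda^2}{2}\operatorname{Var}[M_k]\right\}.
\]
Optimizing the right-hand side over $\lambda>0$, with minimizer $\lambda=\eta/\operatorname{Var}[M_k]$, produces the bound $\exp\{-\eta^2/(2\operatorname{Var}[M_k])\}$. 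Combining this with the symmetric estimate for $\{-M_i\}$ through the union bound above completes the argument.

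Two minor points would be worth a sentence each. First, the index set in the statement begins at $i=1$ while the supremum is written over $0\le i\le k$, so I would either set $M_0\overset{\Delta}{=}0$ or read the supremum over the defined indices; this is immaterial. Second, $\operatorname{Var}[M_i]=\mathbb{E}[M_i^2]$ is automatically non-decreasing in $i$ by orthogonality of martingale increments, $\mathbb{E}[M_k^2]=\mathbb{E}[M_i^2]+\mathbb{E}[(M_k-M_i)^2]$, so $\operatorname{Var}[M_k]$ is indeed the correct normalizing constant and no separate monotonicity hypothesis is required.

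I do not expect a serious obstacle here: this is a standard Gaussian-martingale maximal inequality (and, as indicated, is quoted from \cite{Li2017super}). The only care required is in checking the hypotheses of Doob's inequality — nonnegativity and integrability of $e^{\lambda M_i}$, which follow from convexity and Gaussianity respectively — and in noting that we only ever use the \emph{marginal} Gaussianity of $M_k$ to evaluate its moment generating function, never joint Gaussianity of the whole process.
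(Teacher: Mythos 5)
Your proof is correct. Note that the paper itself gives no proof of this lemma --- it is imported verbatim as Lemma~4 of \cite{Li2017super} --- so there is nothing here to compare against; your argument (reduce the two-sided tail to one-sided bounds by symmetry of $\{-M_i\}$, apply Doob's maximal inequality to the nonnegative submartingale $e^{\lambda M_i}$, evaluate $\mathbb{E}[e^{\lambda M_k}]=e^{\lambda^2\operatorname{Var}[M_k]/2}$ via the marginal Gaussianity of $M_k$, and optimize at $\lambda=\eta/\operatorname{Var}[M_k]$) is the standard Chernoff--Doob derivation and is sound, including the two housekeeping remarks about $M_0$ and the monotonicity of $\operatorname{Var}[M_i]$.
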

%\begin{proof}
%See Appendix \ref{sec_proof_le_cheb}.
%% The proof is omitted due to space limitation.
%\end{proof}

\begin{lemma}[Lemma 5 \cite{Li2017super}]\label{le_sum}
If given a constant $C > 0$, then
\begin{equation}\label{eq_increasing}
\begin{aligned}
	G(v) = \frac{1}{v}\exp\left[-\frac{C}{v}\right],
\end{aligned}
\end{equation}
is increasing for all $0 < v < C$.
\end{lemma}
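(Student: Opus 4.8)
The plan is a short calculus argument: show that $G$ has strictly positive derivative on the open interval $(0,C)$, which immediately gives monotone increase there. Since $G(v)>0$ for every $v>0$, the cleanest route is to differentiate the logarithm: $\log G(v) = -\log v - C/v$, so that $\frac{d}{dv}\log G(v) = -\frac{1}{v} + \frac{C}{v^2} = \frac{C-v}{v^2}$. On $0<v<C$ both the numerator $C-v$ and the denominator $v^2$ are strictly positive, hence $\frac{d}{dv}\log G(v)>0$, and therefore $\log G$—and with it $G$ itself—is strictly increasing on $(0,C)$.

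An equivalent direct computation uses the product rule on $G(v)=v^{-1}e^{-C/v}$, which yields $G'(v) = -v^{-2}e^{-C/v} + v^{-1}e^{-C/v}\,C v^{-2} = \frac{(C-v)\,e^{-C/v}}{v^3}$; the sign is again read off from $C-v>0$, $v^3>0$, and $e^{-C/v}>0$ on $(0,C)$. A third, perhaps the slickest, route is the change of variable $u=1/v$, under which $G$ becomes $u\,e^{-Cu}$; its derivative in $u$ is $e^{-Cu}(1-Cu)$, which is negative for $u>1/C$, i.e. for $v<C$, and since $u=1/v$ is itself decreasing in $v$, the composition $G$ is increasing in $v$ on $(0,C)$. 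Any one of these suffices.

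There is essentially no obstacle here; the only place the hypothesis enters is the sign of $C-v$, which is exactly what forces the restriction $v<C$ (and, incidentally, $G$ peaks at $v=C$ and decreases for $v>C$, so the interval cannot be enlarged). Of the three routes I would present the logarithmic-derivative computation, as it is the most economical.
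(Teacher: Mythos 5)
Your proof is correct. Note that the paper itself gives no proof of this lemma---it is simply imported by citation as Lemma~5 of \cite{Li2017super}---so there is nothing to compare against in this manuscript; your logarithmic-derivative computation (or either of the two equivalent variants you sketch) is the standard one-line calculus argument one would expect, and the sign analysis of $(C-v)/v^2$ on $(0,C)$ is exactly right.
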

%\begin{proof}
%The derivative of $G(v)$ is
%\begin{equation*}
%	G'(v) = \frac{C - v}{v^3}\exp\left[-\frac{C}{v}\right].
%\end{equation*}
%
%Let $G'(v) > 0$ and we can obtain that $G(v)$ is increasing for $v \in (0, C)$, which completes the proof.
%\end{proof}

Let $\boldsymbol{\xi}_0 \overset{\Delta}{=} \mathbf{0}$ and $\boldsymbol{\xi}_n \overset{\Delta}{=} \sum_{m=1}^{n} a_{m} \mathbf{\hat{z}}_{m} $, $n \ge 1$, where $\mathbf{\hat{z}}_{m}$ is given in \eqref{eq:noise_z2}. With \eqref{eq_continuous} and \eqref{eq_ODE_new}, we have for $t_{n+m}, 1 \le m \le n_T$,
\begin{align}\label{eq_seq_trace} \bar{\boldsymbol{\psi}}(t_{n+m}) = &~\bar{\boldsymbol{\psi}}(t_n)  + \sum_{i=1}^{m} a_{n+i} \mathbf{f}\left(\bar{\boldsymbol{\psi}}(t_{n+i-1}), \boldsymbol{\psi}\right) \\
&~+ (\boldsymbol{\xi}_{n+m} - \boldsymbol{\xi}_{n}), \nonumber
\end{align}
and
\begin{align}\label{eq_ode_trace}
\tilde{\boldsymbol{\psi}}^n(t_{n+m}) = &~\tilde{\boldsymbol{\psi}}^n(t_n) + \int_{t_n}^{t_{n+m}} \mathbf{f}\left(\tilde{\boldsymbol{\psi}}^n(v), \boldsymbol{\psi}\right) dv \\
= &~\tilde{\boldsymbol{\psi}}^n(t_n) + \sum_{i=1}^{m} a_{n+i} \mathbf{f}\left(\tilde{\boldsymbol{\psi}}^n(t_{n+i-1}), \boldsymbol{\psi}\right) \nonumber \\
	&~+ \int_{t_n}^{t_{n+m}} \left[\mathbf{f}\left(\tilde{\boldsymbol{\psi}}^n(v), \boldsymbol{\psi}\right) - \mathbf{f}\left(\tilde{\boldsymbol{\psi}}^n(\underline{v}), \boldsymbol{\psi}\right) \right]dv,  \nonumber
\end{align}
where $\underline{v} \overset{\Delta}{=} \max \left\{ t_n: t_n \le v, n \ge 0 \right\}$ for $v \ge 0$.
%Note that we only care about $\hat{x}_n \in \mathcal{I} \subset \mathcal{B}$, so the projection operator does not take effect in \eqref{eq_seq_trace} and we omit it.

To bound $\int_{t_n}^{t_{n+m}} \left[\mathbf{f}\left(\tilde{\boldsymbol{\psi}}^n(v), \boldsymbol{\psi}\right) - \mathbf{f}\left(\tilde{\boldsymbol{\psi}}^n(\underline{v}), \boldsymbol{\psi}\right) \right]dv$ on the RHS of (\ref{eq_ode_trace}), we obtain the Lipschitz constant of function $\mathbf{f}(\mathbf{v}, \boldsymbol{\psi})$ considering the first varible $\mathbf{v}$, given by
\begin{equation}\label{eq_Lip}
	L \overset{\Delta}{=} \underset{\mathbf{v}_1 \ne \mathbf{v}_2}{\sup} \frac{\left\| \mathbf{f}(\mathbf{v}_1, \boldsymbol{\psi}) -  \mathbf{f}(\mathbf{v}_2, \boldsymbol{\psi}) \right\|_2}{\left\| \mathbf{v}_1 - \mathbf{v}_2 \right\|_2}.
\end{equation}
Similar to \eqref{eq_ub_fx}, for any $t \ge t_n$, we can obtain that there exists a constant $0 < C_{\mathbf{f}} < \infty$ such that
\begin{equation}\label{eq_CT}
\begin{aligned}
	\left\| \mathbf{f}\left(\tilde{\boldsymbol{\psi}}^n(t), \boldsymbol{\psi}\right) \right\|_2 \le C_{\mathbf{f}}.
\end{aligned}
\end{equation}
Hence, we have
\begin{equation}\label{eq_int}
\begin{aligned}
& \left\| \int_{t_n}^{t_{n+m}} \left[\mathbf{f}\left(\tilde{\boldsymbol{\psi}}^n(v), \boldsymbol{\psi}\right) - \mathbf{f}\left(\tilde{\boldsymbol{\psi}}^n(\underline{v}), \boldsymbol{\psi}\right) \right]dv \right\|_2 \\
\le & \int_{t_n}^{t_{n+m}} \left\| \mathbf{f}\left(\tilde{\boldsymbol{\psi}}^n(v), \boldsymbol{\psi}\right) - \mathbf{f}\left(\tilde{\boldsymbol{\psi}}^n(\underline{v}), \boldsymbol{\psi}\right) \right\|_2 dv \\
\overset{(a)}{\le} & \int_{t_n}^{t_{n+m}} L \left\| \tilde{\boldsymbol{\psi}}^n(v) - \tilde{\boldsymbol{\psi}}^n(\underline{v}) \right\|_2 dv \\
\overset{(b)}{\le} & \int_{t_n}^{t_{n+m}} L \left\| \int_{\underline{v}}^{v} \mathbf{f}\left(\tilde{\boldsymbol{\psi}}^n(s), \boldsymbol{\psi}\right) ds \right\|_2 dv \\
\le & \int_{t_n}^{t_{n+m}} \int_{\underline{v}}^{v} L \left\| \mathbf{f}\left(\tilde{\boldsymbol{\psi}}^n(s), \boldsymbol{\psi}\right) \right\|_2 ds dv \\
\overset{(c)}{\le} & \int_{t_n}^{t_{n+m}} \int_{\underline{v}}^{v} C_{\mathbf{f}} L ds dv =   \int_{t_n}^{t_{n+m}}C_{\mathbf{f}} L (v - \underline{v}) dv \\
=  & \sum_{i=1}^{m} \int_{t_{n+i-1}}^{t_{n+i}}C_{\mathbf{f}} L (v - t_{n+i-1}) dv \\
= & \sum_{i=1}^{m} \frac{C_{\mathbf{f}} L (t_{n+i} - t_{n+i-1})^2}{2}= \frac{C_{\mathbf{f}} L}{2} \sum_{i=1}^{m} a_{n+i}^2,
\end{aligned}
\end{equation}
where step $(a)$ is due to (\ref{eq_Lip}), step $(b)$ is due to the definition in (\ref{eq_ODE_new}), and step $(c)$ is due to (\ref{eq_CT}). Then, by subtracting $\tilde{\boldsymbol{\psi}}^n(t_{n+m})$ in \eqref{eq_ode_trace} from $\bar{\boldsymbol{\psi}}(t_{n+m})$  in \eqref{eq_seq_trace} and taking norms, the following inequality can be obtained from (\ref{eq_Lip}) and (\ref{eq_int}) for $n \ge 0, 1 \le m \le n_T$:
\begin{equation}\label{eq_lock2}
\begin{aligned}
	& \left\| \bar{\boldsymbol{\psi}}(t_{n+m}) - \tilde{\boldsymbol{\psi}}^n(t_{n+m})\right\|_2 \\
	\le & L \sum_{i=1}^{m} a_{n+i} \left\| \bar{\boldsymbol{\psi}}(t_{n+i-1}) - \tilde{\boldsymbol{\psi}}^n(t_{n+i-1}) \right\|_2 \\
	&  + \frac{C_{\mathbf{f}} L}{2} \sum_{i=1}^{m} a_{n+i}^2+ \left\|\boldsymbol{\xi}_{n+m} - \boldsymbol{\xi}_{n}\right\|_2 \\
	\le & L \sum_{i=1}^{m} a_{n+i} \left\| \bar{\boldsymbol{\psi}}(t_{n+i-1}) - \tilde{\boldsymbol{\psi}}^n(t_{n+i-1}) \right\|_2 \\
	&  + \frac{C_{\mathbf{f}} L}{2} \sum_{i=1}^{n_T} a_{n+i}^2+ \underset{1 \le m\le n_T}{\sup}\left\|\boldsymbol{\xi}_{n+m} - \boldsymbol{\xi}_{n}\right\|_2.
%	& \left| \bar{x}(t_{n+m}) - \tilde{x}^n(t_{n+m})\right| \\
%	\le & L \sum_{i=1}^{m} a_{n+i} \left| \bar{x}(t_{n+i-1}) - \tilde{x}^n(t_{n+i-1}) \right| \\
%	&  + \frac{\sqrt{M} L}{2} \sum_{i=1}^{m} a_{n+i}^2+ |\xi_{n+m} - \xi_{n}|.
\end{aligned}
\end{equation}

Applying Lemma \ref{le_gronwall} to (\ref{eq_lock2}) and letting
\begin{align*}C = \frac{C_{\mathbf{f}} L}{2} \sum_{i=1}^{n_T} a_{n+i}^2+ \underset{1\le m\le n_T}{\sup}\left\|\boldsymbol{\xi}_{n+m} - \boldsymbol{\xi}_{n}\right\|_2,\end{align*}
yields
\begin{equation}\label{eq_lock3}
\begin{aligned}
	&~\underset{t\in\left[ t_n, t_{n+n_T} \right]}{\sup} \left\| \bar{\boldsymbol{\psi}}(t) - \tilde{\boldsymbol{\psi}}^n(t)\right\|_2	 \\
	\le &~C_e \left\{ \frac{C_{\mathbf{f}} L}{2} \big[b(n) - b(n+n_T)\big] \right.  \\
	& \left. + \underset{1 \le m\le n_T}{\sup}\left\|\boldsymbol{\xi}_{n+m} - \boldsymbol{\xi}_{n}\right\|_2 \right\} + \frac{C_{\mathbf{f}} a_{n+1}}{2} ,
\end{aligned}
\end{equation}
where $C_e \overset{\Delta}{=} e^{L (T+a_1)}$, and $b(n) \overset{\Delta}{=} \sum_{i > n} a_{i}^2$.
% Suppose the initial point $\hat{x}_0 \in \mathcal{B}$.
% Then, we will derive the lower bound of the probability that the sequence $\{\hat{x}_n: n \ge 0\}$ remains inside this invariant set.
% Let $T_0 \overset{\Delta}{=} 0$ and $T_{m+1} \overset{\Delta}{=} \min \left\{ t_i: t_i \ge T_n + T, i \ge 0 \right\}$ for $m \ge 0$. Then $T_{m+1} - T_m \in [T, T+a_1]$ and $T_m = t_{\tilde{n}(m)}$ for some $\tilde{n}(m) \uparrow \infty$, where $\tilde{n}(0) = n_0$. Let $\tilde{x}^{\tilde{n}(m)}(t)$ denote the solution of ODE (\ref{eq_ODE}) for $t \in I_m \overset{\Delta}{=} \left[ T_m, T_{m+1} \right]$ with $\tilde{x}^{\tilde{n}(m)}(T_m) = \bar{x}(T_m)$, $m \ge 0$.
Letting $n = \tilde{n}(m)$ in (\ref{eq_lock3}), we have $n + n_T = \tilde{n}(m+1)$ due to the definition of $T_{m+1} = t_{\tilde{n}(m+1)}$ in \emph{Step 2} of Appendix \ref{proof:lock} and
\begin{equation}\label{eq_lock3-2}
\begin{aligned}
	&~\underset{t\in I_m}{\sup} \left\| \bar{\boldsymbol{\psi}}(t) - \tilde{\boldsymbol{\psi}}^{\tilde{n}(m)}(t)\right\|_2	 \\
	\le &~C_e \left\{ \frac{C_{\mathbf{f}} L}{2} \big[ b(\tilde{n}(m)) - b(\tilde{n}(m+1)) \big]\right. \\
	& \left.  + \underset{\tilde{n}(m) \le k \le \tilde{n}(m+1)}{\sup}\left\|\boldsymbol{\xi}_{k} - \boldsymbol{\xi}_{\tilde{n}(m)}\right\|_2 \right\} + \frac{C_{\mathbf{f}} a_{\tilde{n}(m)+1}}{2}.
\end{aligned}
\end{equation}

Suppose that the step size $\{a_n: n > 0\}$ satisfies
\begin{equation}\label{eq_lock_constr}
C_e \frac{C_{\mathbf{f}} L}{2} \big[b(\tilde{n}(m)) - b(\tilde{n}(m+1))\big] +  \frac{C_{\mathbf{f}} a_{\tilde{n}(m)+1}}{2} < \frac{\delta}{2},
\end{equation}
for $m \ge 0$.

Given $\underset{t\in I_m}{\sup} \left| \bar{x}(t) - \tilde{x}^{\tilde{n}(m)}(t)\right| \!>\! \delta$, we can obtain from \eqref{eq_lock3-2} and \eqref{eq_lock_constr} that
\begin{equation*}
\begin{aligned}
	&~\underset{\tilde{n}(m)\le k \le \tilde{n}(m+1)}{\sup}\left\|\boldsymbol{\xi}_{k} - \boldsymbol{\xi}_{\tilde{n}(m)}\right\|_2 \\
	\ge &~\frac{1}{C_e} \left( \underset{t\in I_m}{\sup} \left\| \bar{\boldsymbol{\psi}}(t) - \tilde{\boldsymbol{\psi}}^{\tilde{n}(m)}(t)\right\|_2 - \frac{C_{\mathbf{f}} L}{2} \big[ b(\tilde{n}(m)) \right. \\
		& \left.  - b(\tilde{n}(m+1)) \big] - \frac{C_{\mathbf{f}} a_{\tilde{n}(m)+1}}{2}\right) \\
		> &~\frac{1}{C_e}\left( \underset{t\in I_m}{\sup} \left| \bar{x}(t) - \tilde{x}^{\tilde{n}(m)}(t)\right| - \frac{\delta}{2} \right) \\
		> &~\frac{\delta}{2C_e}.
\end{aligned}
\end{equation*}
Then, we get
\begin{equation}\label{eq_lock4}
\begin{aligned}
	&~P\left( \left. \underset{t\in I_m}{\sup} \left| \bar{x}(t) - \tilde{x}^{\tilde{n}(m)}(t)\right| > \delta \right|\right. \\
	&~~~~~~\left.\underset{t\in I_i}{\sup} \left| \bar{x}(t) - \tilde{x}^{\tilde{n}(i)}(t)\right| \le \delta, 0 \le i < m \right) \\
	{\le} & P\left( \left. \underset{\tilde{n}(m)\le k \le \tilde{n}(m+1)}{\sup}\left\|\boldsymbol{\xi}_{k} - \boldsymbol{\xi}_{\tilde{n}(m)}\right\|_2 > \frac{\delta}{2C_e} \right| \right. \\
	&~~~~~~\left.  \underset{t\in I_i}{\sup} \left| \bar{x}(t) - \tilde{x}^{\tilde{n}(i)}(t)\right| \le \delta, 0 \le i < m \right) \\
	\overset{(a)}{=} &~P\left( \underset{\tilde{n}(m)\le k \le \tilde{n}(m+1)}{\sup}\left\|\boldsymbol{\xi}_{k} - \boldsymbol{\xi}_{\tilde{n}(m)}\right\|_2 > \frac{\delta}{2C_e} \right),
\end{aligned}
\end{equation}
where step $(a)$ is due to the independence of noise, i.e., $ \boldsymbol{\xi}_{k}-\boldsymbol{\xi}_{\tilde{n}(m)} , \tilde{n}(m) \le k \le \tilde{n}(m+1)$ are independent of $\hat{x}_n, 0 \le n \le \tilde{n}(m)$.

%\begin{lemma}[Lemma 4.2 in \cite{borkar2008stochastic}]\label{le_42}
%
%\end{lemma}

The lower bound of the probability that the sequence $\{\hat{x}_n: n \ge 0\}$ remains in the invariant set $\mathcal{I}$ is given by
\begin{align}\label{eq_p_invariant}
	& P\left( \hat{x}_n \in \mathcal{I}, \forall n \ge 0 \right) \nonumber\\
	\overset{(a)}{\ge} & P\left( \underset{t\in I_m}{\sup} \left| \bar{x}(t) - \tilde{x}^{\tilde{n}(m)}(t)\right| \le \delta, \forall m \ge 0 \right)  \nonumber\\
	\overset{(b)}{\ge} & 1 - \sum_{m\ge 0} P\left( \left. \underset{t\in I_m}{\sup} \left| \bar{x}(t) - \tilde{x}^{\tilde{n}(m)}(t)\right| > \delta \right| \right.  \\
	&~~~~~~~~~~~~~~~\left. \underset{t\in I_i}{\sup} \left| \bar{x}(t) - \tilde{x}^{\tilde{n}(i)}(t)\right| \le \delta, 0 \le i < m \right)  \nonumber\\
		\overset{(c)}{\ge} & 1 - \sum_{m\ge 0} P\Bigg( \underset{\tilde{n}(m)\le k \le \tilde{n}(m+1)}{\sup}\left\|\boldsymbol{\xi}_{k} - \boldsymbol{\xi}_{\tilde{n}(m)}\right\|_2 > \frac{\delta}{2C_e} \Bigg),\nonumber
\end{align}
where step $(a)$ is due to Lemma \ref{le_sufficient}, step $(b)$ is due to Lemma 4.2 in \cite{borkar2008stochastic}, and step $(c)$ is due to \eqref{eq_lock4}. Let $\left\|\cdot\right\|_{\infty}$ denote the max-norm, i.e., $\left\|\mathbf{u}\right\|_{\infty} = \max_{l} |[\mathbf{u}]_l|$. Note that for $\mathbf{u} \in \mathbb{R}^{D}$, $\left\|\mathbf{u}\right\|_{2} \le \sqrt{D} \left\|\mathbf{u}\right\|_{\infty}$. Hence we have
\begin{align}\label{eq_lock6-0}
	&~ P\left( \underset{\tilde{n}(m)\le k \le \tilde{n}(m+1)}{\sup}\left\|\boldsymbol{\xi}_{k} - \boldsymbol{\xi}_{\tilde{n}(m)}\right\|_2 > \frac{\delta}{2C_e} \right) \nonumber\\
	\le &~ P\left( \underset{\tilde{n}(m)\le k \le \tilde{n}(m+1)}{\sup} \left\|\boldsymbol{\xi}_{k} - \boldsymbol{\xi}_{\tilde{n}(m)}\right\|_{\infty} > \frac{\delta}{2\sqrt{3}C_e} \right)  \\
	= &~ P\left( \underset{\tilde{n}(m)\le k \le \tilde{n}(m+1)}{\sup} \max_{1 \le l \le 3} \left|\big[\boldsymbol{\xi}_{k}\big]_l - \big[\boldsymbol{\xi}_{\tilde{n}(m)}\big]_l \right| > \frac{\delta}{2\sqrt{3}C_e} \right) \nonumber\\
	= &~ P\left( \max_{1 \le l \le 3} \underset{\tilde{n}(m)\le k \le \tilde{n}(m+1)}{\sup} \left|\big[\boldsymbol{\xi}_{k}\big]_l - \big[\boldsymbol{\xi}_{\tilde{n}(m)}\big]_l \right| > \frac{\delta}{2\sqrt{3}C_e} \right) \nonumber\\
	\le &~ \sum_{l = 1}^3 P\left( \underset{\tilde{n}(m)\le k \le \tilde{n}(m+1)}{\sup} \left|\big[\boldsymbol{\xi}_{k}\big]_l - \big[\boldsymbol{\xi}_{\tilde{n}(m)}\big]_l \right| > \frac{\delta}{2\sqrt{3}C_e} \right). \nonumber
\end{align}

With the increasing $\sigma$-fields $\{\!\mathcal{G}_n\!:\!n\!\ge\!0\!\}$ defined in Appendix \ref{proof:convergence}, we have for $n \ge 0$,
\begin{itemize}
\item[1)] $\boldsymbol{\xi}_n \!=\! \sum_{m=1}^{n} a_{m} \hat{\mathbf{z}}_{m} \sim \mathcal{N}(0, \sum_{m=1}^n a_m^2 \mathbf{I}(\hat{\boldsymbol{\psi}}_{m\!-\!1},\!\mathbf{W}_m)^{-1})$,

\item[2)] $\boldsymbol{\xi}_n$ is $\mathcal{G}_n$-measurable, i.e., $\mathbb{E} \left[ \left. \boldsymbol{\xi}_n \right| \mathcal{G}_n \right] = \boldsymbol{\xi}_n$,

\item[3)] $\mathbb{E} \left[ \left\| \boldsymbol{\xi}_n \right\|^2_2 \right] = \sum_{m=1}^n a_m^2 \operatorname{tr}\Big(\mathbf{I}(\hat{\boldsymbol{\psi}}_{m\!-\!1},\!\mathbf{W}_m)^{-1}\Big)< \infty$,

\item[4)] $\mathbb{E} \left[ \left. \boldsymbol{\xi}_n \right| \mathcal{G}_m \right] = \boldsymbol{\xi}_m$ for all $0 \le m < n$.
\end{itemize}
Therefore, $\left[\boldsymbol{\xi}_n\right]_l, l = 1,2,3$ is a Gaussian martingale with respect to $\mathcal{G}_n$, and satisfies
\begin{align}\label{eq_lock6-1}
\operatorname{Var}\left[\big[\boldsymbol{\xi}_{n+m}\big]_l - \big[\boldsymbol{\xi}_{n}\big]_l\right] = &~\sum_{i = n+1}^{n+m} a_i^2 \left[\mathbf{I}(\hat{\boldsymbol{\psi}}_{i\!-\!1},\!\mathbf{W}_i)^{-1}\right]_{l,l} \nonumber\\
\le &~\sum_{i = n+1}^{n+m} a_i^2 \frac{C_{\mathbf{I}}\sigma_0^2}{|{s}|^2} \\
= &~\frac{C_{\mathbf{I}}\sigma_0^2}{|{s}|^2} \big[b(n) - b(n+m)\big],\nonumber
\end{align}
where $C_{\mathbf{I}} \!\overset{\Delta}{=}\! \max_{l} \max_{i \ge 1} \frac{|{s}|^2}{\sigma_0^2}\big[\mathbf{I}(\hat{\boldsymbol{\psi}}_{i\!-\!1},\!\mathbf{W}_i)^{-1}\big]_{l,l}$. Let $\eta \!=\! \frac{\delta}{2\sqrt{3}C_e}$, $M_i \!=\! \big[\boldsymbol{\xi}_{\tilde{n}(m)+i}\big]_l - \big[\boldsymbol{\xi}_{\tilde{n}(m)}\big]_l, l \!=\! 1, 2, 3$ and $k = {\tilde{n}(m+1) - \tilde{n}(m)}$ in Lemma \ref{le_cheb}, then from \eqref{eq_lock6-0} and \eqref{eq_lock6-1}, we can obtain
\begin{align}\label{eq_lock6}
	%P\left( \left. \underset{t\in I_m}{\sup} \left| \bar{x}(t) - x^n(t)\right| > \delta \right|   \underset{t\in I_m}{\sup} \left| \bar{x}(t) - x^m(t)\right| \le \delta, 0 \le m < n \right)
	&~ P\left( \underset{\tilde{n}(m)\le k \le \tilde{n}(m+1)}{\sup} \left|\big[\boldsymbol{\xi}_{k}\big]_l - \big[\boldsymbol{\xi}_{\tilde{n}(m)}\big]_l \right| > \frac{\delta}{2\sqrt{3}C_e} \right) \nonumber \\
	\le & ~2\exp\left\{-\frac{\delta^2}{24C_e^2\operatorname{Var}\left[\big[\boldsymbol{\xi}_{\tilde{n}(m)+i}\big]_l - \big[\boldsymbol{\xi}_{\tilde{n}(m)}\big]_l\right]}\right\} \\
	\le & ~2\exp\left\{-\frac{\delta^2|{s}|^2}{24C_{\mathbf{I}}C_e^2\big[b(\tilde{n}(m)) - b(\tilde{n}(m+1))\big]\sigma_0^2}\right\}.\nonumber
\end{align}
Combining \eqref{eq_p_invariant}, \eqref{eq_lock6-0} and \eqref{eq_lock6}, we have
%\begin{equation}
\begin{align}\label{eq_lock7}
&~P\left( \hat{x}_n \in \mathcal{I}, \forall n \ge 0 \right) \\
	\ge &~1 -  6\sum_{m \ge 0} \exp\left\{-\frac{\delta^2|{s}|^2}{24C_{\mathbf{I}}C_e^2\big[b(\tilde{n}(m)) - b(\tilde{n}(m+1))\big]\sigma_0^2}\right\}. \nonumber
%	& P\left( \underset{t\in I_m}{\sup} \left| \bar{x}(t) - x^n(t)\right| > \delta \ \text{for some}\  n \ge 0 \right) \\
%	\le & \sum_{n\ge 0}P\left( \left. \underset{t\in I_m}{\sup} \left| \bar{x}(t) - x^n(t)\right| > \delta \right|   \underset{t\in I_m}{\sup} \left| \bar{x}(t) - x^m(t)\right| \le \delta, 0 \le m < n \right) \le 2 \sum_{n\ge 0} e^{-\frac{\rho\delta_1^2}{b(\tilde{n}(m)) - b(\tilde{n}(m+1))}}.
\end{align}
%\end{equation}

To use Lemma \ref{le_sum}, we assume that the step-size $a_n$ satisfies
\begin{equation}\label{eq_lock_constr2}
b(0) = \sum_{i > 0} a_i^2 \le \frac{\delta^2|{s}|^2}{24C_{\mathbf{I}}C_e^2\sigma_0^2}.
\end{equation}
Then, from Lemma \ref{le_sum}, we can obtain
\begin{equation*}%\label{eq_lock8}
\begin{aligned}
	&~\frac{\exp\left\{-\frac{\delta^2|{s}|^2}{24C_{\mathbf{I}}C_e^2\big[b(\tilde{n}(m)) - b(\tilde{n}(m+1))\big]\sigma_0^2}\right\}}{b(\tilde{n}(m)) - b(\tilde{n}(m+1))} \\
	\le &~\frac{\exp\left\{-\frac{\delta^2|{s}|^2}{24C_{\mathbf{I}}C_e^2 b(0)\sigma_0^2}\right\} }{b(0)},
\end{aligned}
\end{equation*}
for $b(\tilde{n}(m)) - b(\tilde{n}(m+1)) < b(\tilde{n}(m)) \le b(0)$. Hence, we have
%\begin{equation*}%\label{eq_lock9}
%\begin{aligned}
%	&~\exp\left\{-\frac{\rho\delta^2}{4C_e^2\big[b(\tilde{n}(m)) - b(\tilde{n}(m+1))\big]}\right\} \\
%	= &~\big[ b(\tilde{n}(m)) - b(\tilde{n}(m+1)) \big] \\
%	&~\cdot \frac{\exp\left\{-\frac{\rho\delta^2}{4C_e^2\big[b(\tilde{n}(m)) - b(\tilde{n}(m+1))\big]}\right\}}{b(\tilde{n}(m)) - b(\tilde{n}(m+1))} \\
%	\le &~\big[ b(\tilde{n}(m)) - b(\tilde{n}(m+1)) \big] \cdot \frac{\exp\left\{-\frac{\rho\delta^2}{4C_e^2b(0)}\right\} }{b(0)},
%\end{aligned}
%\end{equation*}
%and
\begin{align}\label{eq_lock7-2}
	&\sum_{m\ge 0} \exp\left\{-\frac{\delta^2|{s}|^2}{24C_{\mathbf{I}}C_e^2\big[b(\tilde{n}(m)) - b(\tilde{n}(m+1))\big]\sigma_0^2}\right\} \\
	\le &\sum_{m \ge 0} \left[ b(\tilde{n}(m)) - b(\tilde{n}(m+1))\right] \cdot \frac{\exp\left\{-\frac{\delta^2|{s}|^2}{24C_{\mathbf{I}}C_e^2 b(0)\sigma_0^2}\right\}}{b(0)} \nonumber \\
	= &b(0) \cdot \frac{\exp\left\{-\frac{\delta^2|{s}|^2}{24C_{\mathbf{I}}C_e^2b(0)\sigma_0^2}\right\} }{b(0)} = \exp\left\{-\frac{\delta^2|{s}|^2}{24C_{\mathbf{I}}C_e^2b(0)\sigma_0^2}\right\}.\nonumber
	%e^{-\frac{3\rho\delta^2}{8 \pi^2 \alpha^2 e^{2\delta}}},
\end{align}
As $C_e = e^{L (T+a_1)}$, $b(0) = \sum_{i > 0} a_{i}^2$, and $a_n, T, L$ are given by \eqref{eq:stepsize}, \eqref{eq_T}, \eqref{eq_Lip} separately, we can obtain
\begin{equation}\label{eq_exponential}
\begin{aligned}
\frac{\delta^2|{s}|^2}{24C_{\mathbf{I}}C_e^2b(0)\sigma_0^2} = &~ \frac{\delta^2|{s}|^2}{24C_{\mathbf{I}} e^{2L (T+\frac{\alpha}{N_0+1})} \sigma_0^2\sum\limits_{i \ge 1} \frac{\alpha^2}{(i+N_0)^2}} \\
=&~\frac{\delta^2}{\sum\limits_{i \ge 1} \frac{24C_{\mathbf{I}} e^{2L (T+\frac{\alpha}{N_0+1})}}{(i+N_0)^2}} \!\cdot\!\frac{|{s}|^2}{\alpha^2 \sigma_0^2}.
\end{aligned}
\end{equation}
In \eqref{eq_exponential}, $0 < \delta < \inf_{v \in \partial \mathcal{B}} \left| v - \hat{x}_\text{b} \right|$, (\ref{eq_lock_constr}) and (\ref{eq_lock_constr2}) should be satisfied, where a sufficiently large $N_0 \ge 0$ can make both (\ref{eq_lock_constr}) and (\ref{eq_lock_constr2}) true.

To ensures that $\hat{x}_0 + a_{1}\left[\mathbf{f}\left(\hat{\boldsymbol{\psi}}_0, \boldsymbol{\psi}\right)\right]_3$ does not exceed the mainlobe $\mathcal{B}(x)$, i.e., the first step-size $a_{1}$ satisfies
\begin{align*}\left|\hat{x}_0 + a_{1}\left[\mathbf{f}\left(\hat{\boldsymbol{\psi}}_0, \boldsymbol{\psi}\right)\right]_3 - x\right| < \frac{\lambda}{Md},\end{align*}
we can obtain the maximum $\alpha$ as follows
\begin{align}%\label{eq_alpha_range}
	\alpha_{\max} = \frac{(N_0+1)\left(\frac{\lambda}{Md} - \left|x-\hat{x}_0\right|\right)}{\left|\left[\mathbf{f}\left(\hat{\boldsymbol{\psi}}_0, \boldsymbol{\psi}\right)\right]_3\right|} .\nonumber
\end{align}
Hence, from \eqref{eq_exponential}, we have
\begin{align}\label{eq_exponential-2}
\frac{\delta^2|{s}|^2}{24C_{\mathbf{I}}C_e^2b(0)\sigma_0^2} \!\cdot\! \frac{\alpha^2\sigma_0^2}{|{s}|^2}  \ge \frac{\delta^2}{\sum\limits_{i \ge 1} \frac{24C_{\mathbf{I}} e^{2L (T+\frac{\alpha_{\max}}{N_0+1})}}{(i+N_0)^2}} \overset{\Delta}{=} C.
\end{align}

Combining \eqref{eq_lock7}, \eqref{eq_lock7-2} and \eqref{eq_exponential-2}, yields
\begin{equation*}
\begin{aligned}
P\left( \hat{x}_n \in \mathcal{I}, \forall n \ge 0 \right) \ge 1 - 6e^{-\frac{C|{s}|^2}{\alpha^2\sigma_0^2}},
\end{aligned}
\end{equation*}
which completes the proof.

\section{Proof of Lemma \ref{le_gronwall}}\label{sec_proof_le_gronwall}
Apply the discrete Gronwall inequality \cite{holte2009discrete}, leading (\ref{eq_gronwall1}) to
\begin{equation}\label{eq_gronwall3}
\begin{aligned}
	\left\| \bar{\boldsymbol{\psi}}(t_{n+m}) - \tilde{\boldsymbol{\psi}}^n(t_{n+m})\right\|_2 \le C e^{L\sum_{i=1}^m a_{n+i}}.
\end{aligned}
\end{equation}
Since $1 \le m \le n_T$ and $n_T = \inf \left\{i \in \mathbb{Z}: t_{n+i} \ge t_n + T \right\} $, we get
\begin{equation}\label{eq_gronwall33}
\begin{aligned}
	\sum_{i=1}^m a_{n+i} = t_{n+m} - t_n \le T + a_{n+n_T} \le T + a_1.
\end{aligned}
\end{equation}
By combining \eqref{eq_gronwall3} and \eqref{eq_gronwall33}, we have
\begin{equation}\label{eq_gronwall4}
\begin{aligned}
	\left\| \bar{\boldsymbol{\psi}}(t_{n+m}) - \tilde{\boldsymbol{\psi}}^n(t_{n+m})\right\|_2 \le &~C e^{L (T+a_1)}.
\end{aligned}
\end{equation}

For $\forall t \in [t_{n+m-1}, t_{n+m}], 1 \le m \le n_T$, from \eqref{eq_continuous}, we have
\begin{equation*}\label{eq_gronwall5}
\begin{aligned}
	\bar{\boldsymbol{\psi}}(t) & = \bar{\boldsymbol{\psi}}(t_{n+m-1}) + \frac{(t-t_{n+m-1})\left[\bar{\boldsymbol{\psi}}(t_{n+m}) - \bar{\boldsymbol{\psi}}(t_{n+m-1})\right]}{a_{n+m}}\\
	& = \gamma \bar{\boldsymbol{\psi}}(t_{n+m-1}) + (1 - \gamma) \bar{\boldsymbol{\psi}}(t_{n+m}),
\end{aligned}
\end{equation*}
where $\gamma = \frac{t_{n+m} - t}{a_{n+m}} \in [0, 1]$. Then, we can get \eqref{eq_gronwall6}, where step $(a)$ is according to the definition of $\tilde{\boldsymbol{\psi}}^n(t)$ in \eqref{eq_ODE_new}, step $(b)$ is due to (\ref{eq_gronwall4}), step $(c)$ is obtained from (\ref{eq_CT}), and step $(d)$ is obtained by using $\gamma = \frac{t_{n+m} - t}{a_{n+m}}$.
\begin{figure*}[ht]
\begin{align}\label{eq_gronwall6}
	&~\left\| \bar{\boldsymbol{\psi}}(t) - \tilde{\boldsymbol{\psi}}^n(t)\right\|_2 \\
	= &~\left\| \gamma\left(\bar{\boldsymbol{\psi}}(t_{n+m-1}) - \tilde{\boldsymbol{\psi}}^n(t)\right) + (1 - \gamma) \left(\bar{\boldsymbol{\psi}}(t_{n+m}) - \tilde{\boldsymbol{\psi}}^n(t)\right) \right\|_2 \nonumber\\
	\overset{(a)}{=} &~\left\| \gamma \left[ \bar{\boldsymbol{\psi}}(t_{n+m-1}) - \tilde{\boldsymbol{\psi}}^n(t_{n+m-1}) - \int_{t_{n+m-1}}^t \mathbf{f}\left(\tilde{\boldsymbol{\psi}}^n(s), \boldsymbol{\psi}\right) ds \right] + (1 - \gamma) \left[ \bar{\boldsymbol{\psi}}(t_{n+m}) - \tilde{\boldsymbol{\psi}}^n(t_{n+m}) -  \int_{t_{n+m}}^t \mathbf{f}\left(\tilde{\boldsymbol{\psi}}^n(s), \boldsymbol{\psi}\right) ds \right] \right\|_2 \nonumber\\
	\le &~\gamma\left\| \int_{t_{n+m-1}}^t \mathbf{f}\left(\tilde{\boldsymbol{\psi}}^n(s), \boldsymbol{\psi}\right) ds \right\|_2 + (1 - \gamma)\left\| \int_{t_{n+m}}^t \mathbf{f}\left(\tilde{\boldsymbol{\psi}}^n(s), \boldsymbol{\psi}\right) ds \right\|_2 + \gamma\left\| \bar{\boldsymbol{\psi}}(t_{n+m-1}) - \tilde{\boldsymbol{\psi}}^n(t_{n+m-1}) \right\|_2 \nonumber \\
	&~+ (1 - \gamma)\left\| \bar{\boldsymbol{\psi}}(t_{n+m}) - \tilde{\boldsymbol{\psi}}^n(t_{n+m}) \right\|_2 \nonumber\\
	\overset{(b)}{\le} &~\gamma  \int_{t_{n+m-1}}^t \left\|\mathbf{f}\left(\tilde{\boldsymbol{\psi}}^n(s), \boldsymbol{\psi}\right) \right\|_2 ds + (1 - \gamma)\int_t^{t_{n+m}} \left\| \mathbf{f}\left(\tilde{\boldsymbol{\psi}}^n(s), \boldsymbol{\psi}\right) \right\|_2 ds + Ce^{L (T+a_1)} \nonumber\\
	\overset{(c)}{\le} &~C_{\mathbf{f}} \gamma (t - t_{n+m-1}) + C_{\mathbf{f}}(1 - \gamma)(t_{n+m} - t) + Ce^{L (T+a_1)} \nonumber\\
	\overset{(d)}{\le} &~2C_{\mathbf{f}} a_{n+m}\gamma(1-\gamma)  + Ce^{L (T+a_1)} \le \frac{C_{\mathbf{f}} a_{n+m}}{2} + Ce^{L (T+a_1)} \nonumber\\
	\le &~\underset{1 \le m \le n_T}{\sup} \frac{C_{\mathbf{f}} a_{n+m}}{2} + C e^{L (T+a_1)} = \frac{C_{\mathbf{f}} a_{n+1}}{2} + C e^{L (T+a_1)}.\nonumber
%	& \left| \bar{x}(t) - \tilde{x}^n(t)\right| \\
%	= & \left| \gamma(\bar{x}(t_{n+m-1}) - \tilde{x}^n(t)) + (1 - \gamma) (\bar{x}(t_{n+m}) - \tilde{x}^n(t)) \right| \\
%	\overset{(a)}{=} & \left| \gamma \left[ \bar{x}(t_{n+m-1}) - \tilde{x}^n(t_{n+m-1}) - \int_{t_{n+m-1}}^t f(\tilde{x}^n(s), x) ds \right] \right. \\
%	& \left. + (1 - \gamma) \left[ \bar{x}(t_{n+m}) - \tilde{x}^n(t_{n+m}) -  \int_{t_{n+m}}^t f(\tilde{x}^n(s), x) ds \right] \right| \\
%	\le & \gamma\left| \int_{t_{n+m-1}}^t f(\tilde{x}^n(s), x) ds \right| + (1 - \gamma)\left| \int_{t_{n+m}}^t f(\tilde{x}^n(s), x) ds \right| \\
%	& + \gamma\left| \bar{x}(t_{n+m-1}) - \tilde{x}^n(t_{n+m-1}) \right| \\&+ (1 - \gamma)\left| \bar{x}(t_{n+m}) - \tilde{x}^n(t_{n+m}) \right| \\
%	\overset{(b)}{\le} & \gamma  \int_{t_{n+m-1}}^t \left|f(\tilde{x}^n(s), x) \right| ds + (1 - \gamma)\int_t^{t_{n+m}} \left| f(\tilde{x}^n(s), x) \right| ds \\
%	& + Ce^{L (T+a_1)} \\
%	\overset{(c)}{\le} & \sqrt{M} \gamma (t - t_{n+m-1}) + \sqrt{M}(1 - \gamma)(t_{n+m} - t) + Ce^{L (T+a_1)} \\
%	\overset{(d)}{\le} & 2\sqrt{M} a_{n+m}\gamma(1-\gamma)  + Ce^{L (T+a_1)} \\
%	\le & \frac{\sqrt{M} a_{n+m}}{2} + Ce^{L (T+a_1)} \\
%	\le & \underset{1 \le m \le n_T}{\sup} \frac{\sqrt{M} a_{n+m}}{2} + C e^{L (T+a_1)} \\
%	= & \frac{\sqrt{M} a_{n+1}}{2} + C e^{L (T+a_1)},
\end{align}
\hrulefill
\end{figure*}

Therefore, from \eqref{eq_gronwall6}, we can obtain
\begin{equation*}
\begin{aligned}
	& \underset{t\in\left[ t_n, t_{n+n_T} \right]}{\sup} \left\| \bar{\boldsymbol{\psi}}(t) - \tilde{\boldsymbol{\psi}}^n(t)\right\|_2 	\le \frac{C_{\mathbf{f}} a_{n+1}}{2} + C e^{L (T+a_1)},
\end{aligned}
\end{equation*}
which completes the proof.

\end{document}